\documentclass[11pt]{article}
\usepackage{graphicx,amsthm,amsmath,amssymb, subfiles} %

\usepackage[normalem]{ulem}
\usepackage{enumitem}
\usepackage{amsthm}
\usepackage{thmtools}
\usepackage[margin=2.5cm]{geometry}
\usepackage{hyperref}
\usepackage{cleveref}
\usepackage{tikz}
\usetikzlibrary{
  matrix,
  calc
}
\usepackage[svgnames]{xcolor}
\usepackage{xspace}
\usepackage{mathrsfs}
\usepackage[draft]{commenting}

\usepackage{tikzit}

\definecolor{darkgreen}{RGB}{1,50,32}

\declareauthor{n}{Navid}{green!40}
\declareauthor{s}{Suhail}{violet}
\declareauthor{f}{Francesca}{red}
\declareauthor{b}{Bruno}{blue}
\declareauthor{cite}{CITE}{red!30}
\declareauthor{errbefore}{before}{red!30}
\declareauthor{errafter}{after}{DarkGreen}

\newcommand{\AC}{{\mathsf {AC}}}
\newcommand{\TC}{{\mathsf {TC}}}
\newcommand{\NC}{\mathsf{NC}}
\newcommand{\ZO}{\{0,1\}}
\newcommand{\eps}{\varepsilon}
\newcommand{\cF}{{\mathcal F}}
\newcommand{\F}{ \mathbb{F}_{2^n}}
\newcommand{\N}{ \mathbb{N}}
\newcommand{\sfD}{{\mathsf D}}

\newcommand{\sfE}{{\mathsf E}}

\newcommand{\Bal}{{\mathsf{Bal}}}
\newcommand{\Unif}{\mathsf{Unif}}

\newcommand{\poly}{\mathsf{poly}}

\newcommand{\E}{\mathop{\mathbb E}}

\newtheorem{theorem}{Theorem}
\newtheorem{lemma}[theorem]{Lemma}
\newtheorem{conjecture}[theorem]{Conjecture}
\newtheorem{corollary}[theorem]{Corollary}
\newtheorem{claim}[theorem]{Claim}

\newtheorem{question}[theorem]{Question}
\theoremstyle{definition}
\newtheorem{definition}{Definition}[section]
\newtheorem{remark}{Remark}

\DeclareMathOperator{\calC}{\mathcal{C}}
\DeclareMathOperator{\Q}{\{0, 1\}}
\DeclareMathOperator{\mcsp}{\mathsf{MCSP}}

\title{The Switching Lemma shows\\
    what the Switching Lemma cannot prove:\\
    an unconditional natural-proofs barrier %
    }
\author{Bruno Loff\thanks{LASIGE, Faculdade de Ci{\^e}ncias, Universidade de Lisboa, Portugal. Funded by the European Union (ERC, HOFGA, 101041696). Views and opinions expressed are however those of the author(s) only and do not necessarily reflect those of the European Union or the European Research Council. Neither the European Union nor the granting authority can be held responsible for them.
Also supported by FCT through the LASIGE Research Unit, ref. UID/00408/2025. Emails: \texttt{bruno.loff@gmail.com, suhail.sherif@gmail.com, fr.ugazio@gmail.com}}\\
    \and Suhail Sherif\footnotemark[1]
    \and Navid Talebanfard\thanks{University of Sheffield, Sheffield, United Kingdom. Supported by Royal Society International Exchanges Grant IES\textbackslash R3\textbackslash 243239. Email: \texttt{n.talebanfard@sheffield.ac.uk}}
    \and Francesca Ugazio\footnotemark[1]}
\date{}

\newcommand{\PTIME}{\ensuremath{\mathsf{P}}\xspace}
\newcommand{\NP}{\ensuremath{\mathsf{NP}}\xspace}

\begin{document}

\maketitle

\begin{abstract}

The \emph{natural-proofs barrier} of Razborov and Rudich (JCSS'97) begins with the striking observation that all known lower-bound proofs follow a certain pattern: when showing that a function $F$ is hard, along the way the proof provides us with a \textit{distinguisher}, namely, an efficient algorithm which,  when given a truth-table $f$ as input, can distinguish the case when $f$ is an easy function from the case when $f$ is a random function. They called such lower-bound proofs \textit{natural proofs}. They then showed that under standard cryptographic assumptions, natural proofs cannot show, e.g., superpolynomial lower-bounds against Boolean circuits.

By comparison, in the world of constant-depth circuits we do have strong lower-bounds. H{\aa}stad's celebrated Switching Lemma gives a sharp lower bound of $2^{\Omega(n^{1/(d - 1)})}$ for depth-$d$ circuits with gates of unbounded fan-in that compute the Parity function. This remains the best known lower bound against $\AC_0$ for any explicit function. Following Razborov and Rudich, it can be shown that under a suitable cryptographic assumption, natural proofs cannot prove a significantly better lower bound, such as of $2^{n^{c/d}}$ for some large constant $c$.

In this paper we revisit the natural-proofs barrier from an \emph{unconditional} perspective. We focus on $\AC_0$-natural proofs, namely natural proofs whose associated distinguisher is itself computable by an $\AC_0$ circuit. Razborov and Rudich already observed that lower bounds based on the Switching Lemma are $\AC_0$-natural. We begin by showing that this is also true for most known lower-bound techniques against constant-depth circuits.

We then establish an \textit{unconditional} barrier for such proofs. By localizing the Trevisan--Xue pseudorandom generator, we construct a pseudorandom function computable by depth-$d$ circuits of size $2^{n^{O(1/d)}}$, that unconditionally fools $\AC_0$ distinguishers of arbitrary constant depth. It follows, for example, that no $\AC_0$-natural proof can prove a lower bound
greater than $2^{n^{7/(d-5)}}$ against depth-$d$ circuits.

The result is not tight: the ideal barrier would match the $2^{\Theta(n^{1/(d-1)})}$ Switching-Lemma frontier, instead of having $7/(d-5)$ in the power of $n$. Nevertheless, it gives an unconditional natural-proofs barrier in the same qualitative regime. The proof has a striking self-referential aspect: the
proof of security of the Trevisan--Xue generator crucially relies on the Switching Lemma, and so the Switching Lemma has been used to show that $\AC_0$-natural proofs, such as the Switching Lemma itself, cannot prove $\AC_0$ lower bounds significantly better than the state of the art.

\end{abstract}

\newpage
{%
\tableofcontents
}

\newpage

\section{Introduction}

\paragraph{The frontier for constant-depth circuit lower-bounds.}

A central problem in circuit complexity is to prove lower bounds for explicit functions against constant-depth circuits. By constant-depth circuits we mean ``$\AC_0$'' circuits, i.e., circuits with unbounded-fan-in AND and OR gates, together with NOT gates. For depth-$d$ circuits, Håstad's Switching Lemma \cite{hastadswitching} gives lower bounds of size $2^{\Omega(n^{1/(d-1)})}$ for the Parity function, and this bound is tight for Parity. The challenge is to prove a stronger lower bound for some other explicit function.

Even for depth $3$, improving the known frontier is a central open problem \cite{hastad1993top,GurumukhaniPPST24}. Because of a depth reduction result by Valiant \cite{Valiant77}, a sufficiently strong depth-$3$ lower bound, namely a lower bound of size $2^{\omega(n/\log\log n)}$, would imply a super-linear lower bound for log-depth circuits. Thus progress on depth-$3$ lower bounds would have consequences beyond constant-depth complexity itself.

\paragraph{Natural proofs.}
We study this frontier through the lens of the Razborov--Rudich natural-proofs paradigm \cite{RAZBOROV199724}. Informally, a lower-bound proof, showing that a function $F$ is ``hard'', is called a \textit{natural proof}, if it identifies a property $\Phi(\cdot)$ of truth tables satisfying the following: 
\begin{enumerate}
    \item \emph{Usefulness}: $\Phi(F) = 1$ but $\Phi(f) = 0$ for all ``easy'' functions $f$.
    \item \emph{Largeness}: $\Phi(f) = 1$ for a noticeable fraction of all functions.
    \item \emph{Constructiveness}: $\Phi(f)$ can be efficiently computed when given the truth table of $f$ as input. 
\end{enumerate}

Hence, a useful, large, constructive property $\Phi$ is an efficient algorithm that distinguishes a random function from every ``easy'' function. Razborov and Rudich have shown that all circuit lower-bound proofs known at the time are natural proofs, and hence all such lower-bound proofs yield efficient distinguishers. The key observation of Razborov and Rudich was then the following: under standard cryptographic assumptions, we can construct \emph{pseudorandom functions} (PRFs), namely, distributions over easy functions whose truth tables \textit{cannot} be distinguished from random truth-tables by any efficient distinguisher. It follows that, under standard cryptographic assumptions, natural proofs cannot prove a lower bound against, say, polynomial-size circuits.

Notice here that this conclusion\,---\,better lower-bounds cannot be proven by natural proofs\,---\,only follows conditionally: it follows under a certain, unproven cryptographic assumption.

For instance, under the widely believed assumption that sufficiently secure PRFs are computable by polynomial-size log-depth ($\NC_1$) circuits, the usual natural-proofs barrier already rules out natural proofs of stronger $\AC_0$ lower bounds. This is because of a standard simulation of $\NC_1$ by depth-$d$ circuits, which implies that $\NC_1$-PRFs of depth $c \log n$ can also be computed by depth-$d$ circuits of size $2^{n^{c/(d-1)}}$. Consequently, \uline{conditional} on this cryptographic assumption, there is a constant $c$ such that no natural proof can prove depth-$d$ lower bounds above $2^{n^{c/(d-1)}}$.\footnote{Constructions of candidate PRFs in depth $(1+\epsilon)\log n$ can be found in the literature~(see for example \cite{FanL022}), but these are PRFs with a different notion of security. They rely on the distinguisher only being able to make polynomially many queries to the truth table, and our distinguishers are allowed to read the whole truth table.}  However, this does not explain the exact frontier suggested by the Switching Lemma.

\paragraph{$\AC_0$-natural proofs and an unconditional natural-proofs barrier.}

Razborov and Rudich have also observed that lower bounds against $\AC_0$ circuits have an additional feature: the associated property $\Phi$ is not merely efficiently decidable, it is actually decidable by a small $\AC_0$ circuit. We may call such proofs \emph{$\AC_0$-natural}. One of our contributions, in \Cref{sec:all-natural}, is surveying the more recent lower-bounds and showing that this remains essentially true today: most of the known lower-bound proofs against $\AC_0$ circuits are in fact $\AC_0$-natural. In particular, \emph{lower-bound proofs based on the Switching Lemma are $\AC_0$-natural}.

Thus, in the setting of $\AC_0$ lower-bounds, the distinguishers arising from $\AC_0$-natural proofs are themselves $\AC_0$ circuits. And since unconditional pseudo-random generators fooling $\AC_0$ are known (see \cite{HatamiH24} for a survey of these results), one can speculate whether there exists an unconditional version of the natural-proofs barrier.

Concretely, to rule out $\AC_0$-natural proofs beyond the known depth-$d$ frontier, one would have to construct pseudo-random functions satisfying two requirements: they should be computable by depth-$d$ circuits of size near $2^{n^{{1}/{(d-1)}}}$, and they should fool $\AC_0$ distinguishers (of size $\poly(2^n) = 2^{O(n)}$, or ideally even $2^{n^{O(1)}}$) that inspect the full truth table.\footnote{One should contrast this kind of complexity-theoretic PRF with a cryptographic PRF. In \cite{degwekar2016fine}, the authors construct a PRF computable in $\AC_0$ which fools $\AC_0$ distinguishers that are given a polynomially-long sequence of randomly-sampled input-output pairs $(x, f(x))$. Such distinguishers are exponentially weaker than the distinguishers we aim to fool: in our case the $\AC_0$ distinguisher has access to \textit{the entire truth-table}, not just a small number of samples.} This leads to the following question.

\begin{question}
\label{qu:prf}
For every $d \ge 3$ does there exist a distribution on functions each computable by depth-$d$ circuits of size $2^{O(n^{{1}/{(d-1)}})}$ that fools $\poly(2^n)$-size $\AC_0$ circuits?
\end{question}

Note that an affirmative answer implies an \uline{unconditional} natural-proofs barrier in the following sense. Take any property $\Phi$ of $n$-variate Boolean functions that rejects all functions computable by depth-$d$ circuits of size $2^{O(n^{1/(d - 1)})}$ while accepting a random function with high probability. Then $\Phi$ itself cannot be computed by an $\AC_0$ circuit of size $\poly(2^n)$. This would unconditionally explain why techniques such as the Switching Lemma cannot be used to yield stronger lower bounds than the current state of the art.

It should be noted here that Razborov and Rudich ~\cite{RAZBOROV199724} have shown an unconditional natural-proofs barrier, specifically showing that no polynomial size $\AC_0$-natural proof can prove a superpolynomial lower bound against polynomial size $\AC_0[2]$ circuits. This is based on the Nisan--Wigderson generator~\cite{NisanW1994} that is implementable as a PRF using constant depth polynomial size $\AC_0[2]$ circuits and that fools $\AC_0$ circuits. More specifically there is a \emph{fixed} depth $d$ such that for any constant depth $d'$ and polynomial (in $2^n$) size $S'$, there is a $\AC_0[2]$ PRF of depth $d$ and size polynomial in $n$ that fools all depth-$d'$ size-$S'$ $\AC_0$ circuits. It is natural to ask whether this PRF is also implementable with efficient subexponential size $\AC_0$ circuits so that it may have some bearing on our question. The answer is an unsatisfactory yes. The Nisan--Wigderson generator is implementable via depth-$d$ size-$2^{n^{O(1/d)}}$ circuits. However, the constant in the double exponent now depends on the depth $d'$ that we are trying to fool, and this gives us a ``weak'' $\AC_0$-natural-proofs barrier.

We go over a similar case in more detail: In \Cref{sec:k-wise-independence}, we will show a very similar weak natural-proofs barrier arising from a simple PRF, whose security follows from Braverman's theorem. This PRF has better parameters than the $\AC_0$ PRF we just sketched (which is based on $\AC_0[2]$ PRFs). It will be used in \Cref{sec:unconditional} as an important step in our main result, which is a stronger $\AC_0$-natural-proofs barrier.

\paragraph{Main result.}

Our first observation is that Braverman's theorem \cite{Braverman}, which says that polylogarithmic-wise independence fools $\AC_0$, already yields a simple pseudo-random function distribution supported on functions computable by circuits of depth $d$ and size $2^{n^{O(1/d)}}$. As alluded to above, the constant in the double exponent depends on the depth $d'$ of the $\AC_0$ circuits that the PRF fools. This gives us the following ``weak'' unconditional natural-proofs barrier.

\begin{theorem}[``Weak'' Natural-Proofs Barrier (informal, see \Cref{thm:weak-unconditional-barrier})]
\label{thm:intro-npb-braverman}
Let $d \geq 5$, and $\Phi : \Q^{2^n} \mapsto \Q$ be a natural property constructed in polynomial size and constant depth $d_\Phi$. Then $\Phi$ cannot show a lower bound of $2^{n^\eps}$ for depth $d$ circuits unless $\eps < (3d_\Phi+7)/(d-3)$, i.e. unless $d_\Phi$ is sufficiently large as a function of $d$, or $\eps$ is sufficiently small.
\end{theorem}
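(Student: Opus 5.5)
We prove the theorem by building a pseudorandom function distribution from $k$-wise independence and then invoking Braverman's theorem. Let $k = k(d_\Phi, S)$ be the independence parameter that Braverman's theorem \cite{Braverman} requires to $\delta$-fool depth-$d_\Phi$ circuits of size $S=\poly(2^n)$ on $N=2^n$ input bits. In the quantitative form with $k = (\log(S/\delta))^{O(d_\Phi^2)}$, or in the sharper Tal-type bound $k=(\log S)^{3d_\Phi+O(1)}$ (which matches the constant $3d_\Phi+7$ in the statement), we get $k = n^{3d_\Phi+O(1)}$, since $\log S = O(n)$ and we may take $\delta=1/4$.

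As the PRF we take a uniformly random polynomial of degree less than $k$ over $\F$: $p(x)=\sum_{i<k} a_i x^i$ with $x\in\Q^n$ identified with $\F$. We output, say, the first bit of $p(x)$. The family $\{p(x)\}_x$ is exactly $k$-wise independent and uniform, and projecting to one bit preserves this. Hence every truth table $f$ in the support is $k$-wise independent as a string of length $2^n$. By Braverman's theorem, $|\Pr[\Phi(f)=1]-\Pr[\Phi(\Unif)=1]|\le\delta$.

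Now suppose $\Phi$ is a natural property useful against depth-$d$ size-$2^{n^\eps}$ circuits. Usefulness forces $\Phi=0$ on every $f$ in the support, while largeness gives $\Pr[\Phi(\Unif)=1]\ge 1/\poly$, or a constant. This is a contradiction, unless some function in the support needs size larger than $2^{n^\eps}$. It therefore suffices to show that each $f$ has depth-$d$ circuits of size $2^{n^{(3d_\Phi+7)/(d-3)}}$, possibly up to a constant in the exponent.

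Constructing these circuits is the main obstacle. Computing $p(x)$ amounts to evaluating the powers $x^i$ for $i<k$ and taking an $\mathbb{F}_2$-linear combination of their bits; everything is linear over $\mathbb{F}_2$ except the field multiplications. I would write $x^i$ via repeated squaring, where each squaring is $\mathbb{F}_2$-linear, together with $O(\log k)=O(\log n)$ multiplications. Each field multiplication in $\F$ is computable in $\NC_1$, indeed by $O(\log n)$-depth circuits, so the whole map lies in a circuit class of fan-in-2 depth $D=c\log n\cdot\log k$, or better $O(\log n)$ if we use product-of-many-elements arithmetic in $\NC_1$. The point is that the final output is a parity of $\poly(n)\cdot k$ bits, each computed by such circuits. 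A cleaner route is to observe that each output bit is a polynomial over $\mathbb{F}_2$ of degree at most $\log_2 k$ (the Hamming weight of exponents below $k$) in the bits of $x$, with coefficients depending on the $a_i$. So $f$ is an $\mathbb{F}_2$-polynomial of degree $m=\lceil\log k\rceil=O(d_\Phi\log n)$ in $n$ variables, hence a parity of at most $\binom{n}{\le m}\le 2^{O(d_\Phi\log^2 n)}$ monomials. A parity of $M$ bits is computed by depth-$(d-1)$ circuits of size $2^{O((\log M)^{1/(d-2)}\cdot\ldots)}$; more precisely, depth $t$ suffices with size $\exp(O(\log^{1/(t-1)}\ldots))$, namely $2^{O(M^{1/(t-1)})}$. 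Plugging in the monomial count is too crude, however, since $M$ is only quasi-polynomial and parity of $M$ bits costs $2^{M^{1/(d-2)}}$. I would instead use a sparser basis: the monomials are products of $m$ variables, and we group them via the standard trick of computing $p$ through Kronecker-style decomposition. Failing that, the fallback is to accept size $2^{k^{O(1/(d-3))}}$, where $k=n^{3d_\Phi+7}$ counts the terms of the parity-of-$k$-products representation, after expanding each $x^i$ by its $\le\log k$ Frobenius factors. With three layers reserved for the AND/products and input handling, the remaining $d-3$ layers compute a parity of $\poly(k)$ bits in size $2^{O(k^{1/(d-3)})}=2^{n^{(3d_\Phi+7)/(d-3)}}$. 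This matches the exponent in the statement, so this layer-budget bookkeeping is the step I would check most carefully.
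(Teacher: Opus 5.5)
Your reduction is correct and is the same as the paper's: a uniformly random polynomial of degree $<k$ over $\mathbb{F}_{2^n}$, one output bit, $k$-wise independence, Tal's bound $k=n^{3d_\Phi+O(1)}$, and the usefulness-versus-largeness contradiction. The gap is in the step you yourself flag, and that step is the actual content of the theorem: a depth-$d$ circuit for $x\mapsto$ a bit of $p(x)$.

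Your fallback treats the output as a parity of $\poly(k)$ bits, each produced by an ``AND/products'' layer. But the $i$-th term is a bit of $a_i\prod_j x^{2^j}$, an iterated product of up to $\log k$ elements of $\mathbb{F}_{2^n}$, and it is not an AND of anything. As you noted one step earlier, such a bit is in general an $\mathbb{F}_2$-polynomial of degree about $\log k$ with $n^{\Theta(\log k)}$ monomials. None of the obvious ways of computing it fits the budget:
\begin{itemize}
\item expanding it gives a quasi-polynomial parity, of size $2^{n^{\omega(1)}}$ in any constant depth;
\item a direct CNF/DNF over the bits of $x$ costs up to $2^n$;
\item multiplying pairwise needs $\Theta(\log\log k)$ rounds of parity subcircuits, i.e.\ non-constant depth.
\end{itemize}
``Kronecker-style decomposition'' does not address this. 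Your $\mathsf{NC}_1$ route, with $D=c\log n\cdot\log k=\Theta(\log^2 n)$, gives size $2^{2^{\Theta(\log^2 n)/d}}$ after the standard depth-$d$ simulation. An $O(\log n)$-depth version would need a separate iterated-multiplication argument, and would still only yield an unspecified $O(d_\Phi)/d$.

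The missing idea is the Healy--Viola iterated-multiplication trick that the paper uses.
\begin{itemize}
\item View the $t\le 1+\log k$ factors as polynomials in $\mathbb{F}_2[z]$ of degree $<n$.
\item Evaluate each at $(n-1)t+1$ points of the small field $\mathbb{F}_{2^m}$, $m=\lceil\log(nt)\rceil$. These are parities of arity $n$.
\item Multiply pointwise. Each pointwise product (premultiplied by its interpolation coefficient) depends on only $mt=O(\log^2 n)$ bits, so it is a CNF/DNF of size $2^{mt}$.
\item Interpolate and reduce modulo the irreducible polynomial. These steps are again parities.
\end{itemize}
This gives a ``parity of $\poly(k,n)$ / small CNFs / parities of arity $n^2$'' structure. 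It is implemented in depth $c_1+c_2+2$ and size $2^{\tilde O(kn^2)^{1/c_1}+n^{2/c_2}}$.

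Your layer bookkeeping also fails. The bottom linear maps cannot be done cheaply in a fixed three layers: even parity of $n$ bits in depth $3$ costs $2^{\Omega(\sqrt n)}$, which exceeds $2^{n^\eps}$ for $\eps<1/2$. One must spend $c_2=\lceil 2/\eps\rceil$ layers there, leaving $c_1=d-c_2-2\ge d-2/\eps-3$ for the top parity. The exponent $(3d_\Phi+7)/(d-3)$ comes from combining $\eps c_1\le 3d_\Phi+5$ (since $kn^2=n^{3d_\Phi+5}$) with this bound on $c_1$. Setting $k=n^{3d_\Phi+7}$ to make the numbers match does not derive it.
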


Our formal version of the above theorem also generalizes it to properties constructed in quasipolynomial size and constant depth. Note that the above rules out a depth-2 poly-size natural property from being able to prove that a depth-$20$ circuit requires size at least $2^{n^{14/17}}$ but it does not rule out a depth-$3$ poly-size natural property from doing the same.\footnote{What it does rule out is a property like the one from the Switching Lemma, which was always a poly-size property of a \emph{fixed} depth, from simultaneously proving a $2^{n^{\omega(1/d)}}$ lower bound on depth $d$ circuits for all large enough $d$.}

This construction is nevertheless useful as a warm-up (it appears in \Cref{sec:k-wise-independence}) and the PRF it constructs is useful for our next result.
Our next result removes the dependence between $d_\Phi$ and $d$, by analyzing the PRG construction by Trevisan and Xue \cite{trevisan2013derandomized}, which is based on an earlier construction by Ajtai and Wigderson \cite{ajtai1985deterministic}.

Our main technical contribution is to ``localize'' the Trevisan--Xue PRG in constant depth.\footnote{See the discussion in the ``Connections with MCSP'' section of the Introduction for a comparison with another work that also uses a localization (but not in constant depth) of the Trevisan--Xue switching lemma.} In their PRG, the pseudorandom output string is, say, an $N$-bit string which is output by a circuit of size comparable to $N$, when given a seed of length $\approx \mathsf{polylog}(N)$. Since we need a PRF, we need the $N$-bit string which is output to be the truth-table of an $\AC_0$ circuit $C$ of depth $d$ and size $2^{n^{O(1/d)}}$. In other words, we must generate a pseudorandom circuit $C$ and then the $i$-th bit of the pseudorandom output string is $C(i)$. Indeed this can be done, and it yields a pseudo-random function distribution fooling distinguishers of arbitrary constant depth. This gives us our stronger natural-proofs barrier.

\begin{theorem}[Natural-Proofs Barrier (informal, see \Cref{thm:unconditional-natural-proofs})]
\label{thm:intro-npb}
Let $d \geq 7$, and $\Phi : \Q^{2^n} \mapsto \Q$ be a natural property constructed in polynomial size and constant depth $d_\Phi$. Then $\Phi$ cannot show a lower bound of $2^{n^\eps}$ for depth $d$ circuits, unless $\eps < 7/(d-5)$.
\end{theorem}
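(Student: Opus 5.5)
The plan is to derive the statement from a pseudorandom-function construction, in the usual Razborov--Rudich way. Concretely, I would show that for every $d \ge 7$ there is a distribution $\mathcal{D}$ over Boolean functions on $n$ variables with the following two properties. First, every function in its support is computed by a depth-$d$ circuit of size $2^{n^{7/(d-5)}}$. Second, $\mathcal{D}$ fools every $\AC_0$ circuit of size $\poly(2^n)$ and arbitrary constant depth $d_\Phi$, with error $o(1)$, when that circuit reads the full truth table.

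Given such a $\mathcal{D}$, the barrier follows at once. Suppose $\Phi$ is useful against depth-$d$ circuits of size $2^{n^\eps}$ with $\eps \ge 7/(d-5)$. Then $\Phi(f)=0$ for every $f$ in the support of $\mathcal{D}$. By largeness, $\Phi(f)=1$ with noticeable probability for a uniformly random $f$. So $\Phi$ would distinguish $\mathcal{D}$ from uniform, which contradicts pseudorandomness against depth-$d_\Phi$ $\poly(2^n)$-size circuits.

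To build $\mathcal{D}$, I would localize the Trevisan--Xue generator. Its output on $N=2^n$ bits has the following form. We pick $t$ independent pseudorandom restrictions $\rho_1,\dots,\rho_t$, each specified by a selector $\sigma_j$ and values $y_j$, and each drawn from $k$-wise independent families with $k=\mathsf{polylog}(N)$. The output bit at $x$ is $y_{j(x)}(x)$, where $j(x)$ is the first $j$ with $\sigma_j(x)=1$; a final $k$-wise independent string fills the remaining positions. Security comes from the derandomized switching lemma: each such restriction simplifies any fixed $\AC_0$ distinguisher with high probability, and a hybrid argument over $t=\mathsf{polylog}(N)$ rounds then reduces the distinguisher to a shallow decision tree. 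Braverman's theorem handles the residual part.

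For the PRF, each $k$-wise independent family is replaced by the Braverman-based PRF of \Cref{sec:k-wise-independence}, and selectors with bias $p=1/\mathsf{polylog}$ are obtained by ANDing $\log(1/p)$ such bits. Each $k$-wise independent function can be realized as a random low-degree polynomial over $\F$ evaluated at $x$. I would compute it via CNF/DNF brute force over blocks, which gives depth $d'$ and size $2^{n^{O(1/d')}}$; the exact constant has to be tracked.

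The top-level combination is then
\[
\bigvee_j \Bigl(\sigma_j(x)\wedge y_j(x)\wedge \bigwedge_{i<j}\neg\sigma_i(x)\Bigr),
\]
which costs about two extra layers. Feeding these depths into the polynomial-evaluation depth bound and optimizing should produce the exponent $7/(d-5)$. The $-5$ accounts for the layers spent on the selection logic and on the conversion of field elements.

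The main obstacle is this depth and size accounting. I have to make sure that evaluating $k$-wise independent functions with $k=\mathsf{polylog}(2^n)=\poly(n)$ in depth $d-O(1)$ costs only $2^{n^{c/(d-5)}}$, and the key point is that the seed length, and hence $k$, is independent of $d_\Phi$. That independence is exactly what separates this result from the weak barrier. I would first check it by writing polynomial evaluation over $\F$ as iterated multiplication and addition of $\poly(n)$ field elements, then simulating the corresponding $\NC_1$-type computation on $O(n)$-bit words by a depth-$(d-5)$ circuit with block brute force.
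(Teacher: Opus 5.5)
Your architecture is the paper's: the Ajtai--Wigderson framework with Trevisan--Xue restrictions, CNF-fooling strings from random low-degree polynomials over $\F$, and two extra layers for selection. But the proposal stops exactly where this statement's content lies. The bound $7/(d-5)$ is only what the accounting ``should produce'', and the route you suggest would not produce it. Block brute force on an ``$\NC_1$-type'' evaluation gives $2^{n^{c/(d-O(1))}}$ with $c$ an unspecified depth constant. A straightforward tree of field multiplications for $x^i$ has fan-in-2 depth $\Theta(\log n\log\log n)$, which block brute force cannot even turn into $2^{n^{O(1/d)}}$.

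Two specific ingredients fix the $7$.
(i) The Trevisan--Xue lemma only needs each restriction string to fool CNFs of size $2^{O(\log S_\Phi\log\log S_\Phi)}$ to error $2^{-O(\log^2 S_\Phi)}$. Tal's sharp depth-$2$ bound $k=O(\log S\log(S/\epsilon))$ then gives $k=\tilde O(\log^3 S_\Phi)=\tilde O(n^3)$; the general-depth bound would be far worse.
(ii) Following Healy--Viola, $\sum_i a_iy^i$ is computed in three layers:
\begin{itemize}
\item on top, parities of fan-in $\tilde O(kn^2)$;
\item in the middle, functions of only $O(\log^2 n)$ bits, namely pointwise products in $\mathbb{F}_{2^m}$ with $m=O(\log n)$, after a DFT;
\item at the bottom, parities of fan-in $n^2$, since the Frobenius powers $y^{2^j}$ and the DFT evaluation are $\mathbb{F}_2$-linear.
\end{itemize}
Implementing the parity layers in depths $c_1+1$ and $c_2+1$ and merging gives depth $c_1+c_2+2$ and size $2^{\tilde O(kn^2)^{1/c_1}+n^{2/c_2}}$, with $\mathsf{AND}$ output gates. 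So the $\mathsf{AND}$s forming $[u\in W]$ cost no depth, and the selection adds $2$, giving $d=c_1+c_2+4$. With $c_2=\lceil 2/\eps\rceil$, a $2^{n^\eps}$ lower bound is possible only if $5/c_1\ge\eps$. Since $c_1\ge d-2/\eps-5$, this becomes $\eps\le 7/(d-5)$. The $7$ is $5+2$ from the two parity fan-ins, and the $5$ in $d-5$ is four layers plus one from the ceiling.

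The security sketch also needs repair. In the Ajtai--Wigderson hybrid, each round's selector must \emph{by itself} collapse the depth-$d_\Phi$ distinguisher to a decision tree of depth $O(\log S_\Phi)$, with truly random values on the unselected positions. That is what allows the value strings $y_j$ to be merely $O(\log S_\Phi)$-wise independent. One Trevisan--Xue restriction switches only one layer. So each selector must be the $\mathsf{AND}$ of the $\ast$-indicators of $d_\Phi$ restrictions: one pruning restriction, then $d_\Phi-1$ switching restrictions with $q=\log\log S_\Phi+O(1)$, each from a fresh independent seed. The stage-$i$ string must fool CNFs determined by the earlier stages, so $\mathsf{AND}$ing $\log(1/p)$ bits of a single $k$-wise independent family does not support the argument.

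The hybrid over rounds does not reduce depth. It replaces truly random values by $y_j$ on the positions selected in round $j$. After $t=(1/p)\ln(N/\epsilon^*)=2^{O(d_\Phi)}(\log S_\Phi)^{d_\Phi}$ rounds, no position is left unselected except with probability $\epsilon^*$, so the residual can be filled with zeros. Invoking Braverman there would reintroduce the $d_\Phi$-dependence you want to avoid.

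Finally, $t$, the number of stages and the seed length do depend on $d_\Phi$, so your $\mathcal{D}$ should be allowed to depend on $d_\Phi$. Only $k$ and $d$ do not, and this dependence costs just a factor $2^{O(d_\Phi)}(\log S_\Phi)^{d_\Phi}$ in size.
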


Again, our formal version of the above theorem generalizes it to properties constructed in quasipolynomial size as well. Note that we can now rule out \emph{any} constant depth poly-size natural property from being able to prove that a depth-$20$ circuit require $2^{n^{7/15}}$ size. In fact, even $\log \log$-depth poly-size natural properties cannot prove it. The lower bound of $2^{n^{8/(d-1)}}$ is \uline{unconditionally} out of reach for $\AC_0$-natural properties for all $d>33$.

While we consider this to be a very interesting barrier, the theorem falls short of the ideal parameters in \Cref{qu:prf} by a constant factor in the exponent. But notice that this constant factor in the exponent is exactly the same regime that follows from the assumption that there exist sufficiently secure $\NC_1$ PRFs, and here we have proven it \uline{unconditionally}, even though we only fool $\AC_0$ distinguishers.

\paragraph{Limits of the Switching Lemma via the Switching Lemma.} 

The proof has a notable self-referential feature.  The centerpiece of the PRG of Ajtai--Wigderson is a repeated application of the Switching Lemma. The Trevisan--Xue improvement, which we do need, uses a derandomization of the Switching Lemma. And so we have used the Switching Lemma to show, for example, that the Switching Lemma cannot prove Valiant's desired lower-bound. We emphasize: the Switching Lemma, which is used to prove lower-bounds against constant-depth-circuits, has here been used to formally show that the Switching Lemma cannot prove better lower-bounds.

\paragraph{The natural-proofs barrier as a self-referential obstacle.}

This result gives a concrete instance of a broader intuition surrounding the natural-proofs barrier, that it reveals a self-referential obstacle against proving lower-bounds (see \cite{aaronson2003p,mulmuley2010explicit}). If one informally thinks of natural proofs as a proof system, the natural-proofs barrier could be written as:
\begin{align*}
\mathsf{CryptoIsPossible} \implies \mathsf{NaturalProofs} \nvdash \text{``}\PTIME \neq \NP\text{''}.
\end{align*}

Although $\mathsf{NaturalProofs}$ is not a formal proof system, this slogan captures the sense in which the barrier resembles a self-obstruction phenomenon: Under cryptographic assumptions, natural proofs cannot show that $\PTIME \neq \NP$, and so in particular they cannot prove those same cryptographic assumptions. Informally, natural proofs fail because the same computational hardness that one hopes to prove can give rise to pseudorandom objects that evade the proof technique.

In the Razborov--Rudich setting this obstruction is conditional on cryptographic assumptions. In our setting, for $\AC_0$-natural proofs against constant-depth circuits, the obstruction is unconditional, and the self-referential aspect is especially visible. The Switching Lemma is the main tool behind the known lower bounds against constant-depth circuits. Yet, through the Ajtai--Wigderson and Trevisan--Xue pseudorandom generators, the Switching Lemma also helps construct the pseudorandom functions that rule out stronger lower bounds by $\AC_0$-natural proofs. Thus the same method that proves the known lower bounds also explains why that family of methods cannot prove stronger lower bounds.

\paragraph{Related work.}
The explanatory power of the natural-proofs barrier applies to different computational models where lower-bound provers are stuck, not just Boolean circuits and formulas. For example, a natural-proofs barrier can be established also in the context of algebraic complexity \cite{grochow2017towards} (although some subtlety is needed in this context and the barrier is less certain \cite{chatterjee2020existence}) and data-structure lower-bounds in the cell-probe model \cite{korten_pitasi_impagliazzo_2025, KLMS26}. We seem to be witnessing a repeated, reliable appearance of the natural-proofs barrier along the entire lower-bounds frontier.

The primary implication of the natural-proofs barrier is that it rules out natural proofs of lower bounds against strong complexity classes that contain sufficiently secure PRFs. Thus, for example, since it is believed that $\TC_0$ contains such PRFs, we do not expect that we can prove lower bounds against this class by natural proofs. However, the barrier does not seem to address why even for weaker classes we are stuck. For example, we have failed to prove super-quadratic lower-bounds against branching programs since Ne\v ciporuk in 1966 \cite{nechiporuk1966,razborov1991lower}, or super-cubic lower-bounds against Boolean formulas \cite{S61,K72,A87,IN93,PZ93,H98,T14}, or super-linear lower-bounds against Boolean circuits (we are stuck at $5n$ \cite{lachish2001explicit,iwama2002explicit}). It is conceivable that there are PRFs precisely at these thresholds. Along these lines, Raz \cite{raz2026note} recently showed that $n\cdot (\log n)^{\omega(1)}$ (``strongly super-linear'') lower-bounds against (Boolean or algebraic) circuits computing linear functions cannot be proven with natural proofs under strong but reasonable cryptographic assumptions. Fan, Li, and Yang \cite{FanL022} considered circuit constructions of PRFs which have oracle access to the function. They used this to show that under suitable assumptions, certain black-box techniques cannot yield strong lower bounds.

\paragraph{Connections to \texorpdfstring{$\mcsp$}{MCSP}.}

For a circuit class $\calC$ the Minimum Circuits Size Problem with respect to $\calC$ ($\mcsp_{\calC}$) asks, given the truth table of a function $f$, whether $f$ can be computed by a circuit in $\calC$. When $\calC$ is the class of general circuits of size $o(2^n / n)$ we drop it from subscript. $\mcsp$ is easily shown to be NP and it is commonly conjectured to be NP-hard. Its NP-hardness is only known in restricted cases \cite{HiraharaOS18,Ilango24} or under complexity assumptions \cite{HiraharaI25}. There are also unconditional lower bounds for MCSP against various restricted circuit classes (see, e.g.,  \cite{HiraharaS17,CheraghchiKLM20}).

Observe that $\neg \mcsp_{\calC}$ is a \emph{large} and \emph{useful} property against $\calC$. Furthermore, any lower bound for $\mcsp_{\calC}$ shows the extent of non-constructiveness of this property. Natural-proofs barriers say that \emph{every} large and useful property must be non-constructive. So the direction that we propose and prove results in also proves lower bounds for $\neg \mcsp_{\calC}$.

Specifically for a size parameter $s = 2^{n^{O(1/d)}}$, consider the class $\calC$ of depth-$d$ size-$s$ $\AC_0$ circuits. An implication of our results is that $\mcsp_{\calC}$ is not in poly-size $\AC_0$. We note the two most closely related results we found in the literature. Cheraghchi et al.~\cite{CheraghchiKLM20} study the $\AC_0$ constructiveness of $\mcsp$ and prove a much stronger lower bound, but this has no direct implication on the $\AC_0$ constructiveness of $\mcsp_{\calC}$. Ilango~\cite{Ilango24} studies $\mcsp_{\calC}$ where $\calC$ is $\AC_0$ formulae, showing that it is NP-hard under quasipolynomial-time randomized reductions, but says nothing of the circuit variant that our paper focuses on.

The paper by Cheraghchi et al.~\cite{CheraghchiKLM20} is also relevant because their result about the $\AC_0$ constructiveness of $\mcsp$ lower bounds uses localizations of PRGs, but in an entirely different manner. Their idea is to restrict the input truth table of MCSP. With the restriction being easy to compute \emph{locally by a small circuit (with no depth restrictions)} it follows that there is a small circuit whose truth table agrees with said restriction. However the number of truth tables that agree with the restriction is too large for them all to be easy, so the restriction cannot make MCSP constant. Since such restrictions exist that make any $\AC_0$ circuit constant, their lower bound on MCSP follows. Interestingly they use Trevisan--Xue's derandomization of the switching lemma to generate such restrictions locally.

\section{Preliminaries}

\subsection{Notation}

We start with some basic notation. We denote the set of all Boolean functions in $n$ variables by $\cF(n)$, and the set of all Boolean functions in $n$ variables computable by depth-$d$ circuits of size $S$ by $\cF({n, d, S})$.

\begin{definition}[Partial Assignment]
    A partial assignment on $N$ bits is represented as $X \in \{0,1,\ast\}^N$. Indices $\{u : X_u = \ast\}$ are the free indices of $X$ and the rest are the fixed indices.
\end{definition}

Partial assignments are closely related to restrictions.

\begin{definition}[Restriction]
    A restriction is represented as $\rho : [N] \to \{0,1,\ast\}$. The free indices of $\rho$ is the set $\rho^{-1}(\ast)$ and the rest are the fixed indices. We also represent a restriction as $I \mapsto \alpha$ where $I$ is the set of fixed indices of the restriction and $\alpha \in \ZO^{|I|}$ are the values set in the fixed indices. A restriction can act on various objects:
    \begin{itemize}
        \item Composition of two restrictions on $N$ bits: Applying restrictions $\rho_0$ and then $\rho_1$ is equivalent to applying their composed restriction $\rho_0\rho_1$ where $$\rho_0\rho_1(u) = \begin{cases}
            \rho_0(u) & \text{if }\rho_0(u) \in \ZO \\
            \rho_1(u) & \text{otherwise.}
        \end{cases}$$

        \item Given a partial assignment $X \in \{0,1,\ast\}^N$, applying $\rho$ to $X$ would produce the partial assignment $X'$ where
        $$X'_u = \begin{cases}
            X_u & \text{if }X_u \in \ZO \\
            \rho(u) & \text{otherwise.}
        \end{cases}$$
        Note that $\rho_0\rho_1$ applied to $X$ is the same as applying $\rho_0$ to $X$ and then applying $\rho_1$ to the result.
        
        \item For a function $\Phi : \ZO^N \to \ZO$, restricting with $\rho := I \mapsto \alpha$ yields the restricted function $\Phi|_{\rho} : \ZO^{N - |I|} \to \ZO$ defined as $\Phi|_{\rho}(X) = \Phi(X')$ where $X'$ is equal to $\alpha$ in indices $I$ and equal to $X$ on the other indices.
    \end{itemize}
\end{definition}

\subsection{What is a natural proof?}

The notion of a \textit{natural proof} is not a formal notion: there is no rigorous definition, and this is for good reason (as we shall see). Instead, the formal notion is that of a natural property.

\begin{definition}[$\AC_0$-natural properties] Let $n,d,S,d_\Phi,S_\Phi \in \N$ and $p \in (0,1)$.
An $(n,d,S,d_\Phi,S_\Phi,p)$-natural property is a predicate $\Phi:\ZO^{2^n}\to\ZO$,\footnote{It should be understood that $d_\Phi$ and $S_\Phi$ are not somehow indexed by $\Phi$, it is just convenient to use the subscript $\Phi$ to reinforce that $d_\Phi$ and $S_\Phi$ are the depth and sizes of the distinguisher, as opposed to the depth $d$ and size $S$ of the function whose complexity is being lower-bounded.} such that:
\begin{itemize}
    \item (Usefulness) If $f \in \cF({n,d,S})$, then $\Phi(f) = 0$. I.e. the property rejects all truth tables of easy functions.
     \item (Largeness) If $f \in \cF(n)$ is chosen uniformly at random, then $\Pr_f[\Phi(f) = 1] \geq p$. I.e., the property is true of a random function with ``high'' probability $\ge p$.
    \item (Constructivity) $\Phi \in \cF({2^n,d_\Phi,S_\Phi})$. I.e., the property is itself easy.
\end{itemize}
If we omit $p$, then we are considering that $p = \Omega(1)$. In fact we will have $p$ close to $1$ for all natural proofs we will consider.
\end{definition}

Let us now argue why the definition of a natural proof should actually be an informal definition. Note that if we have an $(n,d,S,d_\Phi,S_\Phi)$-natural property $\Phi$ such that $\Phi(f) = 1$, this necessarily implies that the function $f$ does not have circuits of depth $d$ and size $S$. I.e., $\Phi(f) = 1$ gives us a lower-bound on the complexity of $f$, which follows from \textit{usefulness}. One might then be tempted to define a natural proof as a proof that explicitly constructs a natural property $\Phi$ and shows that $\Phi(f) = 1$. However, the reality is that most lower-bounds in the literature explicitly use specific properties of the function $f$, which are not true of a random function, or properties which appear to be hard to compute efficiently when given the truth-table of $f$. It would then seem that such lower-bounds are not \textit{natural proofs}, because the lower-bound-implying property does not obey largeness, or constructivity, hence it is not natural. 

The very surprising fact is that, whenever one finds such a seemingly-unnatural proof in the literature, then systematically, time and again, after a bit of thinking, one concludes that the proof can be changed only slightly, to extract from it a similar argument which does not actually require any of the specific properties of $f$ that were used in the original argument, and which indeed also applies to a random function. And we will find that this extracted property is very simple to compute when given $f$ as input. One has then obtained a natural property from the original lower-bound proof.

This process of extraction of a natural property is called \textit{naturalizing} a lower-bound proof. We may then think of it as follows: the proof did not \textit{seem} natural, but deep down it actually was a natural proof. So in practice we call any \textit{naturalizable} lower-bound proof a \textit{natural proof}. Of course, such a concept of natural proof is necessarily imprecise: there is no way to formally specify or even predict what will be necessary to do in order to naturalize a given lower-bound proof. Nonetheless, if a natural-proofs barrier exists, no lower-bound proof can exist which can be naturalized, no matter how one might try to do it. So this imprecise concept is somehow the right concept. We must therefore settle for an informal definition:

\begin{definition}[informal]\label{def:natural-proof} If we have a proof that a function $f:\ZO^n\to\ZO$ cannot be computed by depth-$d$, size-$S$ circuits, and from this proof we can somehow \emph{extract} an $(n,d,S,d_\Phi,S_\Phi,p)$-natural property $\Phi$, where $p$ is reasonably large (e.g., $p = \Omega(1)$), $d_\Phi$ is a constant and $S_\Phi$ is not too large (e.g. $S_\Phi \le |f|^{O(1)} = 2^{O(n)}$, or maybe $S \le 2^{(\log |f|)^{O(1)}} = 2^{n^{O(1)}}$), and the proof can be adapted to show that $\Phi(f) = 1$, then we call this proof an \textit{$\AC_0$-natural proof}.
\end{definition}

\subsection{Pseudo-random functions (PRFs)}

\begin{definition}
    An \emph{$(n,d,S)$-function generator} is a distribution $\mu$ over  $\cF(n,d,S)$. In this paper our function generators will additionally be ``uniform''. That is, there is a single depth-$d$ size-$S$ circuit which has $n$ input wires and an additional number of seed wires with the distribution $\mu$ being generated by sampling the values of the seed wires uniformly at random and outputting the function (from $\{0,1\}^n$ to $\{0,1\}$) computed by the remaining circuit.
\end{definition}

\begin{definition}[$\epsilon$-fooling]
    Let us consider a function generator $\mu$ and let $\mathcal{A}$ be a class of algorithms, which we call \textit{distinguishers}. We say that $\mu$ \emph{$\epsilon$-fools} $\mathcal{A}$ iff:
    $$\forall A \in \mathcal{A}: \left\lvert\Pr_{f\sim\mu  }[A(f)] - \Pr_{f\sim \Unif }[A(f)]\right\rvert < \epsilon,$$
    where $\Unif$ denotes the uniform distribution.
\end{definition}

\begin{definition}
    An \emph{$(n,d,S,d_\Phi,S_\Phi,\epsilon)$-pseudorandom function generator} is an $(n,d,S)$-function generator that $\epsilon$-fools all distinguishers in $\cF(2^n,d_\Phi,S_\Phi)$.
\end{definition}

As noted in the introduction, the existence of PRFs in a circuit class precludes the existence of natural proofs against said circuit class. Here we state the instantiation of this logic relevant to our paper.

\begin{lemma}\label{lem:prfcorollary}
    Let $n,d,S,d_{\Phi},S_{\Phi}\in \N$ and $\epsilon \in (0,1)$. If there is an $(n,d,S,d_\Phi,S_\Phi,\epsilon)$-pseudorandom function generator, then there is no $(n,d,S,d_\Phi,S_\Phi,\epsilon)$-natural property.
\end{lemma}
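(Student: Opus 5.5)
The plan is a direct argument by contradiction, turning the natural property itself into a distinguisher. Suppose $\mu$ is an $(n,d,S,d_\Phi,S_\Phi,\epsilon)$-pseudorandom function generator. Suppose also, towards a contradiction, that $\Phi$ is an $(n,d,S,d_\Phi,S_\Phi,\epsilon)$-natural property, so its largeness parameter is $p=\epsilon$. I will use the three defining conditions of $\Phi$ one at a time.

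First, by constructivity, $\Phi \in \cF(2^n,d_\Phi,S_\Phi)$. Hence $\Phi$ is one of the distinguishers that $\mu$ is required to $\epsilon$-fool, and the definition of $\epsilon$-fooling gives
\[
\left\lvert\Pr_{f\sim\mu}[\Phi(f)=1] - \Pr_{f\sim\Unif}[\Phi(f)=1]\right\rvert < \epsilon .
\]

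Second, I evaluate each term separately. Since $\mu$ is an $(n,d,S)$-function generator, it is supported on $\cF(n,d,S)$. By usefulness, $\Phi(f)=0$ for every $f$ in that set, so $\Pr_{f\sim\mu}[\Phi(f)=1]=0$. By largeness with $p=\epsilon$, $\Pr_{f\sim\Unif}[\Phi(f)=1]\ge\epsilon$.

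Third, combining these, the left-hand side of the display is at least $\epsilon$, which contradicts the strict inequality. So no such $\Phi$ exists.

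There is no genuine obstacle here; the proof is purely definitional. The only point needing care is bookkeeping. The largeness threshold $p$ must coincide with the fooling parameter $\epsilon$, as the lemma's indexing indicates. The strict inequality in the definition of $\epsilon$-fooling is exactly what makes the case $\Pr_{\Unif}[\Phi(f)=1]=\epsilon$ contradictory.
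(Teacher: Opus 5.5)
Your proof is correct and is essentially the same argument as the paper's. By constructivity the natural property is itself an admissible distinguisher. Usefulness gives acceptance probability $0$ on the generator, and largeness gives at least $\epsilon$ on uniform, which contradicts $\epsilon$-fooling. Your remark that the strict inequality in the definition of $\epsilon$-fooling is what rules out the boundary case is a precision the paper leaves implicit.
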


\begin{proof}
    We will prove this by contradiction. Let $\Phi$ be an $(n,d,S,d_\Phi,S_\Phi,p)$-natural property and let $f_1,f_2 : \ZO^n \to \ZO$ be functions sampled from the PRF and from the uniform distribution respectively. Since $f_1 \in \cF(n,d,S)$, from the usefulness of $\Phi$ we know $\Phi(f_1)=0$ with probability $1$. On the other hand from the largeness of $\Phi$ we know that with probability at least $\epsilon$, $\Phi(f_2)=1$. Hence $\Phi$ is a distinguisher, and by constructivity it is in $\cF(2^n,d_{\Phi},S_{\Phi})$. This contradicts the definition of the PRF.
\end{proof}

We will use the following folklore proposition about PRFs, whose proof we include in Appendix \ref{sec:deferredproofs} for completeness.

\begin{restatable}{proposition}{fooling}
\label{prop:foolingmanipulations}
    Let $\mu_1$ be a distribution on $N_1$ bit strings that $\epsilon_1$-fools a class of algorithms $\mathcal{A}$ that is closed under restrictions and padding. Similarly define $\mu_2$ with parameters $N_2$ and $\epsilon_2$. Then we have the following observations.
    \begin{enumerate}
        \item For any $I \subseteq [N_1]$, the marginal of $\mu_1$ to the indices $I$ will also $\epsilon$-fool $\mathcal{A}$.
        \item The distribution $\mu_1 \times \mu_2$ will also $(\epsilon_1+\epsilon_2)$-fool $\mathcal{A}$. (In particular $\mu_1 \times \Unif_{N_2}$ continues to $\epsilon_1$-fool $\mathcal{A}$.)
    \end{enumerate}
\end{restatable}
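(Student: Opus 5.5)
Both parts reduce, via the closure assumptions, to the fooling guarantee of $\mu_1$ or $\mu_2$ against a suitably modified distinguisher. We read ``closed under restrictions'' as: if $A\in\mathcal{A}$ and $\rho$ is a restriction, then $A|_\rho\in\mathcal{A}$. We read ``closed under padding'' as: if $A\in\mathcal{A}$ acts on $N$ bits, then the function on $N'\ge N$ bits that ignores the extra coordinates (or, after relabelling, any of them) is also in $\mathcal{A}$.

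\emph{Part 1 (marginals).} Fix $A\in\mathcal{A}$ acting on $|I|$ bits. Let $A'$ act on $N_1$ bits by reading only the coordinates in $I$ and ignoring the rest; this is a padding of $A$, so $A'\in\mathcal{A}$. For $X\sim\mu_1$, the output $A'(X)$ has the same distribution as $A(X|_I)$, where $X|_I$ is distributed as the marginal of $\mu_1$ on $I$. For $X$ uniform on $N_1$ bits, $X|_I$ is uniform on $|I|$ bits. Hence the distinguishing advantage of $A$ on the marginal equals that of $A'$ on $\mu_1$, which is less than $\epsilon_1$.

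\emph{Part 2 (products).} We use a standard two-step hybrid argument. Fix $A\in\mathcal{A}$ on $N_1+N_2$ bits. We compare three distributions: $H_0=\mu_1\times\mu_2$, $H_1=\Unif_{N_1}\times\mu_2$, and $H_2=\Unif_{N_1+N_2}$. By the triangle inequality it suffices to bound $|\Pr_{H_0}[A]-\Pr_{H_1}[A]|<\epsilon_1$ and $|\Pr_{H_1}[A]-\Pr_{H_2}[A]|<\epsilon_2$.

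For the first bound, condition on the second block $Y\sim\mu_2$. For each fixed $y$, the function $A(\cdot,y)$ is the restriction of $A$ fixing the last $N_2$ coordinates to $y$. It lies in $\mathcal{A}$, so $|\Pr_{X\sim\mu_1}[A(X,y)]-\Pr_{X\sim\Unif}[A(X,y)]|<\epsilon_1$. Averaging over $y$ preserves the strict inequality, because it is an average of quantities each of absolute value less than $\epsilon_1$.

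The second bound is symmetric: condition on $X$ uniform and fix the first block. The parenthetical claim is the case where $\mu_2$ is uniform, so that $\epsilon_2=0$ and $H_1=H_2$.

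The main obstacle is interpretive rather than technical. ``Closed under padding'' must be taken to mean that adding dummy input coordinates in arbitrary positions keeps $A$ in the class. For the intended classes $\cF(N,d_\Phi,S_\Phi)$ this holds trivially, since unused input wires cost nothing in size or depth. Likewise, restricting inputs to constants never increases depth or size. So in the write-up I will check both closure properties for the relevant circuit classes.
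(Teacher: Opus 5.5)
Your proposal is correct and follows the paper's proof. Part 1 is the same padding argument: a distinguisher for the marginal is used as-is on $\mu_1$. Part 2 is the same two-step hybrid, hardcoding one block via closure under restrictions; the only cosmetic difference is that you pass through $\Unif_{N_1}\times\mu_2$ where the paper passes through $\mu_1\times\Unif_{N_2}$.
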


\subsection{Multiplicative Approximate Counting}

We will need the following result, which is a consequence of a result of Ajtai and Ben-Or \cite{ajtai1984theorem},
to show that many known lower bound proofs in fact give $\AC_0$-natural properties. We provide a proof of it in Appendix \ref{sec:deferredproofs}.

\begin{restatable}{theorem}{approxcount}
\label{thm:approximate-counting}
    There exists a function $\mathsf{GapCount}^{\times}_{n,t,\eps} : \ZO^n \to \ZO$ satisfying
    \[
    \mathsf{GapCount}^{\times}_{n,t,\eps}(x) = \begin{cases}
        0 & \text{if } |x| \le t,\\
        1 & \text{if } |x| \ge \left(1 + \eps\right) \cdot t
    \end{cases}
    \] that is computable by a circuit of depth $O(1+\log (1/\eps)/\log \log n)$ and size polynomial in $n$.
\end{restatable}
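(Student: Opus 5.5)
The plan is to build $\mathsf{GapCount}^{\times}_{n,t,\eps}$ from the Ajtai--Ben-Or result, which we use in the following form: for every constant $c$ there is a polynomial-size, constant-depth circuit family that, given $x\in\ZO^n$, outputs $0$ when $|x|\le t$ and $1$ when $|x|\ge (1+1/(\log n)^c)\,t$. Equivalently, it approximates $|x|$ within a factor $1\pm 1/\mathrm{polylog}(n)$. We take this as the base case, giving depth $O(1)$ for $\eps \ge 1/(\log n)^{c}$. If needed, we first argue the standard route: random restrictions and hashing, namely pairwise-independent sampling of $x$ at rate $p$ for geometrically spaced $p$, followed by majority-of-polylog amplification, computed by $\AC_0$ approximate majority. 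The randomness is then fixed by a union bound over all $2^n$ inputs, which is legitimate because we only need nonuniform circuits. The ``$O(1)$'' term in the depth comes from this part.

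To reach arbitrary $\eps$, we boost the accuracy with a gap-amplification step in which each level improves the relative gap by a factor of about $\log n$ at constant extra depth. This gives the depth $O(\log(1/\eps)/\log\log n)$, i.e.\ $k=\lceil\log(1/\eps)/\log\log n\rceil$ levels. The concrete idea is to trade resolution for blocks. Partition $[n]$ into $m$ blocks, or use polynomially many random subsets of a fixed density. Within each subset, an approximate threshold at coarse accuracy $1/\log n$ detects whether the subset's count exceeds a chosen threshold. Counting the number of subsets that pass is then again an approximate-count problem, and its answer is a sensitive function of $|x|$. Choosing the subset density so that the passing probability sits on the steep part of a binomial tail converts a relative change $\eps$ in $|x|$ into a relative change of roughly $\eps\log n$ in the passing fraction. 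For example, with subsets of size $s$ and threshold near the mean, a relative shift $\eps$ in density moves the mean by $\eps\,\cdot$mean, which is about $\eps\sqrt{\mathrm{mean}}$ standard deviations. With mean $\approx \log^2 n$, that gives the factor $\log n$. Concentration over polynomially many independent subsets, again fixed nonuniformly, keeps the passing fraction sharp to within the coarse accuracy. Iterating $k$ times turns an $\eps$-gap into a $1/\mathrm{polylog}$-gap, where the base case applies. Each level composes constant depth with polynomial size, so the total size is $\poly(n)$ for constant depth. We keep the size polynomial by noting that at each level the number of new subsets is polynomial and the levels are composed, not multiplied out exponentially.

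The main obstacle is the amplification step. We must show that a single constant-depth layer magnifies the relative gap by a $\Theta(\log n)$ factor without the error of the inner approximate thresholds swamping it, and we must handle all values of $t$ uniformly. Small $t$ (say $t\le \mathrm{polylog}\,n$) is handled separately and exactly: the threshold function $|x|\ge t'$ for polylogarithmic $t'$ is in polynomial-size $\AC_0$, with depth $O(\log t'/\log\log n)$ via standard constructions. For large $t$ we first subsample to bring the expected count down to $\mathrm{polylog}$ scale, and then apply the binomial-tail magnification. If this direct magnification proves messy, we would instead cite the refined form of Ajtai's theorem, which states that depth $O(\log(1/\eps)/\log\log n)$ suffices for $(1\pm\eps)$-approximate counting, and simply read off $\mathsf{GapCount}^{\times}$ from it.
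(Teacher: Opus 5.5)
The step that carries all the weight, one constant-depth layer that magnifies the gap by $\Theta(\log n)$, is not established, and as you specify it, it fails. You test each subset with an \emph{approximate} threshold of accuracy $1/\log n$ around a count of mean $\mu\approx\log^2 n$. Such a circuit is only promised to output $0$ at counts $\le\tau$ and $1$ at counts $\ge(1+1/\log n)\tau\approx\tau+\log n$. The window in between is about one standard deviation ($\sqrt{\mu}=\log n$) wide and carries constant probability mass. On it the circuit may behave arbitrarily, depending on which bits of $x_S$ are set; it need not be symmetric, let alone monotone in the count. The effect you want to detect is only a $\Theta(\eps\log n)$ change in passing probability, so this window swamps it. The only guaranteed bounds are $[\,|x_S|\ge\tau+\log n\,]\le T(x_S)\le[\,|x_S|>\tau\,]$. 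These separate $|x|\le t$ from $|x|\ge(1+\eps)t$ only if the mean shifts by about a standard deviation, i.e.\ roughly when $\eps\gtrsim 1/\log n$, which is exactly when no amplification is needed. You name this as ``the main obstacle'' but do not resolve it. Your fallback, citing a refined Ajtai theorem with depth $O(\log(1/\eps)/\log\log n)$, is citing the statement to be proved.

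The repair is to make the inner tests exact. The test $[\,|x_S|\ge\theta\,]$ with $\theta\approx\mu=\log^2 n$ is in polynomial-size constant depth by the polylog-threshold fact you already invoke for small $t$. From the second level on, each subset has only $O(\log^2 n)$ bits, so even exact majority costs $2^{\tilde O((\log^2 n)^{1/(d-1)})}=\poly(n)$ in constant depth. Sampling coordinates with replacement, the passing probability becomes the explicit monotone tail $P(p)=\Pr[\mathrm{Bin}(s,p)\ge\theta]$, and $P((1+\eps)p)-P(p)=\Omega(\eps\sqrt{\mu})$ while $\eps\sqrt{\mu}\le 1$. Take $M=O(n/(\eps\log n)^2)$ subsets and a union bound over the at most $2^n$ possible inputs to each level. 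Then the passing fraction separates by $\Omega(\eps\log n)$ around a constant density. Iterating $O(\log(1/\eps)/\log\log n)$ times and finishing with a constant-gap approximate majority gives the theorem.

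This also makes your multiplicative base case unnecessary: the first level already converts the multiplicative gap at $t$ into an additive gap at constant density. That conversion is exactly the paper's key step. It pads the input and takes random ORs of width $\approx n'/t$, which is your construction with $\mu=\theta=1$: the zero-probability is $\approx e^{-|x|/t}$, giving an additive gap $\Theta(\eps)$ with no amplification. The paper then applies the additive Ajtai--Ben-Or/Goldreich bound, which computes exact thresholds on random samples of $O(\log n/\eps^2)$ bits and pays depth about $2\log(1/\eps)/\log\log n$ to stay polynomial-size. So your repaired route spreads the depth over many cheap amplification rounds, while the paper pays it once in a single exact threshold on $O(\log n/\eps^2)$ bits.
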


\section{Most lower-bounds against \texorpdfstring{$\AC_0$}{AC0} are \texorpdfstring{$\AC_0$}{AC0}-natural}\label{sec:all-natural}

In this section we show that almost all known lower-bound techniques against constant-depth $\AC$-circuits are $\AC_0$-natural. We will cover lower-bounds based on the Switching Lemma (\Cref{sec:switching-lemma-is-natural}), lower-bounds bounds based on sensitivity and concentration of the Fourier spectrum (\Cref{sec:sensitivity-is-natural}), and top-down lower-bounds (\Cref{sec:HJP,sec:PPZ,sec:GRSS}). Among these some of the top-down lower-bounds are not $\AC_0$-natural, but for all except a couple of them we were able to find thematically similar $\AC_0$-natural proofs proving nearly the same lower bounds. It could be that they do, in fact, naturalize, and we simply did not figure out how to do it: we leave naturalizing these proofs as an \uline{open problem}.

\subsection{Bottom-up techniques}

These techniques generally analyze circuits in a bottom-up fashion, that is by starting from the input level and working the way up to the output gate.

\subsubsection{Proofs that use the Switching Lemma}\label{sec:switching-lemma-is-natural}

The first lower-bounds against $\AC_0$ circuits were proven by Ajtai \cite{Ajtai83} and Furst-Saxe-Sipser \cite{FurstSS84}, who showed that a function $f$ which is computed by a small CNF, becomes expressible by a small DNF with high probability over a random subcube $\rho$ of reasonably large dimension, so the order of the AND/OR can be switched. It follows from this Switching Lemma that any function $f$ computable by a small depth $d$ circuit can also be computed by a depth $d-1$ circuit inside the subcube. The parameters for this lemma were later improved by Yao \cite{yao1985separating}, and ultimately H{\aa}stad \cite{Hastad86}. From H{\aa}stad's Switching Lemma we can derive the following corollary:

\begin{corollary}
    If $f \in \cF({n,d,S})$, and $\rho$ is a $p$-random restriction with $p = \frac{1}{O(\log S)^{d-2}}$, then
    \[
    \Pr[f|_\rho\text{ can be computed by a $k$-CNF}] \ge 1 - o(1),
    \]
    where $k= 2 \log S$. In particular, $f$ has a monochromatic subcube of dimension $\Omega(p \cdot n) - k$. If $S< 2^{\Omega\left(n^{\frac{1}{d-1}}\right)}$, this dimension is at least $2 \log n$. 
\end{corollary}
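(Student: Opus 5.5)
The plan is to apply H{\aa}stad's Switching Lemma repeatedly, bottom-up, in its standard form: if $g$ is a $k$-CNF (or $k$-DNF) and $\rho$ is a $q$-random restriction, then
\[
\Pr\bigl[g|_\rho \text{ is not a } k'\text{-DNF (resp.\ }k'\text{-CNF)}\bigr] \le (5qk)^{k'} .
\]
Fix a depth-$d$ circuit of size $S$ for $f$. Write $k = 2\log S$, which we may take to be $C\log S$ for a suitable constant $C$.

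\textbf{Step 1 (bottom fan-in).} First apply a $\Theta(1)$-random restriction, say $q_0 = 1/10$. Any bottom gate of fan-in greater than $k$ survives unfixed, and not forced to a constant, with probability at most $(1-q_0/2)^{k} \le S^{-2}$. A union bound over the at most $S$ bottom gates then leaves every bottom gate of fan-in at most $k$, except with probability $1/S$.

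\textbf{Step 2 (collapsing layers).} Repeat $d-2$ times with $q = 1/(10k) = \Theta(1/\log S)$. In each round, every depth-2 subcircuit sitting at the bottom is a $k$-CNF or $k$-DNF. By the Switching Lemma it becomes a $k$-DNF or $k$-CNF except with probability $2^{-k} = S^{-2}$. A union bound over the at most $S$ gates at that level shows the round fails with probability at most $1/S$. After switching, the new bottom layer merges with the gate above it, so the depth drops by one while the bottom fan-in stays at most $k$.

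After Step 1 and the $d-2$ rounds, $f|_\rho$ has depth 2 with bottom fan-in at most $k$. The composed restriction is a $p$-random restriction with $p = q_0 q^{d-2} = 1/O(\log S)^{d-2}$. If the final switch is arranged so that the last layer yields a CNF (applying the lemma once more to flip a DNF if needed), then $f|_\rho$ is a $k$-CNF with probability at least $1 - (d-1)/S = 1 - o(1)$. This is the first claim.

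\textbf{Step 3 (monochromatic subcube).} By a Chernoff bound, $\rho$ leaves at least $pn/2$ variables free with probability $1-o(1)$. So some $\rho$ achieves both events at once. Fix such a $\rho$. If the resulting $k$-CNF is constant, we are done. Otherwise, pick one clause and set its at most $k$ variables so that the clause is falsified; this makes the function identically $0$ on the remaining free variables. This gives a monochromatic subcube of dimension at least $pn/2 - k = \Omega(pn) - k$.

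\textbf{Step 4 (parameters).} For the final clause, take $S < 2^{c n^{1/(d-1)}}$ with $c$ small. Then $\log S < c n^{1/(d-1)}$, so
\[
pn \ge \frac{n}{(O(c))^{d-2}\, n^{(d-2)/(d-1)}} = \frac{n^{1/(d-1)}}{O(c)^{d-2}} ,
\]
which dominates $k = 2c\,n^{1/(d-1)}$ by a large factor once $c$ is small. The dimension is therefore $\Omega(n^{1/(d-1)}) \ge 2\log n$ for large $n$.

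\textbf{Main obstacle.} The step needing most care is the bookkeeping in Step 2. The failure probability per gate must be at most $S^{-2}$ so that the union bound over $S$ gates and $d$ layers still leaves $1 - o(1)$. This is why the fan-in is chosen as $k \approx 2\log S$ and $q \approx 1/(10k)$. A second point to handle is that the first restriction must shrink the fan-in rather than switch.
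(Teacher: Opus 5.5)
Your plan is the standard argument the paper has in mind: a constant-rate restriction to kill wide bottom gates, then $d-2$ rounds of H{\aa}stad's lemma at rate $1/(10k)$, then fixing one clause. The paper gives no proof beyond citing H{\aa}stad, and its derandomized Claims~\ref{clm:pruning} and~\ref{clm:switching} use the same two-stage pattern.

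\textbf{The CNF-flipping step fails.} The step ``applying the lemma once more to flip a DNF if needed'' needs another independent $\Theta(1/\log S)$-restriction. The composed restriction then becomes $1/O(\log S)^{d-1}$-random, not $1/O(\log S)^{d-2}$-random. Step~4 breaks as a result: for $\log S$ near $c\,n^{1/(d-1)}$ you would get only $pn=\Theta(1)$ free variables, far below $k$.

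The flip cannot be avoided either, because at the stated $p$ an OR-topped circuit need not restrict to a $k$-CNF. Already for $d=2$, take a read-once DNF with $S$ terms of width $\log S$. Under a constant-rate restriction it keeps, with high probability, about $S^{\Omega(1)}$ live disjoint terms and no satisfied one. Every clause of a CNF for it must then touch each live term, so its CNF width is $S^{\Omega(1)}\gg 2\log S$.

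So the first claim should be read as ``$k$-CNF or $k$-DNF, according to the top gate.'' Alternatively, for the subcube consequence, assume the top gate is $\mathsf{AND}$ by passing to $\neg f$, which has the same monochromatic subcubes. Nothing downstream needs more than this. If $f|_\rho$ is a non-constant $k$-DNF, Step~3 fixes the at most $k$ variables of one term so as to satisfy it, making the function identically~$1$. The bound $pn/2-k$ and Step~4 then go through unchanged with $p=q_0q^{d-2}$.

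\textbf{A smaller slip in Step~1.} A fixed variable avoids the killing value with probability $(1+q_0)/2$, not $1-q_0/2$. With your expression, $0.95^{2\log S}\approx S^{-0.15}$ per gate, and the union bound over $S$ gates fails outright. With the correct one you get $0.55^{2\log S}\approx S^{-1.72}$ per gate. That is not $S^{-2}$, but it still gives $S^{-0.72}=o(1)$ after the union bound, which suffices.

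\textbf{The Chernoff step in Step~3.} It yields $1-o(1)$ only when $pn\to\infty$. When $pn/2\le k$, the claimed dimension is nonpositive and there is nothing to prove.
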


From this Corollary it follows that any function $f$, such as parity, which is non-constant on every Boolean subcube of dimension $2\log n$ is not in $\cF({n,d,S})$. Given the truth table of $f$, i.e., given $f \in \ZO^N$, where $N=2^n$, we can test whether $f$ is non-constant in every subcube of dimension $s$ by an $\AC_0$ circuit of depth $d_\Phi = 2$ and size $S_\Phi =O(N^2)$. A random function will have this property with probability $1 - o(1)$. And so we have a $(d,S,d_\Phi,S_\Phi,o(1))$-natural proof. 

\medskip
That this proof is $\AC_0$-natural had already been observed by Razborov and Rudich. In the time since, many variants of the Switching Lemma have been proven \cite{beame1990lower,beame1994switching,Rossman08,impagliazzo2012satisfiability,haastad2014correlation,rossman2018average} For example, in \cite{Rossman08}, Rossman has shown that the function $\mathsf{CLIQUE}_k:\ZO^{\binom{n}{2}} \to \ZO$, which indicates whether a given undirected graph has a clique of (constant) size $k$, cannot be computed by constant-depth circuits of size $O(n^{\frac{k}{4}})$. The core technical lemma is the following:

\begin{theorem}
     Suppose $f_n \in \cF({\binom{n}{2},O(1), O(n^t)})$, i.e. $f_n$ is a function on $\binom{n}{2}$ bits, computed by depth $d = O(1)$ circuits of size $O(n^t)$, and $t > 1/2$. For any integer $k \ge 3$ and $\alpha = \frac{1}{2t-1}$, let $G = \mathsf{ER}(n, n^{-\alpha})$ be an Erd\"os-R\'enyi random graph, and let $A$ be a uniform random set of $k$ vertices of $G$. Then, for all sufficiently large $n$, $f_n(G) = f_n(G \cup K_A)$ with high probability over the choice of $G$ and $A$, where $K_A$ is the clique on $A$.
\end{theorem}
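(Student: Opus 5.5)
The plan is to follow Rossman's argument, which shows that $f_n$ is insensitive to planting a small clique on a random graph at the right edge density. Set $p = n^{-\alpha}$ with $\alpha = \frac{1}{2t-1}$. Write the depth-$d$ circuit for $f_n$ as a collection of at most $O(n^t)$ subcircuits $g$. By induction on depth we will show that every subcircuit $g$ satisfies $g(G) = g(G\cup K_A)$ except with probability $o(n^{-t})$. A union bound over all gates then gives the statement for the output gate. The key tool is Rossman's \emph{pathset} and \emph{sunflower}-type approximation for monotone-like behaviour. Concretely, each gate $g$ will be approximated, with high probability over $G$ and for most $A$, by a DNF whose terms are small ``clique-patterns'' of size at most $k'$. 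Sensitivity to the planted clique is then controlled by how many $k$-sets of vertices complete a pattern.

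\begin{enumerate}
\item \textbf{Setup.} Fix the probability space $(G,A)$ and note that, for $t$ fixed, the threshold $p=n^{-\alpha}$ is the density at which $G$ has about $n^{k}p^{\binom{k}{2}}$ copies of $K_k$ relative to the circuit size. The exponent $\alpha=1/(2t-1)$ is exactly what makes the later union bound over $n^t$ gates balance against the probability that a fixed small pattern is ``hit'' by $A$.

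\item \textbf{Switching step.} Apply a Switching-Lemma-style argument adapted to the Erdős–Rényi distribution rather than to uniform random restrictions. The idea is to reveal $G$ in stages $G = G_1 \cup \dots \cup G_d$, each an independent $\mathsf{ER}(n,p^{1/d})$. After stage $i$, every depth-$i$ gate should be approximated by a function depending only on few (at most $O(t)$) vertices' neighbourhoods in the remaining randomness. The induction hypothesis will be a bound on the \emph{relevant vertex set}: each gate's value on $G$ versus $G\cup K_A$ can differ only if $A$ intersects the gate's relevant set in at least two vertices forming a small witness structure.

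\item \textbf{Counting.} Bound the probability that a uniform random $k$-set $A$ interacts with a given gate's witness structures. A witness on $\ell$ vertices is relevant to $A$ only if the missing edges within $A$ complete it. This has probability roughly $n^{-\ell}p^{-\binom{\ell}{2}}$ times the number of witnesses. Using $p=n^{-\alpha}$, we aim to show that this is $o(n^{-t})$ whenever $\ell \le 2t$, and that witnesses larger than $2t$ vertices essentially do not appear in $G$.

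\item \textbf{Union bound.} Sum over the $O(n^t)$ gates to conclude that $f_n(G)=f_n(G\cup K_A)$ with probability $1-o(1)$.
\end{enumerate}

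The main obstacle is step 2. Håstad's lemma is tailored to product restrictions, whereas $G\mapsto G\cup K_A$ is a highly correlated perturbation. The induction therefore must track a richer invariant than width, namely Rossman's notion of the ``size'' of minimal sub-patterns, and show it does not blow up with depth. The first attempt will be to define, for each gate, the family of minimal subgraphs $H$ (with $|V(H)|\le 2t$) such that the gate is forced by $H$. We then prove a sunflower-plus-density lemma: if this family is large, then w.h.p. over $G$ the gate is already constant. This is the step where the precise choice $\alpha=1/(2t-1)$ and the condition $t>1/2$ will have to be used.
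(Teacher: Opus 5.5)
The paper does not prove this statement. It quotes it as the core lemma of Rossman \cite{Rossman08}, so your proposal has to stand on its own, and its basic architecture does not work.

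\textbf{The per-gate invariant is false.} You plan to show that every gate $g$ satisfies $g(G)=g(G\cup K_A)$ except with probability $o(n^{-t})$, and then union-bound over gates. But a gate computing a single edge variable $x_e$ flips whenever $e\subseteq A$ and $e\notin G$. That event has probability $\Theta(n^{-2})$, which is not $o(n^{-t})$ once $t\ge 2$, and $t\ge 2$ is the regime of the $k$-clique application. Worse, a circuit of size $O(n^t)$ with $t\ge 2$ may contain such a gate for every edge, and then some gate flips with probability $1-o(1)$. The theorem constrains only the output; internal gates genuinely change value. So what you propagate up the circuit must be an approximation statement whose \emph{failure} events are union-bounded, not insensitivity of each gate.

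\textbf{The counting in Step 3 fails for the same reason.} Already $\ell=2$ gives $n^{-2}p^{-1}=n^{-2+\alpha}\gg n^{-t}$ for $t\ge 2$. More generally, when $2t$ is an integer one has $\ell-\alpha\binom{\ell}{2}-t=-\frac{(\ell-2t+1)(\ell-2t)}{2(2t-1)}\le 0$ for every integer $\ell$. Hence $n^{-\ell}p^{-\binom{\ell}{2}}\ge n^{-t}$, and the bound you aim for is never $o(n^{-t})$. It is also false that witnesses on more than $2t$ vertices do not appear in $G$: for $t\ge 2$, a.a.s.\ $G$ contains $K_\ell$ for every $\ell<4t-1$, and every fixed tree.

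\textbf{The missing idea.} The computation above is the actual meaning of $\alpha=1/(2t-1)$, and it shows the relevant cliques have size about $2t$, not $k$ as in your Step 1. The expected number of copies of $K_\ell$ in $G$, namely $n^{\ell}p^{\binom{\ell}{2}}$, is maximized at $\ell\approx 2t$, where it is about $n^{t}$ — as many as the circuit has gates. The proof must exploit this abundance, not the rarity of $A$ hitting a gate's witnesses. Once gates are reduced to small-width approximators, one needs a sunflower/Janson-type lemma. It should show that a size-$O(n^t)$ family of small patterns that would be completed by edges of $K_A$ is, with high probability, already completed inside $G$ itself. Your ``sunflower-plus-density lemma'' is only stated as an intention and is never formulated.

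\textbf{Step 2 does not supply the switching step.} A union of $d$ independent $\mathsf{ER}(n,p^{1/d})$ graphs has edge density about $dp^{1/d}\gg p$; you would need an intersection. More seriously, any product random restriction sparse enough for a H{\aa}stad-style collapse of $n^t$ gates leaves, with high probability, every edge of $K_A$ fixed, mostly to $0$. The collapsed circuit then tells you nothing about $G\cup K_A$. You correctly flag this as the main obstacle, but the proposal contains no mechanism for overcoming it.
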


If $t \le \frac{k}{4}$, for the given choice of $\alpha$, we have that $\Pr[\mathsf{CLIQUE}_k(G) = 1] = o(1)$, and of course $\mathsf{CLIQUE}_k(G \cup K_A) = 1$ always, so $f_n$ does not compute $\mathsf{CLIQUE}_k$. More generally, $f_n$ will fail to compute any function $f$ such that $\Pr_{G,A}[f(G) = f(G \cup K_A)]$ is not close to $1$. We can then take the following as our natural property. $\Phi(f) = 1$ if and only if, for every graph $G$ without a $k$-clique, $\Pr_A[f(G) \neq f(G \cup K_A)] \ge \frac{1}{4}$. If $f$ is a random function, we expect about half of the choices $A$ to give $f(G) = f(G \cup K_A)$, and since there are at least $\binom{n}{3} \gg \binom{n}{2}$ such choices, by concentration of measure it is $\ll 2^{-\binom{n}{2}}$-likely that some graph $G$ will have $f(G) \neq f(G \cup K_A)$ for fewer than a $\frac{1}{4}$ fraction of the choices of $A$. We take a union bound and conclude that this property is large. We can check whether this property holds using a constant depth circuit of size $2^{O(\binom{n}{2})}$, by taking an AND over all $G$, and counting the number of $A$ such that $f(G) \neq f(G \cup K_A)$ (this is a small number, we can count it precisely). Then this property, and hence Rossman's proof, are $\AC_0$-natural.

\subsubsection{Proofs that use sensitivity and estimates on Fourier mass}\label{sec:sensitivity-is-natural}

The \textit{average sensitivity} of a function $f:\ZO^n\to\ZO$ is the expected number of sensitive coordinates: $S(f) = \frac{1}{2^n} \sum_{x\in\ZO^n} S(f,x)$, where $S(f,x) = \lvert\{i \in [n] \mid f(x) \neq f(x \oplus e_i) \}\rvert \in \{0, \ldots, n\}$. In \cite{Boppana97}, Boppana has proven:

\begin{theorem}\label{thm:boppana}
    If $f \in \cF({n,d,S})$, then $S(f) = O(\log S)^{d-1}$.
\end{theorem}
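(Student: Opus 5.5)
We prove Boppana's bound, $S(f) = O(\log S)^{d-1}$ for $f \in \cF(n,d,S)$, by combining H{\aa}stad's Switching Lemma with the elementary fact that average sensitivity behaves predictably under random restrictions. For a $p$-random restriction $\rho$ (each variable left free with probability $p$, otherwise fixed to a uniform bit), we have the identity
\[
\E_\rho[S(f|_\rho)] = p\cdot S(f).
\]
To see this, pick $x$ uniformly and a coordinate $i$. Coordinate $i$ is free in $\rho$ with probability $p$. Conditioned on that, $(\rho, x|_{\text{free}})$ is a uniform point together with coordinate $i$, so $\Pr[f|_\rho \text{ sensitive at } i]$ equals $p$ times the fraction of points at which $f$ is sensitive in direction $i$. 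Summing over $i$ gives the identity. It therefore suffices to find a $p$ such that $f|_\rho$ has small expected sensitivity, and then divide by $p$.

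Take $p = 1/(C\log S)^{d-1}$. We apply the Switching Lemma $d-2$ times, each time with a restriction of parameter $1/(C\log S)$, exactly as in the corollary above. After these rounds, with probability $1 - o(1)$ (indeed $1 - S\cdot 2^{-\Omega(\log S)}$), $f|_{\rho'}$ is computed by a $k$-CNF or $k$-DNF with $k = O(\log S)$. One further switching step with parameter $1/(C\log S)$, i.e.\ with $p = 1/(10k)$, turns this into a decision tree of depth $t$ except with probability $2^{-t}$.

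A depth-$t$ decision tree has sensitivity at most $t$ at every point, since a point can only be sensitive to variables queried on its path. Its sensitivity is also trivially at most $n$. Hence
\[
\E[S(f|_\rho)] \le \sum_t 2^{-t}\, t + (\text{failure prob})\cdot n = O(1) + \text{failure}\cdot n.
\]

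The main obstacle is exactly this failure term: the failure probability of the earlier rounds is only bounded by roughly $S\cdot 2^{-k}$, and multiplying it by $n$ must still be $O(1)$. I would handle it by choosing $k = c\log S + \log n$, or more simply by noting that we may assume $S \ge n$. This is legitimate: if $S < n$, then $f$ depends on at most $S$ variables, so $S(f) \le S$, which is already fine. With $S \ge n$ and $k = O(\log S)$, the failure probability is at most $S\cdot 2^{-2\log S - \log n} \le 1/n$, so the failure term is $O(1)$.

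Since $\E_\rho[S(f|_\rho)] = p\cdot S(f)$, we conclude
\[
S(f) \le O(1)/p = O(\log S)^{d-1}.
\]
A cleaner alternative, following Boppana's original argument, uses the bound that the sensitivity of a $k$-CNF is $O(k)$. Then one needs only $d-2$ switching rounds, with the final layer handled directly by that bound rather than by the decision-tree step above.
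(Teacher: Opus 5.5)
The paper gives no proof of this theorem; it is quoted from Boppana \cite{Boppana97}. Your route is the standard one: the identity $\E_\rho[S(f|_\rho)]=p\,S(f)$, iterated switching, the fact that a depth-$t$ decision tree has sensitivity at most $t$, then dividing by $p$. It is fine up to the failure term.

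The step that does not work as written is the reduction to $S\ge n$. You argue that if $S<n$ then $S(f)\le S$, ``which is already fine''. But $S$ is not $O(\log S)^{d-1}$ (take $S=n/2$), so that case is not actually handled. Your other option, $k=c\log S+\log n$, only gives $O(\log S+\log n)^{d-1}$, which is weaker than claimed when $n\gg S$. Moreover, if size counts gates rather than wires, ``$f$ depends on at most $S$ variables'' is false, since a single OR gate reads all $n$ inputs.

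The fix is to bound $S(f|_\rho)$ on the failure event by $S$ instead of $n$, which removes any comparison between $S$ and $n$. Since $f|_\rho$ is again computed by a circuit with at most $S$ gates, it suffices to show $S(g)\le S$ for every such $g$.
\begin{itemize}
\item With negations at the inputs, write each gate as $h=\mathrm{op}_h(A_h,h_1,\dots,h_b)$, where $A_h$ is the AND/OR of the literals feeding $h$ directly.
\item If flipping $x_i$ leaves every $A_h$ unchanged, then by induction in topological order no gate value changes.
\item Hence $S(g)\le\sum_h S(A_h)\le S$, because an AND/OR of $w$ distinct literals has average sensitivity $w2^{1-w}\le 1$.
\end{itemize}
If size counts wires, you can instead discard irrelevant variables, which leaves $S(f)$ unchanged and makes $n\le S$.

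The failure term is then at most $S\cdot S2^{-k}$, so $k=3\log S$ makes it $O(1)$, and you get $S(f)\le O(1)/p=O(\log S)^{d-1}$ with no dependence on $n$. The same repair applies to your alternative that uses the bound $S(g)\le 2k$ for $k$-CNFs.
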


Since the sensitivity of parity is $n$, we get the same lower-bound of $S \ge 2^{\Omega(n^{\frac{1}{d-1}})}$ against parity. Using approximate counting (\Cref{thm:approximate-counting}), a circuit of constant depth and size $2^{O(n)}$ can compute $\sum_{x \in \ZO^n} S(f,x)$ up to an additive error of $\frac{2^n}{n^2}$. This is enough to estimate $S(f)$ up to a $o(1)$ additive error, and so any lower-bound proven from \Cref{thm:boppana} is $\AC_0$-natural.

A number of results starting with Linial, Mansour, and Nisan \cite{linial1993constant} show that constant-depth circuits can be well-approximated (in $\ell_2$-norm) by low-degree polynomials \cite{haastad2001slight,haastad2014correlation,impagliazzo2012satisfiability}. The strongest such result is by Tal \cite{tal2017tight}:

\begin{theorem}[\cite{tal2017tight}]\label{thm:tal}
    If $f \in \cF({n,d,S})$, then $$W^{\ge k}(f) =\sum_{T: |T| \ge k} \hat f(T)^2 \le 2 \cdot 2^{-k/O(\log S)^{d-1}},$$ where $\hat f(T) = \frac{1}{2^n}\sum_{x \in \ZO^n} (-1)^{f(x)} \cdot (-1)^{|x \cap T|}$ is the Fourier coefficient of $f$.
\end{theorem}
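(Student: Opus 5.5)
The statement is Tal's theorem, which the paper quotes from \cite{tal2017tight}. I would prove it by the classical Linial--Mansour--Nisan strategy: a random restriction simplifies $f$ to a shallow decision tree, and a shallow decision tree has no Fourier weight at high levels. The work lies in obtaining the sharp exponent $d-1$.

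\textbf{Step 1 (simplification).} Let $\rho$ be a $p$-random restriction with $p = 1/(C\log S)^{d-1}$ for a suitable constant $C$. I would show that
\[
\Pr_\rho\bigl[\mathrm{DT}(f|_\rho) \ge t\bigr] \le 2^{-t} \qquad \text{for every } t.
\]
The argument proceeds layer by layer. First, a restriction with probability $1/O(\log S)$ reduces the bottom fan-in to $O(\log S)$; the gates of fan-in more than $\log S$ are killed except with probability $S\cdot 2^{-\Omega(\log S)}$. Next come $d-2$ rounds of restrictions, each with probability $1/O(\log S)$. Each round applies the Håstad/Impagliazzo--Matthews--Paturi \emph{multi-switching lemma} to all $S$ bottom depth-2 subcircuits at once, collapsing them into a common partial decision tree of small depth with $O(\log S)$-wide leaves, and this removes one layer. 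The final round uses probability $1/O(\log S)$ and turns the remaining $O(\log S)$-CNF/DNF into a decision tree of depth $<t$, except with probability $2^{-t}$. The resulting probabilities multiply to $p$, which gives $(\log S)^{d-1}$ rather than $(\log S)^{d}$.

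\textbf{Step 2 (from restrictions to Fourier tails).} For a random restriction with free set $J$, where each coordinate is free independently with probability $p$, I would use the identity
\[
\E_\rho\bigl[W^{\ge m}(f|_\rho)\bigr] = \sum_T \hat f(T)^2 \Pr[|T\cap J| \ge m].
\]
For $|T| \ge k$, Chernoff gives $\Pr[|T\cap J| \ge pk/2] \ge 1/2$ when $pk$ is large. It follows that $W^{\ge k}(f) \le 2\,\E_\rho[W^{\ge pk/2}(f|_\rho)]$.

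\textbf{Step 3 (conclusion).} A decision tree of depth $<m$ has no Fourier weight at level $m$ or above. Moreover, the total weight satisfies $W(f|_\rho)\le 1$ always. Hence
\[
\E_\rho\bigl[W^{\ge pk/2}(f|_\rho)\bigr] \le \Pr_\rho\bigl[\mathrm{DT}(f|_\rho) \ge pk/2\bigr] \le 2^{-pk/2},
\]
which yields $W^{\ge k}(f) \le 2\cdot 2^{-k/O(\log S)^{d-1}}$. When $pk$ is too small for the Chernoff step to apply, the bound is trivial after adjusting the constant in the $O(\cdot)$.

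\textbf{Main obstacle.} The hard part is Step 1 with the exact exponent $d-1$ and failure probability $2^{-t}$ for \emph{all} $t$, with no $S$-dependent loss. An ordinary switching lemma plus a union bound over the $S$ gates costs a factor of $S$ per layer, and this would force $t \gtrsim \log S$. Avoiding that loss is the content of the multi-switching lemma. I would first try to use the version of Håstad (2014), taking the partial-decision-tree depth $\ell=\log S$ in the intermediate rounds. I would then check that the failure probabilities $S\cdot(O(pk))^{\ell}$ compose to $2^{-t}$ in the final round.
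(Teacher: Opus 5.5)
The paper does not prove this statement; it quotes it from Tal, whose argument builds on H{\aa}stad's 2014 multi-switching analysis (which handled $k=n$). So your overall plan is the right one. Steps~2 and~3 are correct as written: the restriction identity, the Chernoff step, and the vanishing of high-level weight for shallow decision trees all hold.

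\textbf{Step~1 fails as written.} It does not deliver the bound $\Pr_\rho[\mathrm{DT}(f|_\rho)\ge t]\le 2^{-t}$ for every $t$, which Step~3 needs. There are two concrete problems.

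\emph{The exponent count is wrong.} You spend $1/O(\log S)$ on fan-in reduction, and then $1/O(\log S)$ on each of the $d-2$ intermediate rounds and on the final round. These multiply to $1/O(\log S)^{d}$, which is exactly the LMN-type exponent the theorem improves on. Pruning needs only a constant star probability, because a gate of fan-in $w$ survives with probability $((1+p_0)/2)^w=2^{-\Omega(w)}$ already for constant $p_0$. Compare $\sfD_0$ in \Cref{clm:pruning}, whose star probability is $1/64$; in \Cref{lemma:awapplied} only the $d_\Phi-1$ copies of $\sfD_1$ cost $1/O(\log S_\Phi)$ each.

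\emph{Your pruning is a union bound over $S$ gates.} It therefore fails with probability $S^{-O(1)}$, not $2^{-t}$. Step~3 then gives only $W^{\ge k}(f)\le 2\bigl(2^{-pk/2}+S^{-O(1)}\bigr)$, which is too weak once $k\gg(\log S)^d$. For example, with $S=\poly(n)$ and $k=n$ the theorem asserts $W^{\ge n}(f)\le 2\cdot 2^{-n/O(\log n)^{d-1}}$, while you get only $1/\poly(n)$. The paper's own pipeline can afford such union bounds because it only needs failure probability $\epsilon^*=8/S_\Phi$; Tal's theorem cannot. The fix is to treat each bottom gate as a width-$1$ CNF or DNF and run the multi-switching lemma on that layer too, with constant star probability and $\ell=\log S$. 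Pruning then sits inside the same tail analysis as the other rounds.

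\textbf{The crux you defer to ``I would then check'' needs an argument.} This is exactly where the $S$-dependence must be beaten. For $S$ formulas of width $w$, H{\aa}stad's bound is $S^{\lceil s/\ell\rceil}(O(pw))^s$; your $(O(pk))^{\ell}$ conflates the Fourier level $k$ with the width. Because of the ceiling, the bound still carries a full factor $S$ when $s\le\ell$, so it is vacuous at small depths. The repair goes as follows.
\begin{enumerate}
\item A common partial decision tree is nontrivial only if some $F_i|_\rho$ has decision-tree depth exceeding $\ell$. By the ordinary switching lemma and a union bound, this has probability at most $S\cdot(O(pw))^{\ell+1}=S^{-\Omega(1)}\le 2^{-\Omega(\ell)}$ once constants are small.
\item For $s>\ell$ we have $S^{\lceil s/\ell\rceil}\le 2^{O(s)}$, which is absorbed into $(O(pw))^s$. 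Combined with the previous item, every round has tail $2^{-\Omega(s)}$ for all $s\ge 1$.
\item The last round involves a single $\ell$-CNF. There the ordinary switching lemma, with star probability $1/O(\ell)$, gives tail $2^{-\Omega(t)}$ with no factor $S$.
\item You must sum over the ways the total depth $t$ splits among the rounds. You must also union-bound over the leaves of earlier partial trees when applying later rounds. Both costs are absorbed by shrinking the constants in the per-round star probabilities.
\end{enumerate}
With these repairs, Step~1 holds for $p=1/O(\log S)^{d-1}$, and your Steps~2 and~3 complete the proof.
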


By choosing $k = O(\log S)^{d-1}$ sufficiently large, we conclude that the entire Fourier tail has mass less than, say, $\frac{1}{10}$. But parity has all its Fourier mass at level $k = n$, and so we get the same lower-bound of $S \ge 2^{\Omega(n^{\frac{1}{d-1}})}$ against parity.

Again, while we cannot compute the Fourier tail $W^{\ge k}(f)$ exactly, we can use approximate counting to approximate the mass of the Fourier tail. Indeed,
\[
W^{\ge k}(f) = \frac{1}{2^{2n}}\sum_{T:|T| \ge k} \sum_{x,y\in\ZO^n} (-1)^{f(x)} \cdot (-1)^{f(y)} \cdot (-1)^{|x\Delta y|}.
\]
We can construct a constant-depth circuit of size $2^{O(n)}$ which computes every term in the sum, and using multiplicative approximate counting we can estimate the sum of the positive and negative terms up to a multiplicative error of $1 \pm \frac{1}{n^r}$, subtract the two estimates and divide by $2^{2n}$ (which is an easy division, even for $\AC_0$), to get an estimate of $W^{\ge k}(f)$ up to a multiplicative error of $O(1 \pm n^{-r})$. (Observe that here an additive approximation is not enough.) This is enough precision to distinguish the case when $f$ is computed by a simple constant-depth circuit, who will have Fourier tail $\le \frac{1}{10}$ versus a random function whose Fourier spectrum is roughly uniformly distributed, hence will have Fourier tail $\ge 1 - o(1)$.

\subsection{Top-down techniques}

Such techniques generally analyze circuits in a top-down fashion, that is by starting from the output gate, working the way down to input gates until reaching a contradiction.

\subsubsection{Proofs based on \texorpdfstring{$k$}{k}-limits}\label{sec:HJP}

Here we primarily focus on the the depth-$3$ lower bound technique of Håstad, Jukna and Pudl{\'a}k \cite{hastad1993top}. We will note that their property is not a natural property because it is small, and we will mention an $\AC_0$-natural property that is similar both in principle and in getting nearly the same lower bounds. We go into more detail about these properties in Appendix \ref{sec:topdownproperties}.

Håstad et al.'s lower bound technique works as follows. From a small $\Sigma_3$ circuit $C$ (i.e., depth 3 with a top $\mathsf{OR}$ gate) that computes $f$ one can easily find a CNF $D$ that rejects every input in $f^{-1}(0)$ and accepts every input in a large subset $T \subseteq f^{-1}(1)$. Now consider an input $y$ such that for every $Q \subseteq [n]$ of size at most $k$, there is an $x \in T$ such that $x_Q = y_Q$. Such a $y$ is called a \textbf{$k$-limit} of the set $T$. If $D$ has bottom fan-in $k$ then from the perspective of each clause in $D$, $y$ looks exactly like an element $x \in T$, so $y$ will also be accepted by all the clauses and hence by $D$. Hence such a $k$-limit of $T$ cannot exist in $f^{-1}(0)$.

The actual task is then to come up with what we will call ``limitful'' functions, i.e. functions $f$ such that in every partition of $f^{-1}(1)$ into at most $\ell$ parts, at least one part has a $k$-limit in $f^{-1}(0)$. Such functions would require $\Sigma^k_3$ circuits ($\Sigma_3$ with bottom fan-in $k$) of size at least $\ell$. The property of being limitful is far from constructive, although it is large and useful.

Håstad et al. use Ramsey theoretic arguments similar to those used in the sunflower lemma to show that if $f$ is a threshold function it is limitful. The property they use of $f$ is small (and hence not natural), and this seems to be needed for the Ramsey theoretic arguments to work. With this they show that the Threshold$_{n,n/k}$ function (that answers ``Does the $n$-bit input have at least $n/k$ ones?'') requires $e^{(n/k) \cdot (1-o(1))}$ size $\Sigma_3^k$ circuits. A lower bound of $e^{(n/2k) \cdot (1-o(1))}$ also follows for Majority since it embeds the above functions on $n/2$ bits.

(For the $k=3$ case specifically, a better lower bound of $\approx e^{1.35 n/2k}$ for Majority is known~\cite{GurumukhaniPPST24}. This property is even smaller and the analysis does not even involve $k$-limits. This is detailed later on in this section.)

For $\Sigma_3$ lower bounds they use an initial restriction to the input that makes the bottom fan-in bounded. The restriction turns a threshold function into a new threshold function, and so the $\Sigma^k_3$ lower bounds can be used. With this they get a lower bound of  $\approx 2^{0.849\sqrt{n}}$. The property that $f$ needs for this restriction to also be a part of the lower bound proof would be even smaller, although it would still be computable and useful.

\paragraph{A Natural Variant.} Meir and Wigderson~\cite{meir2019prediction} came up with a different way of finding $k$-limits. They showed that for any set $X$ of size at least $2^{n-o(n/k)}$, almost all its neighbours are $k$-limits. That is, if we choose $x \in X$ and $i \in [n]$ uniformly at random, then $x \oplus e_i$ is a $k$-limit of $X$ with probability at least $1-o(1)$. This provides a new way to generate limitful functions. Let $f$ be any function with a large number of $x \in f^{-1}(1)$ having high sensitivity. This yields an $\AC_0$-natural property that proves lower bounds against $\Sigma^k_3$ circuits. The large number of $k$-limits that it guarantees allows us to also use this property as is to prove $\Sigma_3$ lower bounds (see Appendix \ref{sec:topdownproperties} for details). This $\AC_0$-natural proof gives a $\Sigma^k_3$ lower bound of $2^{(1-o(1))n/2(k+1)}$, and a $\Sigma_3$ lower bound of $2^{(1-o(1))\sqrt{n}/2}$.

\paragraph{One lower bound that we could not yet naturalize.}\label{sssec:onelb}

Håstad et al. \cite{hastad1993top} also show a $\Sigma_3$ lower bound of $2^{\Omega(\sqrt{n})}$ on the function $\mathrm{AND}_{\sqrt{n}} \circ \mathrm{OR}_{\sqrt{n}} \circ \mathrm{AND}_2$. They prove this via the same technique covered above, showing that it is limitful. However in this setting the natural property used by Meir and Wigderson does not apply and hence this separation of $\Sigma_3$ vs $\Pi_3$ complexity does not have a natural proof yet. We leave it as an open problem to find an $\AC_0$-natural lower bound of $2^{\Omega(\sqrt{n})}$ for the above function.

\subsubsection{Proofs based on the Satisfiability Coding Lemma and the structure of \texorpdfstring{$k$}{k}-CNF solutions.}\label{sec:PPZ}

Paturi, Pudl{\'a}k and Zane  \cite{PaturiPZ99} proved a combinatorial lemma regarding $k$-CNFs which they used to devise a non-trivial algorithm for satisfiability of $k$-CNFs, and to show a tight $\Omega(n^\frac{1}{4} 2^{\sqrt n})$ lower-bound on the size of depth-3 circuits that compute Parity. The main component of their result is the following lemma. For a set $S \subseteq \Q^n$ we say that $x \in S$ is \emph{$j$-isolated} if there are at least $j$ coordinates such that flipping $x$ in any one of those coordinates results in an assignment outside $S$.

\begin{lemma}[Satisfiability Coding Lemma \cite{PaturiPZ99}]
Let $F$ be an $n$-variate $k$-CNF. For every $j \in [n]$, the number of $j$-isolated satisfying assignments of $F$ is at most $2^{n - j/k}$.
\end{lemma}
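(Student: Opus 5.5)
We prove the Satisfiability Coding Lemma by an encoding argument. We build an injective-on-average compression scheme for isolated solutions, and then convert the average code length into a counting bound via Kraft's inequality (equivalently, an entropy argument). Fix a $k$-CNF $F$ and let $S$ be its set of satisfying assignments.

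\textbf{Critical clauses.} For $x\in S$ and a coordinate $i$ with $x\oplus e_i\notin S$, some clause $C$ of $F$ is falsified by $x\oplus e_i$ but satisfied by $x$. Hence $x_i$ is the unique true literal of $C$ under $x$; call $C$ a \emph{critical clause} for $i$ at $x$. If $x$ is $j$-isolated, at least $j$ coordinates have critical clauses.

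\textbf{Encoding.} Fix a permutation $\pi$ of $[n]$. Encode $x$ by listing the bits $x_{\pi(1)},x_{\pi(2)},\dots$ in order, but omit $x_{\pi(t)}$ whenever $\pi(t)$ is an isolated coordinate whose critical clause has all its other $\le k-1$ variables earlier in $\pi$.

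\textbf{Decoding.} The decoder processes the variables in $\pi$-order. At position $\pi(t)$ it checks whether some clause of $F$ has all its other literals already set to false. If so, it forces $x_{\pi(t)}$ to satisfy that clause, and this is correct because $x$ satisfies $F$. Otherwise it reads the next bit. The decoder and encoder must agree on which bits are skipped. The clean way to guarantee this is to let the encoder also skip any variable that the decoder would force. Forcing is always correct, so the code stays prefix-free and decodable, and skipping more only helps. Hence for each fixed $\pi$ the map $x\mapsto \mathrm{code}_\pi(x)$ is injective and prefix-free on $S$.

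\textbf{Averaging over $\pi$.} Choose $\pi$ uniformly at random. For an isolated coordinate $i$ with critical clause $C$ of width $\le k$, the probability that $i$ comes last among the variables of $C$ is at least $1/k$. So the expected number of skipped bits is at least $j/k$, and the expected code length is at most $n-j/k$.

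\textbf{From expected length to counting.} This is the main subtlety, since a fixed $\pi$ gives only per-$x$ lengths $\ell_\pi(x)$. For each $\pi$, Kraft's inequality gives $\sum_{x\in S} 2^{-\ell_\pi(x)}\le 1$. Averaging over $\pi$ and applying Jensen's inequality (convexity of $2^{-t}$) yields
\[
|S_j|\,2^{-(n-j/k)}\le \sum_{x\in S_j}\E_\pi 2^{-\ell_\pi(x)}\le 1,
\]
where $S_j$ is the set of $j$-isolated solutions. Therefore $|S_j|\le 2^{n-j/k}$.

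The main obstacle is making the skip rule depend only on information available to the decoder. The resolution is the forcing rule described above.
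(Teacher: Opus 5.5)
Your proposal is correct and is essentially the original Paturi--Pudl{\'a}k--Zane argument, which the paper cites without reproving: encode each solution along a random permutation, skipping every bit the decoder can force, bound the expected length by $n-j/k$ via critical clauses, and conclude with Kraft's inequality for each fixed permutation plus Jensen. The one step to state more carefully is prefix-freeness. It holds because the decoder is deterministic and decides whether to read the next bit using only values it has already reconstructed, combined with correctness of forcing; correctness of forcing alone does not give it.
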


A  $\Sigma^k_3$ circuit is a  $\Sigma_3$ circuit such that every gate in the bottom layer has fan-in at most $k$.
Since a $\Sigma^k_3$ circuit, in particular, gives a covering of $j$-isolated accepting inputs of a function, the following general lower bound is an immediate corollary.

\begin{corollary}
Let $f$ be an $n$-variate function which accepts $S$ $j$-isolated assignments. Then any $\Sigma^k_3$ computing $f$ has size at least $S/2^{n - j/k}$. In particular,
\begin{enumerate}
    \item any $\Sigma^k_3$ circuit computing Parity has size at least $2^{\frac{n}{k} - 1}$ \cite{PaturiPZ99},
    \item a function $f$ sampled uniformly at random requires $\Sigma^k_3$ circuits of size $2^{\frac{n}{2k} - o(n)}$.
\end{enumerate}
\end{corollary}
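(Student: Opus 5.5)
The statement to prove is the Corollary following the Satisfiability Coding Lemma. The plan is to prove the general bound first by a covering argument, and then specialize it to Parity and to a random function.

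\textbf{General bound.} Let $C = \bigvee_{i=1}^{m} F_i$ be a $\Sigma^k_3$ circuit computing $f$, where each $F_i$ is a $k$-CNF and $m$ is the top fan-in (at most the size). Every accepted input $x$ of $f$ satisfies some $F_i$. Moreover, $F_i^{-1}(1)\subseteq f^{-1}(1)$. So if $x$ is $j$-isolated in $f^{-1}(1)$ and $F_i(x)=1$, then $x$ is also $j$-isolated in $F_i^{-1}(1)$: each of the $j$ coordinate flips leaves $f^{-1}(1)$, hence leaves $F_i^{-1}(1)$. By the Satisfiability Coding Lemma, each $F_i$ covers at most $2^{n-j/k}$ of the $S$ $j$-isolated accepted inputs of $f$. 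Therefore $m \cdot 2^{n-j/k} \ge S$, which gives the bound.

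\textbf{Parity.} Every accepted input of Parity is $n$-isolated, since flipping any single bit changes the parity. There are $2^{n-1}$ accepted inputs, so taking $j=n$ gives size at least $2^{n-1}/2^{n-n/k} = 2^{n/k-1}$.

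\textbf{Random function.} Here I must choose $j$ so that many accepted inputs are $j$-isolated. For a uniformly random $f$ and a fixed $x$, the events $f(x)=1$ and $f(x\oplus e_i)=0$ for $i\in[n]$ are independent, each with probability $1/2$. So the number of sensitive coordinates at an accepted $x$ is distributed as $\mathrm{Bin}(n,1/2)$.

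I will take $j = n/2 - o(n)$, for instance $j = n/2 - n^{2/3}$. By Chernoff, the expected number of $j$-isolated accepted inputs is $(1-o(1))\,2^{n-1}$. For concentration, each $x$'s indicator depends only on $f$ restricted to $x$ and its neighbours, so I would use Markov's inequality on the number of non-isolated accepted inputs. This shows that with probability $1-o(1)$ we have $S \ge 2^{n-2}$. Plugging into the general bound gives $2^{n-2}/2^{n-j/k} = 2^{n/(2k) - o(n)}$, with the $o(n)$ absorbing the additive constant and $n^{2/3}/k$.

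\textbf{Main obstacle.} The only delicate points are the heredity of isolation to the sub-CNFs, which is handled above, and the concentration step for random $f$. The Markov argument is enough for the concentration step because we need only a constant fraction of the accepted inputs to be isolated.
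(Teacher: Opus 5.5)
Your proof is correct and is the paper's argument: the paper takes the bound as immediate because the $k$-CNFs under the top OR cover the $j$-isolated accepting inputs, a step you justify by noting that isolation passes to each $F_i$. For a random function the paper simply observes that w.h.p.\ there are $2^{n-o(n)}$ accepting inputs that are $(n/2-o(n))$-isolated. One step is left implicit in your version: to conclude $S\ge 2^{n-2}$, Markov on the non-isolated accepted inputs must be combined with the Chernoff fact that $|f^{-1}(1)|\ge 3\cdot 2^{n-3}$ w.h.p.
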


The latter follows by observing that a random function has $2^{n - o(n)}$ $(\frac{n}{2} - o(n))$-isolated accepting inputs with high probability. Using the approximate counting circuit of \Cref{thm:approximate-counting}, we obtain an $\AC_0$-natural property useful against $\Sigma^k_3$ circuits of size $2^{n/2k - o(n)}$. Therefore, the weaker PPZ lower bound is an $\AC_0$-natural proof.

\medskip
Paturi, Saks, and Zane \cite{PaturiSZ00} showed a near-maximal lower bound for $\Sigma^2_3$ circuits computing \emph{affine dispersers}. An affine disperser for dimension $d$ is a function that is not constant under any affine space of dimension $d$. They showed that any $\Sigma^2_3$ circuit that computes an affine disperser for dimension $o(n)$ has size at least $2^{n - o(n)}$. No such lower bound is known for $\Sigma^k_3$ where $k \ge 3$. It is easy to see that a random function is affine disperser for dimension $O(\log n)$. Furthermore, the property of being an affine disperser for dimension $d$ can be trivially checked by a depth-3 circuit of quasipolynomial size for $d = O(\log n)$: exhaustively go through all affine spaces of dimension $d$ and check whether the function is constant in that affine space. It follows that the lower bound of \cite{PaturiSZ00} is a quasipolynomial-$\AC_0$-natural proof.

\medskip
Paturi, Pudl{\'a}k, Saks and Zane \cite{PaturiPSZ05} strengthened the Satisfiability Coding Lemma as follows to derive improved $k$-SAT algorithms as well as depth-3 circuit lower bounds.

\begin{lemma}
Let $F$ be an $n$-variate $k$-CNF. Then the number of satisfying assignments of $F$ which are of Hamming distance at least $\Delta$ from any other satisfying assignment is at most $2^{(1 - \frac{\pi^2}{6k} + O(\frac{1}{\Delta k}))n}$.
\end{lemma}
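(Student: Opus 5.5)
The statement is the PPSZ bound, and I would prove it by the PPSZ encoding argument, adapted from uniquely satisfying assignments to $\Delta$-isolated ones. Fix a $k$-CNF $F$ and let $S$ be its set of $\Delta$-isolated satisfying assignments. Choose a uniformly random permutation $\pi$ of the variables, or equivalently independent uniform placement times $t_v\in[0,1]$. For $x\in S$, scan the variables in the order of $\pi$. Call variable $v$ \emph{forced} for $(x,\pi)$ if its value $x_v$ is determined by a simple, bounded implication rule from $F$ and the values $x_u$ of the variables preceding $v$. The rule I would use is implication by $s$-bounded resolution with $s$ a large function of $\Delta$, or at least unit propagation on a suitable set of clauses.

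The encoding $E_\pi(x)$ lists the values of the non-forced variables in order. It is prefix-decodable given $\pi$ and $F$, since the decoder can recompute every forced bit. By Kraft's inequality and Jensen's inequality (the standard averaging), this gives
\[
\log_2|S| \;\le\; n - \min_{x\in S}\E_\pi[\#\text{forced variables}],
\]
so it suffices to show that every variable is forced with probability at least $\frac{\pi^2}{6k} - O\!\left(\frac{1}{\Delta k}\right)$.

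The core step is a \emph{critical clause tree} for each $x\in S$ and variable $v$. Since $x$ is $\Delta$-isolated, flipping $v$ together with any set $W$ of fewer than $\Delta$ other variables yields a non-solution, so some clause is violated. I would build the tree recursively. The root is labelled $v$. At a node whose root-to-node path has labels $W$, with $|W|<\Delta$, pick a clause violated by $x\oplus e_W$. That clause contains a literal false under $x$ on some path variable, and its other at most $k-1$ literals become the children. This yields a tree of depth about $\log_{k-1}\Delta$ and branching at most $k-1$, with distinct labels along each path.

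Then show that $v$ is forced whenever the tree has a cut consisting of nodes with $t_u<t_v$ that separates the root from the depth frontier. On that event, the clauses along the cut resolve, with width bounded in terms of $\Delta$, to a clause forcing $x_v$. Conditioned on $t_v=r$, the probability of such a cut satisfies the recursion
\[
Q_{j+1}(r)\ge \bigl(r+(1-r)Q_j(r)\bigr)^{k-1}.
\]
Its fixed point $R(r)$ solves $R=(r+(1-r)R)^{k-1}$, and the depth-$\log\Delta$ truncation costs only $O(1/(\Delta k))$ overall. Integrating over $r$ gives
\[
\Pr[v\text{ forced}]\ge \int_0^1 R(r)\,dr-O\!\left(\tfrac{1}{\Delta k}\right)=\tfrac{\mu_k}{k-1}-O\!\left(\tfrac{1}{\Delta k}\right),
\qquad \mu_k=\sum_{j\ge1}\frac{1}{j\bigl(j+\frac{1}{k-1}\bigr)}.
\]
Finally, compare this with $\frac{\pi^2}{6k}$, using $\sum_j 1/j^2=\pi^2/6$, with the difference absorbed into the error term.

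The main obstacle is the tree and cut analysis. Labels repeat across different branches, so the events at the children are not independent. Handling this requires the usual FKG/correlation-inequality argument, or an argument that repeated labels only help. One must also check that the implication derived from the cut really uses only clauses of bounded width. I would try first to follow the PPSZ critical-clause-tree argument verbatim, with the depth cap set to $\log_{k-1}\Delta$.
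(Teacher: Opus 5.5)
Your outline is the PPSZ argument itself (the paper gives no proof of this lemma and simply quotes it from \cite{PaturiPSZ05}). Most of it is sound: the Kraft--Jensen averaging, forcing via bounded implication, the critical clause tree built from $\Delta$-isolation, the Harris/FKG step for repeated labels, the recursion $Q_{j+1}(r)\ge\bigl(r+(1-r)Q_j(r)\bigr)^{k-1}$, and $\int_0^1R(r)\,dr=\mu_k/(k-1)$.

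The step that fails is the truncation. Your own construction only needs the path label set to satisfy $|W|<\Delta$, so it yields trees of depth $\Delta-1$, not $\log_{k-1}\Delta$. Moreover, truncating at depth $h=\log_{k-1}\Delta$ does not cost $O(1/(\Delta k))$, because $Q_h\to R$ converges only polynomially in $h$. At $r^*=\frac{k-2}{k-1}$ the map $Q\mapsto(r+(1-r)Q)^{k-1}$ has derivative exactly $1$ at its fixed point $Q=1$, so $1-Q_h(r^*)=\Theta(1/h)$. In fact $1-Q_h(r)=\Omega(1/h)$ on the whole window $r^*\le r\le r^*+\frac{1}{h(k-1)}$, where $R\equiv1$. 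Hence $\int_0^1(R-Q_h)\,dr=\Omega\bigl(1/(kh^2)\bigr)$. With $h=\log_{k-1}\Delta$ your argument can therefore certify an error term no better than order $1/(k\log_{k-1}^2\Delta)$, far larger than $1/(\Delta k)$.

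The repair has two parts. First, use the full depth $h=\Delta-1$ that isolation allows. The resulting bound $s\le k^{\Delta}$ on the number of clauses in the implication is harmless, since in a counting argument only the existence of a deterministic decoder matters. Second, you need the quantitative PPSZ estimate $\int_0^1Q_h(r)\,dr\ge\frac{\mu_k}{k-1}-O\bigl(\frac{1}{hk}\bigr)$. That estimate requires a careful analysis of the iteration near $r^*$, and it is exactly where the $O(1/(\Delta k))$ term comes from. Your proposal neither states nor proves it.

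Two smaller slips:
\begin{itemize}
\item The clause chosen at a node contains a literal on a path variable that is \emph{true} under $x$ (false under $x\oplus e_W$); its remaining literals are false under $x$.
\item Nothing needs to be absorbed at the end. Termwise $\frac{1}{j((k-1)j+1)}\ge\frac{1}{kj^2}$, which gives $\frac{\mu_k}{k-1}\ge\frac{\pi^2}{6k}$ exactly. A gap of order $1/k^2$ in the other direction could not have been absorbed into $O(1/(\Delta k))$ anyway.
\end{itemize}
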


\begin{corollary}
    If $f$ is the characteristic function of an error correcting code of distance $\omega(1)$ accepting at least $S$ inputs, then it requires $\Sigma^k_3$  circuits of size at least $S/2^{(\pi^2/6k - o(1))n}$.
\end{corollary}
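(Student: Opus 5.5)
The plan is a covering argument. Every accepting input of a $\Sigma^k_3$ circuit is accepted by one of its top-level $k$-CNFs, and $f$ is characterized by a code, so each codeword is far from every other accepting input. The coding lemma of \cite{PaturiPSZ05} then caps how many codewords a single $k$-CNF can accept.

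\textbf{Step 1 (decomposition).} Let $C=\bigvee_{i=1}^{s} F_i$ be a $\Sigma^k_3$ circuit computing $f$, where each $F_i$ is a $k$-CNF and $s$ is at most the size of $C$. Since $C$ computes $f$, we have $F_i^{-1}(1)\subseteq f^{-1}(1)$ for every $i$. Moreover $f^{-1}(1)=\bigcup_i F_i^{-1}(1)$, so the codewords $f^{-1}(1)$ are covered by the $s$ sets $F_i^{-1}(1)$.

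\textbf{Step 2 (isolation is inherited).} Let the code have distance $\Delta=\Delta(n)=\omega(1)$. Take $x\in F_i^{-1}(1)$ and any other $y\in F_i^{-1}(1)$. Both are codewords, so they differ in at least $\Delta$ coordinates. Hence every satisfying assignment of $F_i$ is at Hamming distance at least $\Delta$ from all other satisfying assignments of $F_i$.

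\textbf{Step 3 (apply the lemma).} By the lemma of \cite{PaturiPSZ05} stated just above, $|F_i^{-1}(1)|\le 2^{(1-\frac{\pi^2}{6k}+O(\frac1{\Delta k}))n}$. Since $\Delta=\omega(1)$, the error term is $o(1)$ in the exponent. Counting the cover gives
\[
S\le s\cdot 2^{(1-\frac{\pi^2}{6k}+o(1))n},\qquad\text{so}\qquad s\ge S/2^{(1-\frac{\pi^2}{6k}+o(1))n}.
\]

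\textbf{Step 4 (the stated exponent; main obstacle).} The remaining task is to compare Step 3 with the exponent $\frac{\pi^2}{6k}-o(1)$ as the statement words it. When $\frac{\pi^2}{6k}\ge\frac12$, i.e.\ $k\le 3$, we have $1-\frac{\pi^2}{6k}\le\frac{\pi^2}{6k}$. Step 3 is then at least as strong as the statement, and we are done.

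For $k\ge4$ the stated bound is stronger than what the covering argument yields. The only extra leverage I see is to bound $S$ by the Hamming bound, $S\le 2^n/\binom{n}{\lfloor(\Delta-1)/2\rfloor}$. For $\Delta=\omega(1)$ this loses only $2^{o(n)}$ or a polylogarithmic factor, so it cannot close a linear gap in the exponent.

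I would therefore first check whether the intended exponent is the coverage exponent $1-\frac{\pi^2}{6k}$. The plan above proves that reading completely, and it proves the statement exactly as worded for $k\le3$. I expect this reconciliation of the exponent for $k\ge4$ to be the genuine obstacle, and I do not expect any argument routed through \cite{PaturiPSZ05} to overcome it.
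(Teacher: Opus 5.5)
Your Steps 1--3 are exactly the argument the paper intends. The paper states this corollary without proof, as an immediate consequence of the PPSZ lemma, in parallel with the earlier Satisfiability-Coding-Lemma corollary: the top-level $k$-CNFs cover the codewords, every assignment a given $k$-CNF accepts is $\Delta$-isolated among that CNF's satisfying assignments, so each accepts at most $2^{(1-\pi^2/6k+o(1))n}$ codewords.

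Your Step 4 diagnosis is also correct. The printed exponent has lost the ``$1-$''. The bound that this argument, and hence the paper, actually gives is $S/2^{(1-\pi^2/6k+o(1))n}$. That equals $2^{(\pi^2/6k-o(1))n}$ in the regime $S=2^{n-o(n)}$, which is the regime the paper uses immediately afterwards for sparse random functions. So there is no gap on your side.
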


To see why this is a natural proof requires some subtlety. Clearly a random function will not be an error correcting code, not even of distance $2$, since the $2$-Hamming ball already has $n^2$ points, and so no accepting input is isolated. But if we choose a \textit{sparse} random function, namely, every output $F(x)$ is chosen independently at random with bias $\Pr[F(x)=1] = n^{-\omega(1)}$, then with high probability $F^{-1}(1)$ is an error-correcting code with distance $\omega(1)$, which accepts $2^{n-o(n)}$ inputs. Therefore the PPSZ proof is natural, in the sense that it produces a distinguisher which rejects all easy functions (with small $\Sigma^k_3$ complexity) from sparse random functions chosen with this bias. Moreover, as argued below, this notion of natural is just as relevant to the $\AC_0$-natural-proofs barrier.

We can easily convert an $\AC_0$ PRF for unbiased functions into a PRF $F:\ZO^n\to\ZO$ with any bias which is an inverse power of two. For example, if we want $\Pr[F(x)=1] = 2^{-q}$, we start by taking a PRF for a slightly larger bitlength $F':\ZO^{n + \lceil \log q  \rceil} \to \ZO$. We then use an AND to get $F$: $F(x) = \bigwedge_{x' \in [q]} F'(x x')$. It now follows that $F$ is sampled among random functions with bias $2^{-q}$, as promised. An $\AC_0$ distinguisher to break $F$ can be converted to an $\AC_0$ distinguisher to break $F'$ by composing it with the AND circuits that turn $F'$ into $F$ (i.e. convert the input function to a biased function as above and then use the distinguisher for $F$). This increases the depth of the distinguisher by at most $1$ and the size by at most $q$. Therefore, any (unbiased) $\AC_0$-natural-proofs barrier also gives us a barrier for biased $F$.

\paragraph{A potential non-natural lower-bound for Majority}\label{sec:maybe-not-natural}

The Majority function is a natural candidate for an explicit function that requires depth-3 circuits of size $2^{\omega(\sqrt{n})}$, i.e., beyond the current barrier \cite{hastad1993top}. Gurumukhani et al. \cite{GurumukhaniPPST24} recently proposed to investigate $\Sigma^k_3$ complexity of majority as an approach to determine its unrestricted depth-3 complexity. As the first non-trivial case they considered $k = 3$ and proved new lower bounds on the size of $\Sigma^3_3$ circuit computing the majority function. Their main result is the following combinatorial lemma.

\begin{lemma}
Let $F$ be an $n$-variate 3-CNF which rejects every assignment of Hamming less than $\lfloor n/2 \rfloor$. Then the number of satisfying assignments of $F$ of Hamming weight exactly $\lfloor n/2 \rfloor$ is at most $1.598^n$.
\end{lemma}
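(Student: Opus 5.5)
The plan is a Satisfiability-Coding-Lemma style encoding argument in the spirit of \cite{PaturiPZ99}, sharpened to use the Hamming-weight constraint. Let $m=\lfloor n/2\rfloor$ and let $A$ be the set of satisfying assignments of $F$ of weight exactly $m$.

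\textbf{Step 1: critical clauses.} Fix $x\in A$ and a coordinate $i$ with $x_i=1$. Flipping $x_i$ to $0$ gives an assignment of weight $m-1<\lfloor n/2\rfloor$, which $F$ rejects. So some clause $C_i$ of $F$ is satisfied by $x$ only through the positive literal $x_i$: its other (at most two) literals are false under $x$. Thus every $x\in A$ has $m$ critical clauses, one at each $1$-coordinate. The crucial difference from the general PPZ setting is that only $1$-coordinates are guaranteed to be isolated. The $0$-coordinates carry no clause information, so their savings must come from the weight constraint instead.

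\textbf{Step 2: encoding under a random order.} Assign each variable an independent uniform arrival time in $[0,1]$ and process variables in increasing time. When variable $i$ arrives, call it \emph{forced} if $x_i=1$ and both other variables of $C_i$ have already arrived. In that case the decoder, which knows $F$ and the already-revealed values, sees that $C_i$ has all its other literals false and infers $x_i=1$. The encoding of $x$ is the sequence of values of the unforced variables, written in arrival order.

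\textbf{Step 3: a weight-aware code length.} Instead of charging one bit per unforced variable, I will charge a description of the unforced string that exploits the known number of ones. If $u$ variables are unforced and $f$ of the $m$ ones were forced, the unforced string is a binary string of length $u$ with exactly $m-f$ ones. It can therefore be described with about $\log\binom{u}{m-f}$ bits, plus $O(\log n)$ bits to record $u$ and $f$.

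By the prefix-free/Kraft argument used in PPZ (equivalently, Jensen over the random order), for every fixed order the map $x\mapsto$ encoding is injective on $A$. Taking expectations then gives
\[
\log|A|\;\le\;\max_{x\in A}\E\Bigl[\log\tbinom{u}{m-f}\Bigr]+O(\log n).
\]
For a $1$-coordinate arriving at time $t$, the probability that it is forced is at least $t^2$, since both other variables of $C_i$ must arrive earlier. The worst case is when the two other variables are distinct, so the forced fraction concentrates around $\int_0^1 t^2\,dt=1/3$ of the ones. Concentration, via bounded differences in the arrival times, lets me replace $u$ and $f$ by their typical values.

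\textbf{Step 4 (main obstacle): computing the constant.} The crude bound is $n-\tfrac{1}{3}m\approx 0.833n$ bits, which gives only $1.78^n$. So I have to use the weight constraint as well, not just add to it. This is where I expect the real difficulty to lie.

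The first thing I would try is to encode \emph{time-stratified}. As the process runs, the decoder knows how many ones remain, and the density of remaining ones among remaining unforced variables drifts downward, because forced ones are removed faster late in time. I would charge each unforced variable arriving at time $t$ the binary entropy $H(p(t))$ of the current conditional density $p(t)$ of ones among unforced arrivals near time $t$, and integrate over $t$. This turns the bound into an explicit one-dimensional functional of the forcing profile $t\mapsto t^2$.

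I would then check numerically that this yields $\log_2 1.598\approx 0.676$ bits per variable. If the plain $t^2$ profile falls short, the next step is to exploit that a critical clause of a $0$-coordinate-adjacent variable cannot contain the negative literal of a $1$-coordinate in a satisfied way. This would restrict how the other two literals of $C_i$ relate to the weight, and would improve the forcing probability for late-arriving ones, in the spirit of the PPSZ refinement \cite{PaturiPSZ05}.
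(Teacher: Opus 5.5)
The paper does not prove this lemma; it quotes it from \cite{GurumukhaniPPST24}. Your proposal therefore has to stand on its own, and as written it does not reach the stated constant.

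Steps~1--3 are a sound PPZ-type setup, modulo two easy repairs. First, ``forced'' must be defined by the decoder's inference rule (any clause on $x_i$ whose other literals are already false), so that encoder and decoder agree on which bits are omitted. Second, bounded differences is not available, since one variable may lie in the critical clauses of $\Theta(n)$ ones. This is harmless: the cost $(a+b)H\bigl(\tfrac{b}{a+b}\bigr)$, with $a$ zeros, $b$ unforced ones and $H$ the binary entropy, is concave in $b$, so Jensen replaces concentration.

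The genuine gap is Step~4, the only place the number $1.598$ could come from. You leave it as ``check numerically'', and the check fails. With uniform arrival times, zeros arrive at rate $n/2$ and unforced ones at rate $\tfrac n2(1-t^2)$, so your time-stratified code costs
\[
n\int_0^1\Bigl[\bigl(1-\tfrac{t^2}{2}\bigr)\log_2\bigl(1-\tfrac{t^2}{2}\bigr)-\tfrac{1-t^2}{2}\log_2\bigl(1-t^2\bigr)+1-\tfrac{t^2}{2}\Bigr]\,dt\;\approx\;0.769\,n,
\]
which gives $|A|\le 2^{0.769n+o(n)}\approx 1.70^n$. This beats $\min(3^{n/2},2^{5n/6})\approx 1.732^n$; your static Step~3 code gives only $\tfrac56H(\tfrac25)\,n\approx 0.81\,n$. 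But it is far from $\log_2 1.598\approx 0.676$.

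No better bookkeeping of the same information can close this. The cost is increasing in $b$, and the profile $t^2$ is attained exactly when every one-coordinate $i$ has a unique critical clause $x_i\vee x_j\vee x_{j'}$ with $x_j,x_{j'}$ distinct zero-coordinates. Nothing in Steps~1--3 rules that configuration out.

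Your fallback does not supply the missing idea. PPSZ-style critical-clause trees cannot be grown through zero-coordinates. If $x_j=0$, then $x\oplus e_i\oplus e_j$ has weight $m$ and is not covered by the promise. So in the configuration above the tree has depth one and the forcing probability stays $t^2$.

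A proof of $1.598^n$ must exploit the promise beyond the $m$ single-flip neighbours $x\oplus e_i$, that is, use the fact that \emph{every} assignment of weight below $m$ is rejected. For example, in that configuration, suppose $C_i$ and $C_{i'}$ both contain $x_j$ (some zero must be shared, by counting incidences). Then $x\oplus e_i\oplus e_{i'}\oplus e_j$ has weight $m-1$ and still satisfies every critical clause. So $F$ must contain a further clause rejecting it, and by uniqueness of $C_i,C_{i'}$ that clause contains $\neg x_j$ or both $x_i$ and $x_{i'}$.

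It is this kind of second-order (or global) structure that has to be turned into extra forcing or extra counting savings, and the resulting optimization then actually solved. Until that is done, your argument establishes roughly $1.70^n$, not $1.598^n$.
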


The following lower bound is immediate.

\begin{corollary}\label{cor:majority-lb}
Let $f$ be any $n$-variate function that rejects all inputs of Hamming weight less than $\lfloor n/2 \rfloor$ and accepts at least $S$ assignments of Hamming weight $\lfloor n/2 \rfloor$. Then any $\Sigma^3_3$ circuit computing $f$ has size at least $S / 1.598^n$.
\end{corollary}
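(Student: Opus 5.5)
The plan is to reduce the corollary directly to the preceding lemma by the standard covering argument for $\Sigma^3_3$ circuits. Suppose $C$ is a $\Sigma^3_3$ circuit of size $s$ computing $f$. Its top gate is an OR of at most $s$ subcircuits $D_1,\dots,D_s$, each of which is an AND of ORs of fan-in at most $3$, that is, a $3$-CNF. Since $C$ computes $f$, every $D_i$ satisfies $D_i \le f$ pointwise. Moreover every input accepted by $f$ is accepted by at least one $D_i$.

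First, I use the hypothesis on $f$. Since $f$ rejects every input of Hamming weight less than $\lfloor n/2\rfloor$, each $D_i$ also rejects all such inputs. This is exactly the hypothesis of the lemma of Gurumukhani et al., so each $D_i$ accepts at most $1.598^n$ assignments of Hamming weight exactly $\lfloor n/2\rfloor$.

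Second, I count. The at least $S$ accepted inputs of $f$ of weight $\lfloor n/2\rfloor$ are covered by the union of the acceptance sets of $D_1,\dots,D_s$. Each of these sets contains at most $1.598^n$ such points, so $S \le s\cdot 1.598^n$, which gives $s \ge S/1.598^n$.

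The only point needing any care is the circuit normal form. The size of a $\Sigma^3_3$ circuit bounds the top fan-in, so the number of CNFs is at most $s$. Negations are pushed to the inputs, which keeps every bottom clause of width at most $3$ over literals. The lemma is stated for $3$-CNFs over literals, so this causes no issue. There is no real obstacle here: the result follows immediately from the lemma by a union bound.
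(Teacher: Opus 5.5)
Your proposal is correct and is exactly the argument the paper has in mind when it calls the corollary "immediate". Writing the $\Sigma^3_3$ circuit as an OR of at most $s$ $3$-CNFs, each of which inherits rejection of all low-weight inputs from $f$, and then applying the preceding $1.598^n$ lemma with a union bound over the weight-$\lfloor n/2\rfloor$ accepted inputs is all that is needed.
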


There are two main issues that make it difficult to render the technique of \cite{GurumukhaniPPST24} as natural. Firstly, the property given in \cref{cor:majority-lb} is not large; in fact with high probability a random function does not satisfy this property. Secondly, we can indeed artificially condition on rejecting small Hamming weight inputs when sampling a random function. Under this condition we would only need to approximately count the number of accepting inputs in the middle slice which can be done in $\AC_0$. Therefore, in this artificial sense, the proof is natural. However, unlike in the case of sampling a sparse random function, as we did in the previous two sections, it is not clear how to construct PRFs supported in $\AC_0$ for this distribution, without computing the majority function along the way.

\subsubsection{Depth-4 lower-bounds}\label{sec:GRSS}

G\"o\"os, Riazanov, Sofronova, and Sokolov \cite{goos2023top} proved a top-down lower-bound for depth-4 circuits computing Parity. The extra layer is treated using the \emph{spreadness lemma} \cite{Rao2020Coding}, and after this the argument is a non-trivial  variant of Håstad, Jukna and Pudl{\'a}k's depth-3 lower-bound. At a crucial juncture in the proof of Claim 4 (Lemma 4, page 7 of ArXiv version v2), they use the property that every input $x$ has sensitivity $S(f,x) = n$. This is the only time they use the fact that $f$ is parity, but of course, this property is the defining property of the parity function (or its negation). The GRSS proof shows that parity has neither $\Sigma_4$ nor $\Pi_4$ circuits. 
To begin, let us naturalize this proof to show that a random function from a biased distribution does not have $\Sigma_4$ circuits. By naturalizing this proof we will extract a distinguisher which distinguishes all easy functions from a random function of the appropriate bias. In the same way that we argued earlier, $\AC_0$-computable PRFs $F$ with bias $\Pr[f(x) = 1]=1-2^{-q}$ can be constructed from any unbiased PRF $F'$, by adding only one extra layer of OR gates on top. So the $\AC_0$-natural-proofs barrier also applies to biased functions.

Let $f$ denote a random function, where each output $f(x)$ is chosen independently with bias $\Pr[f(x) = 1] = 1 - \frac{1}{n^{1-\delta}}$ where $\delta=0.1$. Then, with high probability over the choice of $f$, it will happen for every $x \in f^{-1}(0)$, that the \emph{$2$-wise sensitivity} is high: 
\[
\forall x \in f^{-1}(0) \qquad\qquad
S_2(f,x) = \frac{|\{ z\in\binom{[n]}{2} \mid f(x \oplus z)\neq f(x)\}|}{\binom{n}{2}} \ge 1 - \frac{2}{n^{1-\delta}}.
\]
To see this, note for any fixed $x$ the expected number $Z$ of $2$-neighbours $x'$ with $f(x') = 0$ is precisely $\mu =\frac{1}{n^{1-\delta}} \binom{n}{2}\gg 6n$, and by a Chernoff bound,
\[
\Pr\left[S_2(f,x) < 1 - 2 \frac{1}{n^{1-\delta}}\right] = \Pr[Z > 2 \mu] \le \exp\left(-\frac{1}{3} \mu\right) = o( 2^{-2n} )
\]
and hence w.h.p. there will be no such $x \in f^{-1}(0)$. This means that, for every $x \in f^{-1}(0)$, there are fewer than $\frac{2}{n^{1-\delta}} \binom{n}{2}$ strings in $f^{-1}(0)$ at Hamming distance $2$ from $x$. Our natural property will be a small variation of this. We actually want the sensitivity to be high even if I have chosen some set $I$ of size $\le \frac{n}{10}$ (say), and am only considering $2$-neighbors which do not change any coordinate in $I$. Our $\AC_0$ natural property is then precisely the following:
\[
\forall x \in f^{-1}(0) \forall I \in \binom{[n]}{\le \frac{n}{10}} \qquad \qquad S_{2,I}(f,x) = \frac{|\{ z\in\binom{[n]\setminus I}{2} \mid f(x \oplus z)\neq f(x)\}|}{\binom{n-|I|}{2}} \ge 1 - \frac{2}{n^{1-\delta}}
\]
while $f$ also satisfies $|f^{-1}(0)| \geq 2^{n-o(n)}$. The same proof as above shows that it is large among random functions chosen with the bias $1 - \frac{1}{n^{1 - \delta}}$. It is in $\AC_0$ by just doing a large AND over all $x$ and $I$ and checking the sensitivity requirement, and doing an approximate counting of the $0$s in the truth table.

Now let us turn our attention to the GRSS proof. We assume familiarity with the proof and only detail what changes need to be made to the proof in order for it to work for any function with the above property. The sensitivity of parity is used only once, crucially, in the proof of Claim 4. At this point in the proof we found a maximal set $I$ violating the spreadness condition. If the indicator $x''$ of $I$ is in $f^{-1}(1)$, then the proof is done. Otherwise, the proof takes $i_0 \in [n]\setminus I$ having the highest probability of being $1$ among the elements of $B' = \{x \in A : x_I = 1_I\} - x''$. One argues that this probability is at least $\frac{m-I}{n-I}$, by noticing that $|\{(z,i_0) \mid z \in B', z_{i_0}=1\}| = |B'| (m - |I|)$, hence by the pigeonhole principle some $i_0 \in [n]\setminus I$ must appear in at least $\frac{1}{n-|I|}$ such pairs, and hence the probability that $z \in B'$ has $z_{i_0} = 1$ is at least $\frac{m-|I|}{n-|I|}$.

To naturalize this part of the argument, we need something slightly stronger, we need to show that there are many choices $\{i_0, i_1\}$ of \textit{two} coordinates that have reasonably high probability of being both set to $1$. Indeed, we can proceed greedily. Repeating the same argument we just did, now with a set of triples, we have $|\{(z,i_0,i_1) \mid z \in B', z_{i_0 i_1}=1^2\}| = |B'| \binom{m - |I|}{2}$, and hence there exists some choice $\{i^{(1)}_0, i^{(1)}_1 \}$ with $\Pr_{z \in B'}[z_{i^{(1)}_0 i^{(1)}_1} = 1^2] \ge \frac{\binom{m-|I|}{2}}{\binom{n-|I|}{2}}$. Now recall that $B'$ is $(1/p)^{(1-\eps)}$-spread, and hence $\Pr_{z \in B'}[z_{i^{(1)}_0 i^{(1)}_1} = 1^2] \le p^{2}$. So we have only removed $p^{2}$ mass out of our set of triples. This can be done $\frac{2}{p^{2}} \gg n^{1+\delta}$ times before half the mass has been removed. We then conclude that there are $\gg n^{1+\delta}$ ``high-probability pairs'' $\{i_0,i_1\}$ such that $\Pr_{z \in B'}[z_{i_0 i_1} = 1^2] \ge \frac{\binom{m-|I|}{2}}{2\binom{n-|I|}{2}}$. Since $x''$ has $2$-sensitivity $\ge 1 - \frac{2}{n^{1-\delta}}$, the number of distance-2 neighbors $x'$ of $x''$ with $f(x') = 1$ is at least $(1-\frac{2}{n^{1-\delta}})\binom{n}{2} = \binom{n}{2} - O(n^{1+\delta})$. It follows that one of these distance-2 neighbors is equal to $x'' \oplus 1_{i_0 i_1}$ for one of our ``high-probability'' pairs. We then set $I' = I \cup \{i_0, i_1\}$, and then the calculation proceeds as in the proof of Claim 4. The detailed changes needed are: (1) we only get $(1/p)^{1 - 3.1\eps}$-spreadness for $B$, so $2.1$ gets replaced with $3.1$; (2) the probability $\Pr_{x\sim A}[x_{I'}=1_{I'} \mid x_I = 1_I]$ is now at least $\frac{\binom{m-|I|}{2}}{2\binom{n-|I|}{2}} \ge p^2 \frac{1}{4}$; (3) so now, analogous to their proof, the $p^{-2\eps |J|}$ factor defeats the $p^{2\eps}$, and the $p^{-0.1\eps}$ factor defeats the $\frac{1}{4}$.

\medskip
We have then concluded that the GRSS proof is natural in the above sense (where largeness is with respect to biased functions), and that any barrier established for unbiased functions will also give us a barrier here. To finish, let us just remark that our proof shows that our biased function does not have $\Sigma_4$ circuits, but the original proof shows a simultaneous lower-bound for $\Sigma_4$ and $\Pi_4$. To correct for this, we can change our property so that we require large sensitivity with respect to $0$ among inputs whose first bit is $0$, and large sensitivity with respect to $1$ for inputs whose first bit is $1$. A peculiar feature of our depth-4 $\AC_0$-natural lower bound, which was not the case for the other naturalized variants we saw earlier, is that it does not show a lower bound for the originally studied function (in this case Parity).

\section{A weak barrier based on bounded independence}\label{sec:k-wise-independence}

Braverman's celebrated result states that polylogarithmic independent distributions fool $\AC_0$ circuits \cite{Braverman}. Here, we consider an improvement of it due to Tal \cite{tal2017tight}.

\begin{theorem}[Tal~\cite{tal2017tight}]\label{Braverman}
Every $k$-wise independent distribution over $\{0,1\}^n$ with $k \geq O(\log (S_\Phi/\epsilon))^{3d_\Phi+3}$, $\epsilon$-fools every distinguisher $\Phi: \{0,1\}^n \rightarrow\{0,1\}$ computed by size $\le S_\Phi$ circuits of depth $\le d_\Phi$.

In the case of $d_\Phi=2$, $k \geq O(\log S_\Phi \log (S_\Phi/\epsilon))$ suffices.
\end{theorem}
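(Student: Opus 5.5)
The plan follows the Bazzi--Razborov--Braverman template: sandwich the distinguisher between two low-degree real polynomials. Fix a depth-$d_\Phi$, size-$S_\Phi$ distinguisher $\Phi$. The goal is to construct real polynomials $p_\ell \le \Phi \le p_u$ (pointwise on $\ZO^n$) of degree at most $k$ with
\[
\E_{\Unif}[p_u - \Phi] \le \epsilon \quad\text{and}\quad \E_{\Unif}[\Phi - p_\ell] \le \epsilon .
\]
Once these exist, $k$-wise independence finishes the argument. Any $k$-wise independent $\mu$ agrees with $\Unif$ on every monomial of degree $\le k$, so
\[
\E_\mu[\Phi] \le \E_\mu[p_u] = \E_{\Unif}[p_u] \le \E_{\Unif}[\Phi] + \epsilon,
\]
and symmetrically from below. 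This reduction to constructing sandwiching polynomials is the standard first step; the real content is constructing them.

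To build the sandwichers I would follow Braverman's recipe, using Tal's sharper spectral bounds as ingredients. First, take a Razborov--Smolensky-style probabilistic approximation and fix it to get a degree-$r$, $r=(\log(S_\Phi/\epsilon))^{O(d_\Phi)}$, polynomial $g$ that agrees with $\Phi$ except on a set $E$ of measure $\le\epsilon'$. Importantly, $E$ is witnessed by a small $\AC_0$ circuit $\mathcal{E}$ of depth $d_\Phi+O(1)$, with $\mathcal{E}=1$ on $E$. Second, use the Fourier tail bound of Tal, \Cref{thm:tal}, applied to $\mathcal{E}$, to get an $\ell_2$-approximator $\tilde{\mathcal{E}}$ of degree $k'=O(\log S_\Phi)^{d_\Phi+O(1)}\log(1/\epsilon)$. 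Third, set
\[
f = g\,(1-\tilde{\mathcal{E}}),
\]
which is a one-sided approximator: $f=\Phi$ off $E$, and $f$ is small on average on $E$. Fourth, use Bazzi's lemma, which converts an $\ell_2$-approximator that vanishes where $\Phi=0$ into sandwiching polynomials, to produce $p_\ell,p_u$ of degree $\approx 2(r+k')$.

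Finally I would track the exponent. The degree of $g$ contributes $(\log)^{d_\Phi}$, and the $\ell_2$ error of $g$ on $E$ forces the tail degree of $\tilde{\mathcal{E}}$ up by another $(\log)^{d_\Phi}$ factor. Multiplying these layers with Tal's improvement of the tail bound is what should give the total $k=O(\log(S_\Phi/\epsilon))^{3d_\Phi+3}$. For $d_\Phi=2$ I would bypass all of this and invoke Bazzi/Razborov's direct DNF argument, refined by Tal, which gives $O(\log S_\Phi\log(S_\Phi/\epsilon))$.

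The main obstacle is the second and third steps. The error must be controlled in $\ell_2$ against $g$, whose values on $E$ can be as large as $2^{\tilde O(r)}$. Tal's tail bound must therefore be applied at a precision $\epsilon/2^{O(r)}$, and it is exactly this bookkeeping that fixes the exponent $3d_\Phi+3$. I would first try bounding $\|g\|_\infty$ by $2^{O(r\log S_\Phi)}$ and check that $\log$ of that is absorbed into one extra $\log^{d_\Phi}$ factor.
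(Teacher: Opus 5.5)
The paper gives no proof of this statement; it quotes it from Tal, whose argument is Braverman's with the sharper tail bound of \Cref{thm:tal} plugged in. Your outline follows that argument, but your fourth step fails as written. Bazzi's lemma needs $f$ to vanish at \emph{every} zero of $\Phi$, since that is what makes $1-(1-f)^2\le\Phi$. Your $f=g(1-\tilde{\mathcal E})$ vanishes only on $\Phi^{-1}(0)\setminus E$. On $E$, $g$ is uncontrolled, and $\tilde{\mathcal E}$ is merely $\ell_2$-close to $\mathcal E$, so it need not equal $1$ there. (Also, off $E$ you have $f=\Phi(1-\tilde{\mathcal E})$, not $f=\Phi$.)

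What you can legitimately conclude is that $f$ vanishes on the zeros of $\Phi\vee\mathcal E$. Bazzi then gives $\E_\mu[\Phi\vee\mathcal E]\ge\E_{\Unif}[\Phi]-O(\epsilon)$, and to get back to $\Phi$ you must subtract $\Pr_\mu[\mathcal E]$. Your plan has no bound on the error event under the $k$-wise independent $\mu$; you only know that $E$ is small under $\Unif$. You also cannot get the bound by arguing that $\mu$ fools $\mathcal E$, because $\mathcal E$ is a deeper circuit than $\Phi$, so that would be circular.

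The missing idea, which is Braverman's, is that the probabilistic polynomial errs on each \emph{fixed} input with probability at most $S_\Phi\cdot 2^{-\Omega(s)}$ over its internal randomness. You may therefore fix that randomness against the mixture $\tfrac12(\Unif+\mu)$. This makes $\mathcal E$ rare under $\Unif$ and under $\mu$ simultaneously, while $\|g\|_\infty$ and the depth of $\mathcal E$ are unaffected. Then $\E_\mu[\Phi]\ge\E_\mu[\Phi\vee\mathcal E]-\Pr_\mu[\mathcal E]$ finishes the lower bound, and running the same argument on $\neg\Phi$ gives the upper bound.

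Since $f$ now depends on $\mu$, you obtain the fooling statement separately for each $\mu$. You do not get a single explicit pair $p_\ell\le\Phi\le p_u$; such a pair exists only a posteriori, by LP duality. So your opening reduction has to be restructured around this per-$\mu$ argument.

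Your exponent count is also off. The three contributions are:
\begin{itemize}
\item $\deg g=(s\log S_\Phi)^{d_\Phi}$ with $s\approx\log(S_\Phi/\epsilon)$, which is already $O(\log(S_\Phi/\epsilon))^{2d_\Phi}$;
\item $\log\|g\|_\infty=O(\deg g\cdot\log S_\Phi)$, which adds one more logarithmic factor;
\item Tal's bound for the depth-$(d_\Phi+3)$ error circuit, which multiplies the required tail degree by $O(\log(S_\Phi/\epsilon))^{d_\Phi+2}$.
\end{itemize}
Together these give the exponent $2d_\Phi+1+(d_\Phi+2)=3d_\Phi+3$.
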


We will use polynomials over finite fields to generate such $k$-wise independent distributions locally.

\begin{restatable}{theorem}{altevalPoly} \label{evalPoly}
    Let $n, k, c_1, c_2 \in \mathbb{N}$. There exists a circuit of depth $c_1+c_2+2$ and size $2^{\tilde{O}(kn^2)^{1/c_1}+n^{2/c_2}}$ that on input $(\bar{a},y) \in (\ZO^n)^k \times \ZO^n$ outputs $p(y) \in \ZO^n$ where $p = \sum_{i=0}^{k-1} a_i z^i \in \F[z]$ and $y \in \F$.

    All the output gates of the constructed circuit are $\mathsf{AND}$ gates.
\end{restatable}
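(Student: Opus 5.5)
Fix a representation of $\F$ as $\mathbb{F}_2[z]/(q(z))$ with $q$ irreducible of degree $n$, so field elements are $n$-bit vectors and addition is XOR. Then $p(y)=\sum_{i<k} a_i y^i$ is linear over $\mathbb{F}_2$ in each $a_i$ once the powers $y^i$ are known. The idea is to split the computation into a \emph{power stage} that computes the bits of $y^0,\dots,y^{k-1}$ from $y$ (depth about $c_2$, giving the $2^{n^{2/c_2}}$ term), followed by a \emph{combination stage} that computes each output bit of $p(y)$ as a parity of products $a_{i,j}\cdot(y^i)_\ell$ (depth about $c_1$, giving the $2^{\tilde O(kn^2)^{1/c_1}}$ term). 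The two extra layers are an AND layer forming the products and the absorbing of the final layer, arranged so that output gates are ANDs.

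\textbf{Power stage.} Each bit of $y^i$ is a Boolean function of the $n$ bits of $y$. I would avoid writing $y^i$ as a low-depth arithmetic circuit and instead use a generic fact: any Boolean function on $m$ inputs has a depth-$c$ circuit of size $2^{O(m^{1/c'})}$ type. Concretely, evaluate the function by a decision-tree / lookup-table on blocks: split the $n$ input bits into $c_2$ blocks of $n/c_2$ bits, or more simply use that $y\mapsto (y^i)_\ell$ is an $\mathbb{F}_2$-polynomial of degree up to $n$, i.e.\ a parity of at most $2^n$ monomials. A cleaner route: $y\mapsto y^{2^j}$ is $\mathbb{F}_2$-linear (Frobenius), so each bit of $y^{2^j}$ is a parity of $\le n$ bits of $y$. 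Writing $i$ in binary, $y^i=\prod_{j} y^{2^j}$ over at most $\log k\le n$ factors (since $k\le 2^n$ is the only sensible regime). Multiplication of two field elements is bilinear; iterating gives each bit of $y^i$ as an $\mathbb{F}_2$-polynomial of degree $\le \log k$ in $n$ linear forms. I would then use the standard fact that parity of $m$ bits has depth-$c$ circuits of size $2^{O(m^{1/(c-1)})}$ together with CNF/DNF for small-arity pieces. I would instead simply treat each bit of $y^i$ as an arbitrary function of the $n$ bits and aim for a depth-$c_2$ circuit of size $2^{n^{2/c_2}}$. This should come from a recursive ``split into $n^{2/c_2}$-bit chunks and compute partial products by brute-force DNFs'' scheme, multiplying chunk contributions in $\F$ with bit-level parities implemented by brute-force CNF/DNF of size $2^{n^{2/c_2}}$ per layer-pair. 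I expect this stage to be the main obstacle: getting the exponent exactly $n^{2/c_2}$ needs care in how many factors are combined per level and how the quadratic ($n^2$) blow-up from field multiplication is absorbed. My first attempt would be: each field multiplication output bit is a parity of $n^2$ products of input bits, and one level handles a chunk of $n^{2/c_2}$-ary parity by a brute-force DNF.

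\textbf{Combination stage.} Given the bits of $a$ and of the powers $y^i$ (as both literals and negations, free via De Morgan), each output bit $\ell$ of $p(y)$ equals $\bigoplus_{i<k}\bigoplus_{j,j'} a_{i,j}(y^i)_{j'}\,M_{\ell,j,j'}$ where $M$ encodes multiplication mod $q$. This is a parity of at most $kn^2$ ANDs of two bits. One AND layer forms the products. Then a parity of $m=kn^2$ bits is computed in depth $c_1$ and size $2^{\tilde O(m^{1/c_1})}$ by the standard recursive brute-force-DNF construction, choosing the alternation so that the top gate is an AND (use CNFs at the top). Composing the power stage, the product layer, and the parity stage, and merging adjacent same-type layers, gives depth $c_1+c_2+2$. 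The total size is the sum of the two stage sizes times polynomial factors, which is within the claimed bound.
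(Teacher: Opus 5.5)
There is a genuine gap, and it is the stage you flag as the main obstacle. You never obtain a circuit for the bits of $y^i$ whose depth is independent of $k$ and whose size is as required, and none of your candidate routes can be completed.

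Expanding $(y^i)_\ell$ as an $\mathbb{F}_2$-polynomial of degree up to about $\log k$ gives a parity of up to $n^{\Theta(\log k)}$ monomials. A depth-$(c+1)$ implementation of that parity has size $2^{n^{\Theta(\log k)/c}}$, far beyond the target once $k$ grows. Treating $(y^i)_\ell$ as an arbitrary function of the $n$ bits of $y$ is hopeless, since by counting most $n$-bit functions need size $2^{\Omega(n)}$ at every depth. The chunking idea does not supply the missing structure either. The map $y\mapsto y^i$ is additive over chunks only when $i$ is a power of two, and expanding the product of the $t=O(\log k)$ Frobenius factors $y^{2^j}$ over $b$ chunks creates $b^{t}$ cross terms. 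Finally, iterating the bilinear multiplication map, sequentially or in a balanced tree, costs depth growing with $t$ or $\log t$. The statement requires depth exactly $c_1+c_2+2$.

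The missing idea is the evaluation/interpolation (discrete Fourier transform) trick of Healy and Viola, which is what the paper uses. View $a_i y^i$ as an iterated product of $t=O(\log k)$ polynomials in $\mathbb{F}_2[z]$ of degree less than $n$: $a_i$ itself and the Frobenius images $y^{2^j}$, each $\mathbb{F}_2$-linear in the input bits. The computation then runs in three steps.
\begin{enumerate}
\item Evaluate every factor at $(n-1)t+1$ nonzero points of the small field $\mathbb{F}_{2^m}$ with $m=\lceil\log(nt)\rceil$. This is linear, hence parities.
\item Multiply pointwise. Each output bit depends on only $mt=O(\log^2(nk))$ bits, so it is both a CNF and a DNF of size $2^{mt}$, which the $\tilde O$ absorbs.
\item Interpolate, reduce modulo the irreducible polynomial, and sum over $i$. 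All of this is linear again.
\end{enumerate}
All the non-linearity is thus confined to a single layer of small-arity functions. The whole map is a layer of parities of arity at most $kn^2t^2$, applied to functions of $mt$ bits, applied to parities of arity at most $n^2$.

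This shape is what makes the depth budget work. The bottom parities in depth $c_2+1$ give the $2^{n^{2/c_2}}$ term. The single top parity in depth $c_1+1$, with $\mathsf{AND}$ output gates, gives the $2^{\tilde O(kn^2)^{1/c_1}}$ term. The CNF/DNF layer merges into both neighbours, for total depth $c_1+c_2+2$.

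Your split into a power stage of depth about $c_2$, followed by an $\mathsf{AND}$ layer and a separate parity circuit, does not fit this budget. A power stage computed this way itself ends in a large parity (interpolation and reduction), which has to be merged into the top parity rather than stacked below it. Folding $a_i$ in as one of the $t$ factors, instead of multiplying it in afterwards bilinearly, is the simplest way to make that merge happen while keeping the top arity at $\tilde O(kn^2)$.

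Note also an off-by-one in your combination stage: by H{\aa}stad's lower bound, a parity of $m$ bits needs depth $c_1+1$, not $c_1$, to have size $2^{O(m^{1/c_1})}$.
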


The circuit construction uses ideas from \cite{HealyViola} to implement multiplication in $\F$. Their construction was assuming polynomial size and unbounded arity parity gates. We modify their construction for subexponential size with no parity gates. The proof can be found in Appendix \ref{sec:deferredproofs}.

The above gives us a natural approach to create a PRF, using the fact that the outputs of a degree $k-1$ polynomial chosen uniformly at random exhibit $k$-wise independence. For $x \in \F$ let $\mathrm{LSB}(x)$ be the least significant bit of the $n$-bit representation of $x$.

\begin{definition}[Distribution $\mu_{n,k}$]
    For $n \in \mathbb{N}$  and $k \leq 2^{n}$ define the distribution $\mu_{n,k}$ generated by the following process: Sample uniformly at random $\bar{a} \in \mathbb{F}^k_{2^{n}}$ and consider the polynomial $p(x)=\sum_{i=0}^{k-1} a_i x^i$. Output the truth table of $f(x)=\mathrm{LSB}(p(x))$.
\end{definition}

\begin{claim} \label{claim:mukwise}
    The distribution $\mu_{n,k}$ defined above is a $k$-wise independent distribution over $\{0,1\}^{2^{n}}$.
\end{claim}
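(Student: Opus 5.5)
The plan is to use the classical fact that evaluations of a uniformly random polynomial of degree less than $k$ over a field at $k$ distinct points are independent and uniform. Then we pass to the least significant bit, which is a linear projection of each field element onto $\ZO$.

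\textbf{Step 1: the evaluation map is a bijection.} Fix any $k$ distinct points $x_1,\dots,x_k \in \F$; this is possible because $k \le 2^n$. Consider the map $\bar a=(a_0,\dots,a_{k-1}) \mapsto (p(x_1),\dots,p(x_k))$ from $\F^k$ to $\F^k$. It is $\F$-linear, and its matrix is the Vandermonde matrix $V=(x_j^{i})_{j\in[k],\,0\le i<k}$. Since the $x_j$ are distinct, $\det V=\prod_{j<j'}(x_{j'}-x_j)\neq 0$, so the map is a bijection. Equivalently, Lagrange interpolation shows that for every target tuple $(b_1,\dots,b_k)$ there is exactly one polynomial of degree $<k$ with $p(x_j)=b_j$. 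Because $\bar a$ is uniform on $\F^k$, the vector $(p(x_1),\dots,p(x_k))$ is therefore uniform on $\F^k$. In particular its coordinates are independent and each is uniform on $\F$.

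\textbf{Step 2: projecting to one bit.} Write each field element through its $n$-bit representation, which identifies $\F$ with $\ZO^n$ as sets. Under this identification $\mathrm{LSB}$ is a coordinate projection, so exactly $2^{n-1}$ elements of $\F$ have $\mathrm{LSB}=0$ and $2^{n-1}$ have $\mathrm{LSB}=1$. Hence $\mathrm{LSB}$ of a uniform field element is a uniform bit. Applying a fixed function coordinatewise to independent random variables keeps them independent. So $(f(x_1),\dots,f(x_k))$ is uniform on $\ZO^k$.

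\textbf{Step 3: conclusion.} Every set of at most $k$ coordinates of the truth table $\{0,1\}^{2^n}$ can be padded to exactly $k$ distinct points. The marginal on any $k$ distinct coordinates is uniform, so the marginal on any subset of those coordinates is also uniform. This is precisely $k$-wise independence of $\mu_{n,k}$.

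There is no real obstacle here. The only point needing care is that Step 1 requires the points to be distinct, which is why the definition of $\mu_{n,k}$ assumes $k\le 2^n$. The other point is that the projection in Step 2 must be balanced, which holds because the field's bit representation is a bijection with $\ZO^n$.
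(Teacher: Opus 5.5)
Your proof is correct and follows essentially the same route as the paper: the evaluation map at $k$ distinct points is invertible (the paper cites interpolation, you add the Vandermonde determinant), so the evaluations are uniform on $\mathbb{F}_{2^n}^k$, and the least significant bits are a marginal of the uniform bit representation. Your padding remark for sets of fewer than $k$ coordinates spells out a detail the paper leaves implicit.
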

\begin{proof}

    Let $F = \{c_1,\dots,c_k\} \subseteq \mathbb{F}_{2^{n}}$ be of size exactly $k$.
    The linear map from coefficients to evaluations in $F$, $(a_1,\dots,a_k) \mapsto (p(c_1), \dots, p(c_k) )$, is a bijection because the map can be inverted uniquely via polynomial interpolation.
    Since the coefficients of the polynomial, $(a_1,\dots,a_k)$, are uniformly distributed over $\mathbb{F}^k_{2^{n}}$, we must have that $(p(c_1), \dots, p(c_k))$ are uniformly distributed over $\mathbb{F}^k_{2^{n}}$. And their representations as $kn$ bit strings is hence also uniformly distributed. Since their least significant bits are a marginal, they are also uniform. Hence every $k$-sized marginal of $\mu_{n,k}$ is uniformly distributed over $\{0,1\}^k$.    
\end{proof}

\begin{claim}\label{claim:muDSFG}
    For any $c_1,c_2 \in \mathbb{N}$ the distribution $\mu_{n,k}$, defined as above, is an $(n, d, S)$-function generator with $d=c_1+c_2+2$ and $S=2^{\tilde{O}(kn^2)^{1/c_1}+n^{2/c_2}}$.
\end{claim}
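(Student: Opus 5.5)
The plan is to exhibit the ``uniform'' circuit required by the definition of a function generator directly from \Cref{evalPoly}. The seed wires will be the $kn$ bits encoding the coefficient vector $\bar a = (a_0,\dots,a_{k-1}) \in \F^k$, and the $n$ input wires will carry $y \in \ZO^n$, read as an element of $\F$ via the same fixed basis used to define $\mathrm{LSB}$. \Cref{evalPoly} then gives a circuit of depth $c_1+c_2+2$ and size $2^{\tilde{O}(kn^2)^{1/c_1}+n^{2/c_2}}$ that outputs all $n$ bits of $p(y)$. We keep only the output gate corresponding to the least significant bit and delete every gate from which that gate is not reachable. Neither depth nor size increases. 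The resulting single-output circuit computes $(\bar a, y) \mapsto \mathrm{LSB}(p(y)) = f(y)$.

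Next I would check that the distributions agree. If the seed wires are set to a uniformly random string in $\ZO^{kn}$, then $\bar a$ is uniform on $\F^k$, because the encoding is a bijection. The function of $y$ computed by the remaining circuit is then exactly $f(x) = \mathrm{LSB}(p(x))$ from the definition of $\mu_{n,k}$. So the induced distribution on truth tables is precisely $\mu_{n,k}$, and every function in its support lies in $\cF(n,d,S)$ with $d=c_1+c_2+2$ and $S = 2^{\tilde{O}(kn^2)^{1/c_1}+n^{2/c_2}}$.

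Almost all of the real work, namely implementing field multiplication and powering in constant depth, is already contained in \Cref{evalPoly}, so there is no substantial obstacle here. The only points needing care are bookkeeping. First, the bit representation used in $\mu_{n,k}$ must be the same as the one used by the circuit of \Cref{evalPoly}; I would fix a single basis of $\F$ over $\mathbb{F}_2$ for both. Second, one should confirm that restricting to one output does not break any depth convention. Since \Cref{evalPoly} states that all output gates are $\mathsf{AND}$ gates, the retained output is a genuine gate at the top level, and the depth bound $c_1+c_2+2$ carries over unchanged.
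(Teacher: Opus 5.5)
Your proposal is correct and follows the same route as the paper. The paper simply observes that $\mu_{n,k}$ is supported on truth tables of circuits that evaluate a degree-$(k-1)$ polynomial and extract the last bit, then invokes \Cref{evalPoly}. Your extra bookkeeping (the $kn$ coefficient bits as uniform seed wires, pruning to the LSB output gate, and fixing one basis for both) spells out the uniformity that the paper's definition of a function generator requires.
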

\begin{proof}
    For any  $n,k$ the distribution $\mu_{n,k}$  is supported on the truth tables of circuits that take an input in $\mathbb{F}_{2^{n}}$, evaluate a degree $k-1$ polynomial on that input, and extract the last bit. The claim directly follows from \Cref{evalPoly}.
\end{proof}

We can now put it all together and get our constant-depth PRF.

\begin{theorem}\label{thm:muPRF}
    For any $c_1,c_2 \in \mathbb{N}$ there exists $d=c_1+c_2+2$, $S=2^{\tilde{O}(\log(S_\Phi/\epsilon)^{3d_\Phi+3}n^2)^{1/c_1}+n^{2/c_2}}$ such that the distribution $\mu_{n,k}$, for an appropriate $k$, is a $(n,d,S,d_\Phi,S_\Phi,\epsilon)-$pseudorandom function generator.

    When $d_{\Phi}=2$, the bound on $S$ improves to $2^{\tilde{O}(\log(S_\Phi)\log(S_\Phi/\epsilon)n^2)^{1/c_1}+n^{2/c_2}}$.
\end{theorem}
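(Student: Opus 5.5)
The proof is a direct combination of three earlier results: Tal's bounded-independence theorem (\Cref{Braverman}), the $k$-wise independence of $\mu_{n,k}$ (\Cref{claim:mukwise}), and the circuit implementation of $\mu_{n,k}$ (\Cref{claim:muDSFG}). The plan is to choose $k$ just large enough for \Cref{Braverman} to apply to distinguishers of depth $d_\Phi$ and size $S_\Phi$ on $N = 2^n$ input bits, and then read off the size of the generator.

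\textbf{Step 1: choice of $k$.} I set
\[
k := \left\lceil C\cdot \log(S_\Phi/\epsilon)^{3d_\Phi+3}\right\rceil,
\]
where $C$ is the constant hidden in the $O(\cdot)$ of \Cref{Braverman}. The constant $C^{3d_\Phi+3}$ can be absorbed into $\log(S_\Phi/\epsilon)$ up to the $\tilde O$, or simply carried along. When $d_\Phi = 2$ I instead take $k := \lceil C\log S_\Phi \log(S_\Phi/\epsilon)\rceil$.

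I also need $k \le 2^n$ so that $\mu_{n,k}$ is defined. If instead $k > 2^n$, then $S \ge 2^{\tilde O(k n^2)^{1/c_1}} \ge 2^{2^{n/c_1}}$. In that regime the generator may simply output a uniformly random truth table: any function on $n$ bits has a depth-$2$ (hence depth-$d$, after padding) circuit of size $2^{O(n)}$, which is far below $S$. Alternatively, I can just state the theorem for $k \le 2^n$. Either way this case is a triviality.

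\textbf{Step 2: fooling.} By \Cref{claim:mukwise}, the truth table of $f \sim \mu_{n,k}$ is a $k$-wise independent distribution over $\ZO^{2^n}$. A distinguisher $\Phi \in \cF(2^n, d_\Phi, S_\Phi)$ is exactly a depth-$\le d_\Phi$, size-$\le S_\Phi$ circuit on these $N$ bits. \Cref{Braverman} therefore gives
\[
\left|\Pr_{f\sim\mu_{n,k}}[\Phi(f)] - \Pr_{f\sim\Unif}[\Phi(f)]\right| \le \epsilon .
\]
The definition of fooling uses a strict inequality. I handle this by running the argument with $\epsilon/2$, which changes only constants inside the logarithm.

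\textbf{Step 3: complexity.} By \Cref{claim:muDSFG}, $\mu_{n,k}$ is a uniform $(n,d,S)$-function generator with $d = c_1+c_2+2$ and $S = 2^{\tilde O(kn^2)^{1/c_1}+n^{2/c_2}}$. Substituting the value of $k$ from Step 1 yields both the general bound and the improved $d_\Phi = 2$ bound stated in the theorem.

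\textbf{Main obstacle.} There is no real difficulty here; the only care needed is bookkeeping. Concretely:
\begin{itemize}
\item the polynomial-evaluation circuit of \Cref{evalPoly} outputs all $n$ bits of $p(y)$, and extracting $\mathrm{LSB}$ is just selecting one output wire, so it costs no extra depth;
\item the seed $\bar a$ consists of the circuit's extra seed wires, as required by the uniformity in the definition of a function generator;
\item the constant and polylog factors in $k$ are absorbed into the $\tilde O$.
\end{itemize}
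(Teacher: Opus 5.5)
Your proposal is correct and follows the paper's own proof: $k$-wise independence of $\mu_{n,k}$ (\Cref{claim:mukwise}), Tal's theorem to fix $k$, and substitution of that $k$ into the circuit bound of \Cref{claim:muDSFG}. Your extra bookkeeping (strict inequality, LSB extraction, the case $k>2^n$) goes beyond what the paper writes down. In the case $k>2^n$ you can even stay inside the family by using $\mu_{n,\min(k,2^n)}$: $\mu_{n,2^n}$ is exactly uniform on truth tables, and the size bound is monotone in $k$.
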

\begin{proof}
By \Cref{claim:mukwise} we know that $\mu_{n,k}$ is $k-$wise independent.
\Cref{Braverman} gives us bounds on $k$ such that the PRF $\mu_{n,k}$ $\epsilon$-fools every distinguisher computed by size $\le S_\Phi$ circuits of depth $\le d_\Phi$. Substituting this value of $k$ in \Cref{claim:muDSFG} gives a circuit of the required size and depth computing $\mu_{n,k}$.
\end{proof}

\Cref{lem:prfcorollary} directly tells us that for any $c \in \mathbb{N}$ there is no $(n,d,S,d_\Phi,S_\Phi,\epsilon)-$natural proof where $d,S$ are as above. We now see what parameters $d_\Phi,S_\Phi$ are needed to get a $2^{n^\eps}$ lower bound for depth $d$ circuits.

\begin{corollary}\label{thm:weak-unconditional-barrier}
    Let $d \in \mathbb{N},d\geq 5$ and $\eps \geq 2/(d-4)$ be a fixed constant. Let $\Phi = \{\Phi_n : \Q^{2^n} \rightarrow \Q\}$ be a family of $(n, d, 2^{n^\eps}, d_{\Phi}, S_{\Phi}, 1/2)$-natural properties. If $S_\Phi \leq 2^{n^\alpha}$, then $\eps \leq (3\alpha(d_\Phi+1)+4)/(d-3)$.
\end{corollary}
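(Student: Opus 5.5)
We argue by contraposition via \Cref{lem:prfcorollary}. Suppose $\eps > (3\alpha(d_\Phi+1)+4)/(d-3)$. We will build, for all large $n$, an $(n,d,2^{n^\eps},d_\Phi,S_\Phi,1/2)$-pseudorandom function generator from \Cref{thm:muPRF}. By \Cref{lem:prfcorollary}, this rules out the natural property $\Phi_n$. Note that the definition of natural property needs largeness $p = 1/2$, and the generator has error $\epsilon = 1/2$; the strict inequality in the fooling definition lines up with the argument in \Cref{lem:prfcorollary}.

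First we fix the parameters of \Cref{thm:muPRF}. Take $\epsilon = 1/2$ and $S_\Phi \le 2^{n^\alpha}$. Then $\log(S_\Phi/\epsilon) = O(n^\alpha)$, and the required independence is $k = O(n^{\alpha})^{3d_\Phi+3} = O(n^{3\alpha(d_\Phi+1)})$. Substituting gives a generator of depth $c_1+c_2+2$ and size
\[
2^{\tilde O(n^{3\alpha(d_\Phi+1)+2})^{1/c_1} + n^{2/c_2}}.
\]
We must have $c_1+c_2+2 \le d$. Depth can be padded upward trivially, so it suffices to have depth at most $d$.

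Next we split the depth. Put $A = 3\alpha(d_\Phi+1)+2$. The two exponents are roughly $A/c_1$ and $2/c_2$, and we choose $c_1+c_2 = d-2$ to make both at most $\eps$. This requires $c_1 \ge A/\eps$ and $c_2 \ge 2/\eps$, with both integers. A convenient choice is $c_1 = \lceil A/\eps\rceil$ (plus one extra unit to absorb the $\tilde O$) and $c_2 = d-2-c_1$.

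The rounding is where the constant comes from. Feasibility needs $\lceil A/\eps\rceil + \lceil 2/\eps\rceil \le d-2$, up to slack. Bounding each ceiling by the quantity plus $1$, it suffices that $(A+2)/\eps \le d-4$; the additional $+1$ of slack for the polylog factor pushes this to $d-3$ once the strict inequality is used. So the condition becomes $\eps > (A+2)/(d-3) = (3\alpha(d_\Phi+1)+4)/(d-3)$, which matches the statement. The hypotheses $d \ge 5$ and $\eps \ge 2/(d-4)$ guarantee $c_2 \ge 1$ and let the arithmetic go through.

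The main obstacle is the $\tilde O$ and polylog factors. When $A/c_1 < \eps$ strictly, the term $\tilde O(n^A)^{1/c_1} = n^{A/c_1}\,\mathrm{polylog}(n)$ is at most $n^\eps/2$ for large $n$. In the same way, $n^{2/c_2} \le n^\eps/2$ requires $c_2 \ge 2/\eps$. We need to check that the strict inequality on $\eps$ leaves room for strict inequality in at least the $c_1$ term. If both ceilings happen to be tight, we spend the one unit of slack between $d-4$ and $d-3$ on increasing $c_1$. Then the total size is at most $2^{n^\eps}$, the distribution is supported in $\cF(n,d,2^{n^\eps})$, and \Cref{lem:prfcorollary} gives the contradiction.
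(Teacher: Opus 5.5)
Your overall route is the paper's. You instantiate the bounded-independence generator of \Cref{thm:muPRF} with error $1/2$ and $k=O(n^{\alpha})^{3d_\Phi+3}$, split the depth as $c_1+c_2+2=d$, and invoke \Cref{lem:prfcorollary}; arguing the contrapositive rather than extracting a necessary condition makes no difference. However, the rounding step, which is essentially the whole content of this corollary, is argued incorrectly, in three places.

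\begin{itemize}
\item $c_2\ge 2/\eps$ does not suffice. If $c_2=2/\eps$, the $n^{2/c_2}$ term in the exponent already equals $n^\eps$, so the generator's size exceeds $2^{n^\eps}$. You need $c_2>2/\eps$ strictly, just as for $c_1$.
\item Your account of the denominator $d-3$ runs backwards. Charging both ceilings $+1$ gives $d-4$. Reserving further slack for polylog factors can only make the requirement on $\eps$ more demanding (toward $d-5$); it cannot relax it to $d-3$.
\item The literal choice $c_1=\lceil A/\eps\rceil+1$ can overshoot $A/\eps$ by nearly two units and become infeasible. Take $\alpha=1$, $d_\Phi=2$ (so $A=11$), $d=20$ and $\eps=0.78>13/17$. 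This gives $c_1=16$ and $c_2=2<2/\eps$.
\end{itemize}

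The fix is to round only one of the two parameters. Take $c_1=\lfloor A/\eps\rfloor+1$, the least integer strictly above $A/\eps$. Then $c_1\le A/\eps+1$, so $c_2=d-2-c_1\ge d-3-A/\eps>2/\eps$ directly from the assumption $\eps>(A+2)/(d-3)$. Consequently:
\begin{itemize}
\item both exponents $A/c_1$ and $2/c_2$ are strictly below $\eps$, so the polylog factors are absorbed for large $n$;
\item $c_2\ge 1$ comes for free, so the hypothesis $\eps\ge 2/(d-4)$ is not even used.
\end{itemize}
The single unit lost to rounding is exactly where $d-3$ comes from.

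The paper's proof is the mirror image. It rounds $c_2$, sets $c_1=d-2-c_2\ge d-3-2/\eps$, and rearranges the necessary condition $A/c_1\ge\eps$. Note that the paper's choice $c_2=\lceil 2/\eps\rceil$ has the same tie problem when $2/\eps$ is an integer. Using $c_2=\lfloor 2/\eps\rfloor+1$ repairs it without changing the bound.
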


\begin{proof}
    Set $c_2 = \lceil2/\eps\rceil \leq 2/\eps+1$. Now $c_1 = d-c_2-2\geq d-2/\eps-3$. We know such a choice is possible because $\eps \geq 2/(d-4)$.
    
    If $S_\Phi \leq 2^{n^\alpha}$, then the maximum lower bound provable is at most $2^{\tilde{O}(n^{(\alpha(3d_\Phi+3)+2)/c_1})+n^{2/c_2}}$. For $2^{n^\eps}$ to be at most this bound we need $2/c_2 \geq \eps$ or $(\alpha(3d_\Phi+3)+2)/c_1 \geq \eps$. Since we ensured the former does not hold it must be the latter that holds. Simplifying it we get $\eps \leq (3\alpha(d_\Phi+1)+4)/(d-3)$.
\end{proof}

In other words the best depth-$d$ lower bound you can hope to prove with a natural property of depth $d_\Phi$ and size $2^{n^\alpha}$ is of the form $2^{n^{O(\alpha d_\Phi/d)}}$. Hence in order to prove natural lower bounds for large subexponential sizes (i.e. $2^{n^{\beta/d}}$ for a large beta) it is necessary that the depth of the property $d_\Phi$ increases to a larger constant or the parameter $\alpha$ in the size becomes larger. In the next section we show a stronger statement: even increasing to larger constant depths will not help.

\section{A stronger barrier based on a derandomized Switching Lemma}\label{sec:unconditional}

We will now show PRFs of constant depth $d$ and subexponential size (of the form $2^{n^{O(1/d)}}$) that fool all of $\AC_0$. It follows that no $\AC_0$-natural property can get depth-$d$ size lower bounds better than $2^{n^{\Omega(1/d)}}$. In fact, this is true even for properties computable by super-constant depth $(\log n)^{\Omega(1)}$ and quasipolynoimal size $2^{n^{O(1)}}$! See \Cref{thm:unconditional-natural-proofs} for a formal statement with the precise bounds.

At a high level our PRF is built in a similar manner to the Ajtai-Wigderson~\cite{ajtai1985deterministic} PRG that fools $\AC_0$ circuits. Our construction uses two main ingredients: a PRG that fools small-depth decision trees, and a distribution on pseudorandom restrictions that collapses constant depth circuits to small-depth decision trees. These ingredients are combined using the ``Ajtai-Wigderson framework'' as described in the survey by Hatami and Hoza~\cite{HatamiH24}. We state the framework adapted to our use case below.

In the rest of this section we will be using $N$ to represent both $2^n$ and we will overload notation to also let $[N]$ represent $\Q^n$. The notation $[u \in W]$ represents a bit that is $1$ if and only if $u \in W$.

\begin{lemma}\label{lemma:aw}[The Ajtai-Wigderson Framework, adapted]
    Let $\calC$ be a class of circuits closed under restrictions and $\calC'$ be a class of circuits perfectly fooled by a PRG $G'$.
    Let $\mathcal{W}$ be a distribution on subsets of $[N]$ such that
    \begin{itemize}
        \item for all $u\in [N]$, $\Pr_{W \sim \mathcal{W}}[u \in W] \leq 1-p$, and
        \item for all $C:\Q^N \to \Q \in \calC$, $\Pr_{W \sim \mathcal{W},X \sim \ZO^N}[C|_{W \mapsto X} \in \calC'] \geq 1-\epsilon^*$.
    \end{itemize}
    Let $\ell = 1/p \cdot \ln (N/\epsilon^*)$. Then there is a PRG $G$ that $(\ell+1)\epsilon^*$-fools $\calC$.

    Furthermore if every string $X$ generated by $G'$ has a depth-$d$ size-$s$ circuit computing $u \mapsto X_u$, and every set $W$ generated by $\mathcal{W}$ has a depth-$d$ size-$s$ circuit computing $u \mapsto [u \in W]$, then for every output $Z$ of $G$ there is a depth-$(d+2)$ size-$O(s\ell)$ circuit computing $u \mapsto Z_u$.
\end{lemma}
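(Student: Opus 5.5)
We will construct $G$ by the standard iterated-restriction procedure. Sample independent sets $W_1,\dots,W_\ell \sim \mathcal{W}$ and independent outputs $Y_1,\dots,Y_\ell$ of $G'$. In the $i$-th round, the coordinates still unset are filled on $[N]\setminus W_i$ with values from $Y_i$, while the coordinates in $W_i$ stay free. After $\ell$ rounds, any coordinate still free is set to $0$. Concretely, the output is
\[
Z_u = \bigvee_{i=1}^{\ell} \Bigl( [u\notin W_i] \wedge (Y_i)_u \wedge \bigwedge_{j<i} [u \in W_j] \Bigr).
\]
That is, $Z_u$ takes the value of $Y_i$ at the first round $i$ with $u\notin W_i$, and is $0$ if there is no such round.

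\textbf{Security.} We first replace the deterministic "fill with $0$" at the end by uniform bits. The probability that some $u$ lies in all $W_i$ is at most $N(1-p)^\ell \le N e^{-p\ell} = \epsilon^*$ by the choice of $\ell$ and a union bound. Hence this replacement changes the acceptance probability of any test by at most $\epsilon^*$.

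Next we run a hybrid argument over rounds. Hybrid $H_k$ uses the pseudorandom $Y_i$ in the first $k$ rounds and then fills all remaining free coordinates uniformly. $H_0$ is uniform, and $H_\ell$ is $G$ modified as above. To compare $H_{k-1}$ and $H_k$, fix $W_1,\dots,W_{k-1}$ and $Y_1,\dots,Y_{k-1}$. The restricted circuit $C' = C|_{\text{rounds }<k}$ lies in $\calC$ by closure under restrictions. In $H_{k-1}$, round $k$ fills the free coordinates outside $W_k$ uniformly and the rest uniformly; in $H_k$ it fills coordinates outside $W_k$ with $Y_k$ and coordinates inside $W_k$ uniformly.

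Swap the roles: fix the values on $W_k$ first as $X$. With probability at least $1-\epsilon^*$ over $(W_k,X)$, the circuit $C'|_{W_k\mapsto X}$ lies in $\calC'$. There it is perfectly fooled by $G'$, so replacing uniform bits by $Y_k$ on the complement does not change its acceptance probability. Since $G'$ is a generator on all $N$ coordinates, we apply it to the full string and read the coordinates we need; coordinates already fixed are simply ignored because the restricted circuit does not depend on them. Each hybrid step therefore costs at most $\epsilon^*$. Summing the $\ell$ steps and the final $\epsilon^*$ gives $(\ell+1)\epsilon^*$.

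\textbf{Circuit complexity.} Each $[u\in W_i]$ is a depth-$d$ size-$s$ circuit. Its negation $[u\notin W_i]$ can be obtained at no extra depth by pushing negations to the inputs with De Morgan, or else absorbed at the cost of one level. Each $(Y_i)_u$ is a depth-$d$ size-$s$ circuit.

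The formula for $Z_u$ is an OR of ANDs of these subcircuits, so it adds two levels, giving depth $d+2$. The size is $O(\ell^2 s)$ if each $[u\in W_j]$ is copied naively. We avoid this by reusing one copy of each subcircuit across all terms, since circuits allow fan-out. That gives size $2\ell s + O(\ell^2)$ for the gates and wires, and with wires not counted this is $O(s\ell)$.

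The main obstacle is this depth accounting. We must make sure negation of the $W$-indicator and the AND/OR layers collapse to exactly two extra levels, for example by merging the top AND with the top layer of the subcircuits when those are AND gates, in the spirit of the "output gates are $\mathsf{AND}$" remark in \Cref{evalPoly}.
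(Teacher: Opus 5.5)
Your proposal is correct and follows the paper's proof. It uses the same iterated-restriction generator (fill $\overline{W_i}$ from $Y_i$, then $0$ at the end) and the same OR-of-ANDs circuit for $Z_u$ with negations pushed to the inputs. The only difference is that you write out the hybrid argument: an $\epsilon^*$ loss from the final fill via $N(1-p)^\ell\le\epsilon^*$, plus $\epsilon^*$ per round using closure under restrictions and perfect fooling of $\calC'$. The paper instead cites this argument from the Hatami--Hoza survey. The depth accounting you flag as the main obstacle is actually immediate: one AND layer and one OR layer over depth-$d$ subcircuits give depth $d+2$ with no gate merging.
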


\begin{proof}
    The PRG $G$ is constructed as follows.
    Sample $W^{(1)},\dots,W^{(\ell)}$ independently from $\mathcal{W}$ and $X^{(1)},\dots,X^{(\ell)}$ independently from $G'$. Let $\rho_i$ denote the restriction $\overline{W^{(i)}} \mapsto X^{(i)}$ and $\rho_{\ell+1} := [N] \mapsto 0^N$. Output $Z \in \ZO^N$ defined by the restriction $\rho_1\rho_2\cdots\rho_{\ell}\rho_{\ell+1}$.

    The correctness of the PRG is covered in the survey~\cite[Theorem 5.21]{HatamiH24}. The circuit construction for computing $u \mapsto Z_u$, assuming we have computed each $X^{(i)}_u$ and each $[u \in W^{(i)}]$, is shown in~\Cref{fig:AW}. Since $Z_u = X^{(i^*)}_u$ for the first index $i^* \in [\ell+1]$ such that $u \not\in W^{(i^*)}$, the circuit computes $X^{(i)} \wedge (\wedge_{j < i} [u \in W^{(j)}]) \wedge [u \not\in W^{(i)}]$. Finally an $\mathsf{OR}$ is taken over all $i$. The negations do not add to the depth of the circuit since they can be pushed to the inputs with only a blowup of $2$ in the overall size of the circuit.
\end{proof}

\tikzstyle{and}=[fill=white, draw=black, shape=circle]
\tikzstyle{generator}=[fill=white, draw=black, shape=rectangle,minimum width=0.8cm,minimum height=0.8cm,inner sep=5pt]

\tikzstyle{new edge style 0}=[-, line width=2.5pt]
\tikzstyle{new edge style 1}=[-, dashed]
\tikzstyle{dotstyle}=[-, dotted]

\begin{figure}
    \centering
\begin{tikzpicture}
	\begin{pgfonlayer}{nodelayer}
		\node [style=none] (0) at (-9, -3) {};
		\node [style=none] (1) at (-7.3, -3) {};
		\node [style=none] (2) at (-5.475, -3) {};
		\node [style=none] (3) at (-3.75, -3) {};
		\node [style=none] (4) at (-2, -3) {};
		\node [style=none] (5) at (-0.025, -3) {};
		\node [style=none] (6) at (-7.3, -2) {};
		\node [style=none] (7) at (-3.75, -2) {};
		\node [style=and] (9) at (-8.5, 0.25) {$\wedge$};
		\node [style=and] (10) at (-5, 0.25) {$\wedge$};
		\node [style=and] (11) at (-1.7, 0.25) {$\wedge$};
		\node [style=and] (12) at (-5, 1.5) {$\vee$};
		\node [style=and] (13) at (-7.25, -0.75) {$\neg$};
		\node [style=and] (14) at (-2.725, -1) {$\neg$};
		\node [style=none] (19) at (-5, 2.5) {};
		\node [style=none] (20) at (-9, -3.5) {$X^{(1)}_u$};
		\node [style=none] (23) at (-7.3, -3.5) {$[u \notin W^{(1)}]$};
		\node [style=none] (25) at (-5.475, -3.5) {$X^{(2)}_u$};
		\node [style=none] (27) at (-3.75, -3.5) {$[u \notin W^{(2)}]$};
		\node [style=none] (28) at (-2, -3.5) {$X^{(3)}_u$};
		\node [style=none] (29) at (0.025, -3.5) {$[u \notin W^{(3)}]$};
		\node [style=none] (33) at (-6.25, 1.5) {};
		\node [style=none] (34) at (-2.7, 1.5) {};
		\node [style=none] (37) at (-0.225, 0.25) {$\dots$};
		\node [style=none] (39) at (0.75, -2) {$\dots$};
		\node [style=none] (42) at (-2.375, 1.35) {$\ell$};
		\node [style=none] (43) at (1.5, -3.5) {$\dots$};
		\node [style=none] (44) at (-4.5, 2) {};
		\node [style=none] (45) at (-4.5, 2.25) {$Z(u)$};
	\end{pgfonlayer}
	\begin{pgfonlayer}{edgelayer}
		\draw [in=270, out=90] (1.center) to (6.center);
		\draw [in=270, out=90] (3.center) to (7.center);
		\draw (0.center) to (9);
		\draw (2.center) to (10);
		\draw (4.center) to (11);
		\draw (9) to (12);
		\draw (10) to (12);
		\draw (11) to (12);
		\draw (14) to (11);
		\draw (12) to (19.center);
		\draw [style=new edge style 1, in=-150, out=-30, looseness=1.25] (33.center) to (34.center);
		\draw (13) to (10);
		\draw (13) to (11);
		\draw (10) to (7.center);
		\draw (7.center) to (14);
		\draw (9) to (6.center);
		\draw (6.center) to (13);
		\draw (11) to (5.center);
	\end{pgfonlayer}
\end{tikzpicture}

\caption{The circuit construction for the PRG from \Cref{lemma:aw}.}\label{fig:AW}
\end{figure}

Our goal is to create a PRF that $\epsilon$-fools depth-$d_{\Phi}$ size-$S_{\Phi}$ circuits, and so this is our class $\calC$. The class $\calC'$ will be decision trees of depth $4 \log S_{\Phi}$, which are perfectly fooled by any $(4 \log S_{\Phi})$-wise independent PRF $G'$. All that is left for us to do is to provide the distribution $\mathcal{W}$ that simplifies $\calC$ to $\calC'$ as described in \Cref{lemma:aw}, and to show that both $G'$ and $\mathcal{W}$ have efficient constant depth circuits, which we will do in \Cref{subsec:prf-construction}.

For the distribution $\mathcal{W}$ we will use the derandomized switching lemma by Trevisan and Xue~\cite{trevisan2013derandomized}. We describe this lemma below, but first we need a tool to generate biased random restrictions from unbiased random strings.

\begin{definition}
    Let $q \in \N$, and let $\tau : \ZO^q \times \ZO \to \{0,1,*\}$ be defined as $$\tau : (x,b) \mapsto \begin{cases}
        * \text{ if }\wedge_{i \in [q]} x_i \\
        b \text{ otherwise.}
    \end{cases}$$
    Then $D_{q,N}^{\uparrow} : \ZO^{(q+1)N} \to \{0,1,*\}^N$ is defined as $(y^{(1)},\dots,y^{(N)}) \mapsto (\tau(y^{(1)}),\dots,\tau(y^{(N)}))$.
\end{definition}

\begin{theorem}[Main Lemma (Lemma 7) in \cite{trevisan2013derandomized}]\label{altthm:derandomizeSL}
    Let $C$ be a $M$ clause $t$-CNF on $N$ bits, $s>0$, $q \in \N$ be a positive parameter and $\mathsf{E}$ a distribution over $\{0,1\}^{(q+1)N}$ that $\epsilon$-fools all $M2^{t(q+1)}$-clause CNFs. Then
    
    $$\Pr_{r \sim \mathsf{E}} [\mathsf{DTdepth}(C|_{D^{\uparrow}_{q,N}(r)})>s] \leq 2^{s+t+1} (5t/2^q)^s + \epsilon \cdot2^{(s+1)(2t+\log M)}.$$
\end{theorem}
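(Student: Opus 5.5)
Our plan is to reprove the Trevisan--Xue main lemma in two stages. First we analyze a truly random seed $r$, for which $D^{\uparrow}_{q,N}(r)$ is a $p$-random restriction with $p=2^{-q}$. Then we transfer the bound to $\mathsf{E}$ by showing that each ``bad'' event is detected by a CNF of the stated size.

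\textbf{Stage 1: the truly random seed.} For uniform $r$, each coordinate is $*$ with probability $2^{-q}$ and otherwise a uniform bit, independently. This is exactly the setting of H{\aa}stad's Switching Lemma. We use Razborov's encoding (canonical decision tree) argument. If the canonical decision tree of $C|_\rho$ has depth $>s$, then we can encode $\rho$ injectively as a pair consisting of:
\begin{itemize}
\item a restriction $\rho'$ that fixes $s$ more variables, namely those along a long path, set so as to falsify the successive clauses met by the canonical tree;
\item auxiliary information of $s\log t + s + O(t)$ bits, recording the position of each variable inside its clause, its value, and the stopping data.
\end{itemize}
Weighting by the product measure, each extra fixed star costs a factor $2^{-q}/(1-2^{-q})$. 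Summing gives $\Pr[\mathsf{DTdepth}>s] \le 2^{s+t+1}(5t/2^q)^s$.

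\textbf{Stage 2: fix the path data and get a CNF.} The key point is that the bad event can be decomposed by the path data. Fix:
\begin{itemize}
\item the sequence of at most $s+1$ clauses $C_{i_1},\dots$ touched by the canonical tree, at $\log M$ bits each;
\item the subsets of their $\le t$ variables left as stars, and the answers given along the path, at $2t$ bits per clause.
\end{itemize}
This yields at most $2^{(s+1)(2t+\log M)}$ patterns. For a fixed pattern, the event that $\rho$ follows it says two things. Every clause before $C_{i_j}$ in the ordering is satisfied by $\rho$ combined with the path answers chosen so far. The clauses $C_{i_j}$ themselves have exactly the specified star/value pattern.

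\textbf{Stage 3: write the pattern event as a CNF in $r$.} Each condition involves only the $t(q+1)$ seed bits of one clause's variables. A function of $t(q+1)$ bits is a CNF with at most $2^{t(q+1)}$ clauses. The conjunction over the $M$ clauses of $C$ is therefore a CNF $F_\pi$ with at most $M2^{t(q+1)}$ clauses.

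For the fixed clauses $C_{i_j}$, the requirement is an exact pattern. This is a conjunction of conditions, each a per-variable conjunction or disjunction over its $q+1$ bits, so it is again a CNF of that size; we absorb it into $F_\pi$.

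\textbf{Stage 4: transfer to $\mathsf{E}$.} Since $\mathsf{E}$ $\epsilon$-fools $F_\pi$, $\Pr_{\mathsf{E}}[F_\pi] \le \Pr_U[F_\pi]+\epsilon$. Summing over all patterns gives
$$\Pr_{\mathsf{E}}[\mathsf{DTdepth}>s] \le \sum_\pi \Pr_U[F_\pi] + \epsilon\cdot 2^{(s+1)(2t+\log M)}.$$
The events $F_\pi$ are disjoint, since the canonical path is unique. Hence $\sum_\pi \Pr_U[F_\pi]$ equals the uniform probability and is bounded by Stage 1.

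\textbf{Main obstacle.} The delicate step is Stage 3. The condition ``all earlier clauses are satisfied'' depends on the path answers, and those are fixed by the pattern $\pi$. So we must check that the ordering-dependent condition ``$C_{i_j}$ is the first clause not satisfied'' is a conjunction of per-clause local predicates, hence a CNF rather than a DNF. We would verify this by showing that the per-clause negated conditions are ANDs over variables of local tests.
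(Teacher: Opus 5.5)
The paper does not prove this statement; it imports Lemma~7 of Trevisan--Xue. Your architecture matches the stated parameters: witnesses of at most $s+1$ clause indices, star sets and path answers in the canonical decision tree (hence the count $2^{(s+1)(2t+\log M)}$), one predicate on the $t(q+1)$ seed bits of each clause of $C$ (hence a CNF with at most $M2^{t(q+1)}$ clauses), then fooling and a union bound. Your Stages~2--3 are correct. The ``main obstacle'' you flag is routine:
a clause strictly between $C_{i_{j-1}}$ and $C_{i_j}$ gets the predicate ``satisfied by $\rho\pi_1\cdots\pi_{j-1}$'';
$C_{i_j}$ gets its exact star/falsified pattern;
later clauses get nothing.

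The genuine gap is in Stage~4: the events $F_\pi$ are \emph{not} disjoint. The canonical decision tree of $C|_\rho$ is unique, but it usually has many root-to-leaf paths with more than $s$ queries. Because your patterns record the answers, $\rho\in F_\pi$ for every such path. For example, take $C=(x_1\vee x_2)\wedge(x_3\vee x_4)$, $\rho$ the all-$\ast$ restriction and $s=3$. The answers $01$, $10$, $11$ to $x_1,x_2$ each continue to depth $4$, so $\rho$ lies in several of your events. Thus $\sum_\pi\Pr_U[F_\pi]$ is the expected number of patterns consistent with $\rho$. A bound on $\Pr_U[\mathsf{DTdepth}>s]$ does not control it, and this overcount is what the factor $2^{s+t+1}$ in the statement pays for.

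One fix is to group patterns by answer string. Truncate each long path to its shortest prefix with more than $s$ queries. Its answer string $a$ then has length at most $s+t$, so there are fewer than $2^{s+t+1}$ choices of $a$. For fixed $a$, the pair $(\rho,a)$ determines a unique path in the canonical tree. So the events in one group are disjoint, and their union lies in the event that the canonical tree of $C|_\rho$ has depth $>s$. Under a uniform seed (a $2^{-q}$-random restriction), that event has probability at most $(5t/2^q)^s$ by H{\aa}stad's lemma. Summing over groups gives $\sum_\pi\Pr_U[F_\pi]\le 2^{s+t+1}(5t/2^q)^s$.

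This fix needs Stage~1 to deliver the sharp bound $(5t/2^q)^s$. As written, you already spend the factor $2^{s+t+1}$ there, so it would be paid twice.

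Alternatively, make your encoding injective on pairs $(\rho,\pi)$ rather than on $\rho$ alone; Razborov's encoding, which records the path answers, is such an injection. That bounds the sum directly.

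A minor point: turning a star into a fixed bit changes the weight by $2p/(1-p)$, not $p/(1-p)$, since each fixed value has probability $(1-p)/2$.
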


In their PRG, Trevisan and Xue use the above lemma, with the distribution $E$ instantiated with a small-bias PRG to collapse $\AC_0$ circuits to small depth decision trees. We will do the same but with the distribution $E$ instantiated from~\Cref{thm:muPRF} in~\Cref{sec:k-wise-independence}, since that gives us constant depth PRFs fooling depth 2 circuits. The collapsing of the $\AC_0$ circuit follows the same steps as when the usual switching lemma is used: A random restriction is used to make the bottom fan-in small. Then $d-1$ random restrictions are used to decrease the depth of the $\AC_0$ circuit from $d$ to $2$, with the last restriction changing the depth-$2$ circuit to a small depth decision tree.

The initial random restriction and the subsequent ones are sampled from the two distributions below respectively.

\begin{definition}[Restriction-Generating Distributions]\label{def:restgen}
    Let $\epsilon'_0$ and $\epsilon'_1$ be parameters to be set in \Cref{clm:pruning} and \Cref{clm:switching} respectively. Distributions $\sfD_0$ and $\sfD_1$ are defined below.
    \begin{itemize}
        \item A random restriction from the distribution $\sfD_0$ is sampled as follows. A string $r$ of length $(6+1)N$ is sampled from a distribution $\sfE_0$ that $\epsilon'_0$-fools $12S_\Phi$-size CNFs. The last bit in each of the $N$ blocks is then resampled from a uniform distribution. The restriction is then $D_{6,N}^{\uparrow}(r)$. We will also require that $\sfE_0$ is $6$-wise independent.
        \item Let $q$ and $t$ be parameters to be set later (in \Cref{clm:switching}). A random restriction from the distribution $\sfD_1$ is sampled as follows. A string $r$ of length $(q+1)N$ is sampled from a distribution $\sfE_1$ that $\epsilon'_1$-fools $S_\Phi2^{t(q+1)}$-size CNFs. The last bit in each of the $N$ blocks is then resampled from a uniform distribution. The restriction is then $D_{q,N}^{\uparrow}(r)$.  We will also require that $\sfE_1$ is $q$-wise independent.
    \end{itemize}
    For simplifying some computations we will also make use of the fact that when we instantiate $\sfE_0$ and $\sfE_1$ it will be with distributions that are also $6$-wise and $q$-wise independent respectively (see the proof of \Cref{lemma:constructingW}).
\end{definition}

\begin{remark}
    \begin{itemize}
        \item Since in the definition above $\sfE_0$ and $\sfE_1$ are at least $6$-wise and $q$-wise independent respectively, this guarantees that for each index $i$, $\rho(i)=\ast$ with probability exactly $1/64$ if sampled from $\sfD_0$ and $1/2^q$ if sampled from $\sfD_1$.
        \item In the above definition the random strings $r$ have some bits resampled at uniform. They will continue to fool CNFs after these resamplings (see \Cref{prop:foolingmanipulations}) and can hence be used in \Cref{altthm:derandomizeSL}.
        \item $\sfE_0$ and $\sfE_1$ are instantiated only in the proof of \Cref{lemma:constructingW}, and their $k$-wise independence can be verified in that proof.
    \end{itemize}
\end{remark}

With these distributions defined we can now state how we use the Ajtai-Wigderson framework.

    \begin{lemma}\label{lemma:awapplied}[Applying the framework]
        The conditions of \Cref{lemma:aw} are satisfied with:
        \begin{itemize}
            \item $\epsilon^* = 8/S_{\Phi}$,
            \item $\calC$ being the class of depth-$d_{\Phi}$ size-$S_{\Phi}$ circuits,
            \item $\calC'$ being the class of depth-$4\log(S_\Phi)$ decision trees and $G'$ being a $4\log(S_\Phi)$-wise independent distribution,
            \item $p=1/(2^{O(d_{\Phi})}(\log S_{\Phi})^{d_{\Phi}-1})$, and
            \item $\mathcal{W}$ being the distribution of the set of restricted indices of $\rho := \rho_0\cdots\rho_{d_\Phi-1}$ where $\rho_0 \sim \sfD_0$ and $\rho_i \sim \sfD_1$ for $i>0$, with $q=\log \log S_{\Phi}+O(1)$, $t=4\log(S_{\Phi})$ and $\epsilon'_1 = (2/S_\Phi^2) \cdot 2^{-60\log^2 S_\Phi}$ in the definition of $\sfD_1$.
        \end{itemize}
    \end{lemma}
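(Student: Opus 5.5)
We have to check the two conditions of \Cref{lemma:aw}: the star-probability bound and the collapse bound. We also have to check that $G'$ perfectly fools $\calC'$. That last point is immediate: a decision tree of depth $4\log S_\Phi$ reads at most that many bits on every path. Its acceptance probability is a sum over leaves of products of at most $4\log S_\Phi$ marginal probabilities, and these agree under uniform and under any $4\log S_\Phi$-wise independent distribution.

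\textbf{The star-probability condition.} Each $\rho_i$ is sampled independently. By the remark after \Cref{def:restgen}, a fixed $u$ is a star of $\rho_0$ with probability exactly $1/64$ and a star of each $\rho_i$, $i\ge1$, with probability exactly $2^{-q}$. A coordinate $u$ is free in the composition $\rho_0\cdots\rho_{d_\Phi-1}$ iff it is a star in every $\rho_i$. So $\Pr[u\notin W]=2^{-6}\cdot 2^{-q(d_\Phi-1)}$. With $q=\log\log S_\Phi+O(1)$ this equals $1/(2^{O(d_\Phi)}(\log S_\Phi)^{d_\Phi-1})=p$, which gives $\Pr[u\in W]\le 1-p$.

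\textbf{The collapse condition.} We follow the standard Håstad induction, with \Cref{altthm:derandomizeSL} replacing the truly random switching lemma. First, view $C$ as having an extra bottom layer of fan-in $1$. We would like the restriction $\rho_0\sim\sfD_0$ to make every bottom gate, after merging, depend on at most $t=4\log S_\Phi$ literals except with small probability. This is the pruning step (\Cref{clm:pruning}). Consider a bottom AND of fan-in $>t$, and its first $t$ literals. Under a truly random $\rho_0$ the probability that none of them is fixed to $0$ is at most $(1/2+1/64)^{t}\le S_\Phi^{-3}$-ish. The event ``some literal is fixed falsifying'' is a CNF-type event over $7$-bit blocks, of size about $12S_\Phi$. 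Since $\sfE_0$ fools such CNFs up to $\epsilon'_0$, the same bound holds up to $\epsilon'_0$. A union bound over the $\le S_\Phi$ gates then gives a total failure probability of $O(1/S_\Phi)$ with $\epsilon'_0$ small.

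Next, for each $i=1,\dots,d_\Phi-2$, we apply $\rho_i\sim\sfD_1$ and invoke \Cref{altthm:derandomizeSL} on each gate at level $2$, which is a $t$-CNF or $t$-DNF with $M\le S_\Phi$ clauses, taking $s=t$. A depth-$t$ decision tree turns into a $t$-DNF (or CNF) that merges with the layer above, reducing the depth by one. The final restriction $\rho_{d_\Phi-1}$ turns the remaining depth-2 circuit into a decision tree of depth $\le t=4\log S_\Phi$, which lands in $\calC'$.

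It remains to bound the per-gate failure. It is $2^{2t+1}(5t/2^q)^t+\epsilon'_1 2^{(t+1)(2t+\log S_\Phi)}$. Choosing the $O(1)$ in $q$ so that $5t/2^q\le 1/32$ makes the first term at most $2\cdot 2^{-3t}\le 2S_\Phi^{-12}$. For the second term, $(t+1)(2t+\log S_\Phi)\le 60\log^2 S_\Phi$ roughly, because $t=4\log S_\Phi$. With $\epsilon'_1=(2/S_\Phi^2)2^{-60\log^2S_\Phi}$, the second term is therefore at most $2/S_\Phi^2$. We also need $\sfE_1$ to fool $S_\Phi 2^{t(q+1)}$-clause CNFs, which is exactly what \Cref{def:restgen} provides. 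A union bound over at most $S_\Phi$ gates per level and at most $d_\Phi$ levels gives a failure of $O(d_\Phi/S_\Phi)$. The pruning step contributes a constant times $1/S_\Phi$. The accounting has to land within $\epsilon^*=8/S_\Phi$; if it does not fit directly, we would absorb $d_\Phi$ by assuming $S_\Phi$ large relative to $d_\Phi$, or tighten $s$ slightly.

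\textbf{Main obstacle.} The delicate part is this constant bookkeeping, together with making sure each application is legitimate. In particular, the restricted circuit after the earlier $\rho_j$ is a fixed circuit independent of $\rho_i$, since the $\rho_j$ are independent; this lets us condition on them. Also, the resampled last bits still fool CNFs by \Cref{prop:foolingmanipulations}. Finally, the pruning step's event must genuinely be expressible as a small CNF, so that $\sfE_0$'s guarantee applies.
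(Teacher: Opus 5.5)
Your route is the paper's: prune with $\rho_0$ (\Cref{clm:pruning}), run $d_\Phi-1$ rounds of the Trevisan--Xue lemma (\Cref{clm:switching}) with union bounds, and compute $\Pr[u\notin W]=2^{-6}2^{-q(d_\Phi-1)}$. The decision-tree fooling and the star-probability parts are fine. The real gap is the error budget. You leave it open, and the rescue you suggest cannot close it.

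With your per-gate bound $2/S_\Phi^2$ on the second Trevisan--Xue term, a union bound over the $\le S_\Phi$ gates makes each round cost $2/S_\Phi=\epsilon^*/4$. The rounds alone then cost $2(d_\Phi-1)/S_\Phi$, which is at least $\epsilon^*=8/S_\Phi$ as soon as $d_\Phi\ge 5$, before pruning is counted. Both this quantity and $\epsilon^*$ are $\Theta(1/S_\Phi)$, so their ratio is about $d_\Phi/4$ whatever $S_\Phi$ is. Assuming $S_\Phi$ large relative to $d_\Phi$ therefore changes nothing.

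The paper's budget is $\epsilon^*/2$ for pruning and only $\epsilon^*/(2S_\Phi)$ per switching round, so the $d_\Phi-1\le S_\Phi$ rounds cost at most $\epsilon^*/2$ in total. Your parameters do support this, because your exponent estimate is very lossy. With $s=t=4\log S_\Phi$ and $M\le S_\Phi$, $(s+1)(2t+\log M)\approx 36\log^2 S_\Phi$, far below $60\log^2 S_\Phi$, so the second term is $2^{-\Omega(\log^2 S_\Phi)}$. Together with your first term $2S_\Phi^{-12}$, a round fails with probability $O(S_\Phi^{-11})$, and all rounds together with probability $O(S_\Phi^{-10})$. Pruning with $\epsilon'_0=\epsilon^*/(4S_\Phi)$ costs at most $S_\Phi(33/64)^{t}+S_\Phi\epsilon'_0\le S_\Phi^{-2.8}+2/S_\Phi\le\epsilon^*/2$.

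Two smaller points. First, $M\le S_\Phi$ holds only in the first round. Afterwards a level-2 gate merges up to $S_\Phi$ decision trees of depth $t$, so it is a $t$-CNF or $t$-DNF with up to $S_\Phi 2^t=S_\Phi^5$ clauses. The exponent becomes about $(4\log S_\Phi+1)\cdot 13\log S_\Phi\approx 52\log^2 S_\Phi$, still below $60\log^2 S_\Phi$, so the second term stays negligible. However, $\sfE_1$ must then fool CNFs with $S_\Phi 2^{t(q+2)}$ clauses rather than $S_\Phi 2^{t(q+1)}$. The paper's \Cref{clm:switching} glosses over the same point. It is harmless, since the instantiation in \Cref{lemma:constructingW} fools $2^{O(\log S_\Phi\log\log S_\Phi)}$-size CNFs anyway.

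Second, \Cref{lemma:aw} asks for collapse under $W\mapsto X$ with $X$ uniform and independent of $W$, whereas you analyse the sampled restriction $\rho$ itself. You need the observation the paper makes: the value $\rho$ puts on each $u\in W$ is the resampled last bit of the relevant block. That bit is uniform and independent of $W$ and of the other values, so the two restricted circuits have the same distribution.
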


    Before we prove this we first show the purpose of the two distributions individually. First we verify that the bottom fan-ins are reduced by the first restriction. It is easy to see that this happens with a truly random restriction, the following claim shows that a pseudorandom restriction that fools CNFs would also suffice.

    \begin{claim}\label{clm:pruning}
        Let $S_{\Phi} \in \N, \epsilon^*>0$ and set $\epsilon'_0 := \epsilon^*/4S_{\Phi}$ in the definition of $\sfD_0$ (\Cref{def:restgen}). Let $C$ be any circuit of size at most $S_{\Phi}$. The random restriction $\rho_0 \sim \sfD_0$ will, with probability at least $1-\epsilon^*/2$, trivialize every bottom gate of $C$ with fan-in larger than $1.1\log (4S_\Phi/\epsilon^*)$.
    \end{claim}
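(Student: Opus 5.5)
We handle each bottom gate of $C$ separately and then take a union bound over the at most $S_\Phi$ bottom gates. Fix a bottom gate $g$ with fan-in $w > 1.1\log(4S_\Phi/\epsilon^*)$, say an $\mathsf{OR}$ gate (the $\mathsf{AND}$ case is symmetric with the roles of $0$ and $1$ swapped). The gate $g$ fails to be trivialized by $\rho_0 = D^{\uparrow}_{6,N}(r)$ only if no literal of $g$ is set to make it true. Each literal's variable $u$ must then be either left free ($\rho_0(u)=\ast$) or set to the falsifying value. We show that this bad event has probability at most $\epsilon^*/(2S_\Phi)$.

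First compute the probability under a truly uniform $r$. Each index is independently $\ast$ with probability $1/64$, and otherwise takes the value of its last bit, which is uniform. So each literal is bad with probability $1/64 + (63/64)(1/2) = 65/128$. The probability that all $w$ literals are bad is $(65/128)^w \le 2^{-0.977 w}$, since $\log_2(128/65)\approx 0.977$. With $w \ge 1.1\log(4S_\Phi/\epsilon^*)$ this is at most $(\epsilon^*/4S_\Phi)^{1.07}\le \epsilon^*/(4S_\Phi)$. Here one should check that $w$ is measured in the same log base and that $w$ is an integer, so that rounding only helps. If $g$ has repeated or complementary literals, the gate is either trivially constant or we may pass to the distinct variables, which only shrinks the bad probability.

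Next, transfer this bound to the pseudorandom $r\sim\sfE_0$, with last bits resampled. The bad event for $g$ depends only on the $w$ blocks of $r$ indexed by $g$'s variables. For each such variable $u$, ``$u$ is bad'' is the event ``$\bigwedge_{i\le 6} y^{(u)}_i$ or $y^{(u)}_7 = $ falsifying bit''. This is a disjunction of width at most $7$: it is the OR of the last-bit literal with the 6-term AND, which expands into $6$ clauses of width $2$. The conjunction over the $w\le S_\Phi$ variables is therefore a CNF with at most $6S_\Phi\le 12S_\Phi$ clauses.

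By \Cref{prop:foolingmanipulations}, resampling the last bits preserves fooling, so the resampled distribution still $\epsilon'_0$-fools this CNF. Hence the bad probability is at most $\epsilon^*/(4S_\Phi)+\epsilon'_0 = \epsilon^*/(2S_\Phi)$. A union bound over at most $S_\Phi$ bottom gates gives total failure probability at most $\epsilon^*/2$.

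The main subtlety is expressing the bad event as a small CNF over $r$ so that the fooling guarantee applies. Everything else is the routine calculation that the slack between $0.977\cdot 1.1$ and $1$ absorbs constants.
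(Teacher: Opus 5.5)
Your proof is correct and is essentially the paper's argument. The per-literal failure probability under uniform $r$ is $65/128 < 2^{-1/1.1}$, and the (non-)trivialization event of a gate is a small depth-2 formula in the blocks of $r$, so $\sfE_0$ (which still fools after the last bits are resampled, by \Cref{prop:foolingmanipulations}) changes that probability by at most $\epsilon'_0$; a union bound over the at most $S_\Phi$ bottom gates finishes. The only difference is cosmetic: the paper writes the trivializing event as a DNF with $6t$ width-2 terms and implicitly uses its negation, whereas you write the failure event directly as the complementary CNF with $6w$ width-2 clauses, which matches the CNF-fooling hypothesis more literally (one small slip: your aside that passing to distinct variables ``only shrinks the bad probability'' is backwards, since duplicated inputs lower the effective fan-in and raise the failure probability, so the claim should simply be read with fan-in counting distinct variables).
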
  
    \begin{proof}
        Recall that $\rho_0$ is sampled by sampling a string $r$ from a CNF-fooling distribution $E$ and applying $D^{\uparrow}_{7,N}$ to it. Our claim follows by the observation that there is a small DNF that implies that the restriction obtained from $r$ trivializes a gate, and that the DNF itself has a high probability of being satisfied by a uniform $r$.
        
        Let $g$ be an AND/OR gate with $t$ input bits $x_{i_1},\dots,x_{i_t}$. The gate $g$ gets trivialized if $\bigvee_{j \in [t]} (\rho_0(i_j) = b_j)$, where $b_j$ is the value of the bit $x_{i_j}$ that would trivialize $g$. The statement $\rho_0(i_j) = b_j$ can be written as $\bigvee_{a \in [6]} (r^{(i_j)}_a = 0 \wedge r^{(i_j)}_7 = b_j)$ where $r^{(i_j)}$ is the $i_j$th block of $r$. Hence there is a DNF of size $12t$ that implies that $g$ is trivialized.

        Under a uniform $r$, the probability that any particular input bit to $g$ is set to the trivializing $b_j$ is $\frac{1}{2}(1 - 1/2^6)$, so the probability that the DNF is not satisfied is $(1/2 + 1/2^7)^t < (1/2)^{t/1.1}$.

        Since the actual distribution of $r$ (i.e. $\sfE_0$) $\epsilon'_0$-fools this DNF the probability that this DNF is not satisfied is at most $(1/2)^{t/1.1} + \epsilon'_0$. By a union bound over the at most $S_\Phi$ bottom gates and substituting the value of $\epsilon'_0$ the probability that there is a gate with fan-in more than $1.1 \log (4S_\Phi/\epsilon^*)$ that does not trivialize is at most $S_\Phi((1/2)^{\log (4S_\Phi/\epsilon^*)} + \epsilon^*/4S_\Phi) = \epsilon^*/2$.
    \end{proof}

    Now we verify that the subsequent restrictions simplify the circuit. This crucially uses \cref{altthm:derandomizeSL}.

    \begin{claim}\label{clm:switching}
        Let $\epsilon^* \geq 8/S_{\Phi}$, and set $\epsilon'_1 := (\epsilon^*/4S_{\Phi}) \cdot 2^{-15\log^2(8S_{\Phi}/\epsilon^*)}$, $t=2\log(8S_\Phi/\epsilon^*)$ and $q=\lceil\log(120\log S_{\Phi})\rceil$ in the definition of $\sfD_1$ (\Cref{def:restgen}).
        
        Let $C$ be any depth $d_\Phi$ circuit where the bottom gates have fan-in at most $2\log(8S_{\Phi}/\epsilon^*)$ and with at most $S_{\Phi}$ non-bottom gates. The random restriction $\rho \sim \sfD_1$ will, with probability at least $1-\epsilon^*/2S_{\Phi}$, simplify every gate in $C$ at depth $d_\Phi-1$ to a decision tree of depth at most $2\log(8S_\Phi/\epsilon^*)$.

        Consequently the restricted circuit $C$ can be written as a depth $d_\Phi-1$ circuit where the depth-$(d_\Phi-1)$ gates have fan-in at most $2\log(8S_{\Phi}/\epsilon^*)$ and with at most $S_{\Phi}$ non-bottom gates.
    \end{claim}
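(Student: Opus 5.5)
The plan is to apply \Cref{altthm:derandomizeSL} separately to each gate $g$ of $C$ at depth $d_\Phi-1$ (i.e., each gate just above the bottom layer), and then take a union bound over these at most $S_\Phi$ gates.

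Fix such a gate $g$. Without loss of generality $g$ is an AND of bottom OR gates; the other case is dual, by negating and using that decision-tree depth is invariant under negation. Since every bottom gate has fan-in at most $2\log(8S_\Phi/\epsilon^*) = t$, the gate $g$ is a $t$-CNF. It has $M \le S_\Phi$ clauses, because its children are distinct gates of $C$; this is where we need to count bottom gates as well, or else observe that the clauses feeding $g$ number at most $S_\Phi$. The restriction $\rho \sim \sfD_1$ is $D^{\uparrow}_{q,N}(r')$, where $r'$ is obtained from $r \sim \sfE_1$ by resampling the last bit of each block uniformly. By \Cref{prop:foolingmanipulations}, $r'$ still $\epsilon'_1$-fools CNFs of size $S_\Phi 2^{t(q+1)} \ge M2^{t(q+1)}$. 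So \Cref{altthm:derandomizeSL} applies with $s = t = 2\log(8S_\Phi/\epsilon^*)$ and gives
\[
\Pr[\mathsf{DTdepth}(g|_\rho) > s] \le 2^{2s+1}(5t/2^q)^s + \epsilon'_1 \cdot 2^{(s+1)(2t+\log M)} .
\]

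Next I bound the two terms, aiming for at most $\epsilon^*/4S_\Phi^2$ each, so that the union bound over $S_\Phi$ gates gives $\epsilon^*/2S_\Phi$. Set $L = \log(8S_\Phi/\epsilon^*)$. Since $\epsilon^* \ge 8/S_\Phi$ we have $L \le 2\log S_\Phi$, so $t \le 4\log S_\Phi$. With $2^q \ge 120\log S_\Phi$ this gives $5t/2^q \le 1/6$, and the first term is at most $2\cdot (4/6)^{s}$. Here I expect the main obstacle: to beat $\epsilon^*/4S_\Phi^2$ using $(2/3)^{2L}$ we need $(8S_\Phi/\epsilon^*)^{2\log(3/2)}$ to dominate $S_\Phi^2/\epsilon^*$, and the exponent $2\log(3/2)\approx 1.17$ may fall short. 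If the constants do not close, I would instead compute $5t/2^q \le 20\log S_\Phi/(120\log S_\Phi)=1/6$ more carefully, bounding the first term by $2(1/3)^{s}$, so that the exponent becomes $2\log 3 > 3$. Since $L \ge \log S_\Phi$ and $L\ge \log(1/\epsilon^*)$, this handles $S_\Phi^2/\epsilon^*$. Failing that, I would absorb the slack by adjusting the constant inside $q$, or by applying the lemma with a slightly smaller $s$.

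For the second term, $2^{(s+1)(2t+\log M)} \le 2^{(2L+1)(5L)} \le 2^{15L^2}$ for $L$ large, using $\log M \le \log S_\Phi \le L$. Multiplying by $\epsilon'_1 = (\epsilon^*/4S_\Phi)2^{-15L^2}$ leaves roughly $\epsilon^*/4S_\Phi$. This is too weak for a per-gate bound summed over $S_\Phi$ gates, so I would check whether the stated $\epsilon'_1$ already contains an extra $S_\Phi$ factor in the exponent slack (e.g.\ $15L^2$ versus $10L^2+5L$ leaves $2^{-5L^2}$, which is far smaller than $1/S_\Phi$). That slack does the job.

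Finally, for the ``consequently'' part, each depth-$(d_\Phi-1)$ gate becomes a decision tree of depth at most $t$. Such a tree can be written both as a $t$-DNF and as a $t$-CNF with at most $2^t$ terms. Choosing the form whose top gate matches the gate at depth $d_\Phi-2$ lets us merge layers. This yields a depth-$(d_\Phi-1)$ circuit whose bottom fan-in is at most $t$. The number of non-bottom gates does not increase, since the old depth-$(d_\Phi-1)$ gates become bottom gates or are merged, and the layers above are unchanged.
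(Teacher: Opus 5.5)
Your plan is the paper's own: apply \Cref{altthm:derandomizeSL} to each depth-$(d_\Phi-1)$ gate with $s=t=2L$, where $L=\log(8S_\Phi/\epsilon^*)$, union-bound over at most $S_\Phi$ gates, then rewrite the depth-$t$ decision trees as $t$-CNFs/DNFs and merge. Your second-term estimate (the $15L^2$ versus $(2L+1)\cdot 5L$ slack) and the collapse step are fine.

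The step that fails is the first term, exactly where you suspected trouble. Your diagnosis there was right, but the fallback bound $2(1/3)^s$ is an arithmetic slip. From $5t/2^q\le 1/6$ and the prefactor $2^{s+t+1}=2^{2s+1}$ you only get $2(2/3)^s=2(4/9)^L$. Reaching $(1/3)^s$ would need $5t/2^q\le 1/12$, which fails when $2^q\approx 120\log S_\Phi$ because $L>\log S_\Phi$.

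And $2(4/9)^L$ is genuinely too large. At $\epsilon^*=8/S_\Phi$, the value used in \Cref{lemma:awapplied}, we have $L=2\log S_\Phi$. When $2^q$ is close to $120\log S_\Phi$, the first term of the bound is then about $2S_\Phi^{-2\log(9/4)}\approx 2S_\Phi^{-2.34}$. The union bound, however, needs each gate to fail with probability at most $\epsilon^*/(2S_\Phi^2)=4S_\Phi^{-3}$. Taking a smaller $s$ only makes this worse, since the term equals $2^{t+1}(10t/2^q)^s$ with $10t/2^q\le 1/3$. So with $q=\lceil\log(120\log S_\Phi)\rceil$ exactly as stated, the Trevisan--Xue bound does not give failure probability $\epsilon^*/2S_\Phi$. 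The paper's one-line proof just asserts that these parameters work, so it has the same hole; your verification exposes it rather than closing it.

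The repair that works is the one you list last: enlarge the constant in $q$. If $2^q\ge 160\log S_\Phi$, then $5t/2^q\le 1/8$. The first term is then at most $2\cdot 2^{4L}2^{-6L}=2/X^2$ with $X=8S_\Phi/\epsilon^*\ge S_\Phi$, hence at most $2/(XS_\Phi)=\epsilon^*/(4S_\Phi^2)$. Together with your second-term bound this gives the stated $1-\epsilon^*/2S_\Phi$. Nothing downstream changes, since \Cref{lemma:awapplied} only uses $q=\log\log S_\Phi+O(1)$.

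Alternatively, keep $q$ and settle for failure $\epsilon^*/(2d_\Phi)$ per level. That is all the proof of \Cref{lemma:awapplied} needs, and $2(4/9)^L$ meets it once $S_\Phi\ge d_\Phi^6$. Either way you are proving a modified claim, and you should say so.

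There is one further loose end, also implicit in the paper. The hypothesis bounds only non-bottom gates, so your $M\le S_\Phi$ is unjustified: after an earlier collapse, a depth-$(d_\Phi-1)$ gate can have up to $S_\Phi 2^t$ clause children. The slack in $\epsilon'_1$ still absorbs $\log M\le 3L$, since $(2L+1)\cdot 7L\le 15L^2-\log S_\Phi$ for $L\ge 8$. But the Trevisan--Xue hypothesis then requires $\sfE_1$ to fool CNFs with $S_\Phi 2^{t(q+2)}$ clauses, not the $S_\Phi 2^{t(q+1)}$ specified in the definition of the restriction distribution.
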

    \begin{proof}
        The first part is a simple application of \Cref{altthm:derandomizeSL} with $s=t=2\log(8S_\Phi/\epsilon^*)$ and $q=\lceil\log(120\log S_{\Phi})\rceil$ and a union bound on the at most $S_{\Phi}$ gates at depth $d_\Phi-1$. These parameters do require $\epsilon^* \geq 8/S_{\Phi}$ in order to work.

        The second part is a routine application of the switching lemma: Gates at depth $d_\Phi-2$ are now an AND/OR of $2\log(8S_{\Phi}/\epsilon^*)$-depth decision trees. These decision trees can be written both as a CNF and a DNF of width $2\log(8S_{\Phi}/\epsilon^*)$. Hence the gates at depth $d_\Phi-2$ can be written as either a CNF or a DNF of width $2\log(8S_{\Phi}/\epsilon^*)$.
    \end{proof}

    \begin{proof}[Proof of \Cref{lemma:awapplied}]
        That $G'$ perfectly fools $\calC'$ is folklore. For a $C \in \calC$ using \Cref{clm:pruning} and \Cref{clm:switching} we see that the restriction $\rho$ will with probability $\geq 1-\epsilon^*$ simplify $C$ to a decision tree of depth at most $2\log(8S_\Phi/\epsilon^*)\leq 4\log(S_\Phi)$. Since the definition of these restrictions have the bit values resampled to be truly random and independent of the locations, the locations to be restricted satisfy the condition in~\Cref{lemma:aw}. Let $W$ be the indices of bits restricted by $\rho$. In order to have $u \not\in W$, it must not have been restricted by any of $\rho_0$ to $\rho_{d-1}$, which would happen with probability $p=2^{-q(d_{\Phi}-1)}/64 \leq 2^{O(d_{\Phi})}(\log S_{\Phi})^{d_{\Phi}-1}$.
    \end{proof}

    \Cref{lemma:aw} now gives us the PRG that we need. We now analyze its locality as a constant depth PRF.

\subsection{Construction of the PRF}\label{subsec:prf-construction}

Since the distinguishers we are interested in have size between $2^n$ and $2^{2^n}$, we will always assume $\log S_\Phi \geq n$ and $\log \log S_\Phi \leq n$.

\Cref{lemma:aw,lemma:awapplied} reduce the construction of our main PRF to the construction of the PRF $\mathcal{W}$ and of a PRF computing a $4\log(S_\Phi)$-wise independent distribution $G'$. The latter is a direct corollary of \Cref{claim:mukwise} and \Cref{claim:muDSFG}.

\begin{corollary}
    For every $c_1,c_2 \in \mathbb{N}$ there is a $(n,c_1+c_2+2,2^{\tilde{O}(\log(S_\Phi) n^2)^{1/c_1}+n^{2/c_2}})$-function generator whose output is $4\log S_\Phi$-wise independent.
\end{corollary}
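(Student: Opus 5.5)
This corollary should follow by instantiating the distribution $\mu_{n,k}$ of \Cref{sec:k-wise-independence} with $k = 4\log S_\Phi$. Concretely, I would sample $\bar a \in \F^k$ uniformly and output the truth table of $f(x) = \mathrm{LSB}(p(x))$, where $p(z) = \sum_{i=0}^{k-1} a_i z^i$. Then two things need checking: the independence of the output, and the circuit complexity of each function in the support.

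For independence, \Cref{claim:mukwise} gives directly that $\mu_{n,k}$ is $k$-wise independent over $\ZO^{2^n}$. This requires $k \le 2^n$. That holds in our regime whenever $4\log S_\Phi \le 2^n$. If $4\log S_\Phi > 2^n$, the statement is trivial: the uniform distribution on all truth tables is $k$-wise independent for every $k$. Its support is all functions, and any function on $n$ bits has a DNF of size $2^n \le 2^{n^2/c_2}$, which fits within the stated size bound (after adjusting constants). So I will restrict attention to the main case. One also needs $k$ to be an integer; rounding $k$ up to $\lceil 4\log S_\Phi\rceil$ only helps, since $k'$-wise independence implies $k$-wise independence for $k' \ge k$.

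For circuit complexity, I would apply \Cref{claim:muDSFG} (equivalently \Cref{evalPoly}) with this $k$. This yields depth $c_1 + c_2 + 2$ and size $2^{\tilde O(kn^2)^{1/c_1} + n^{2/c_2}}$. Substituting $k = O(\log S_\Phi)$ gives $2^{\tilde O(\log(S_\Phi) n^2)^{1/c_1} + n^{2/c_2}}$, and the constant factor $4$ is absorbed into the $\tilde O$. The seed $\bar a$ is hardwired, so each function in the support is computed by the circuit of \Cref{evalPoly} with the $\bar a$ inputs fixed to constants. Fixing inputs does not increase depth or size. Extracting the least significant output bit simply means taking one output gate. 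Hence the generator is uniform in the sense of the definition.

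There is no real obstacle. The only points needing care are the boundary condition $k \le 2^n$, handled above, and noting that fixing the seed wires preserves the depth bound. Both are routine.
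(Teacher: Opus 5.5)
Your proposal is correct and takes the same route as the paper, which simply instantiates $\mu_{n,k}$ with $k = 4\log S_\Phi$ and invokes \Cref{claim:mukwise} and \Cref{claim:muDSFG}. One small slip in your edge-case aside, a case the paper does not treat at all since it assumes $\log\log S_\Phi \le n$: the exponent in the bound contains $n^{2/c_2}$, not $n^2/c_2$, so $2^{n^{2/c_2}} < 2^n$ for $c_2 \ge 3$ and the inequality $2^n \le 2^{n^{2/c_2}}$ fails; the trivial DNF of size about $2^n$ should instead be absorbed by the first term, which is at least $(2^{n-2}n^2)^{1/c_1} \ge 2^{n/c_1} \gg n$ when $4\log S_\Phi > 2^n$.
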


\begin{lemma}\label{lemma:constructingW}
    For every $c_1,c_2 \in \mathbb{N}$ and every $W$ in the support of $\mathcal{W}$ there are circuits of depth $c_1+c_2+2$ and size $d_\Phi2^{\tilde{O}(\log^3(S_\Phi)n^2)^{1/c_1}+n^{2/c_2}}$ that compute $u \mapsto [u \in W]$.
\end{lemma}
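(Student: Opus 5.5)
We need to show that $[u \in W]$ is computable by circuits of depth $c_1+c_2+2$ and size $d_\Phi 2^{\tilde{O}(\log^3(S_\Phi)n^2)^{1/c_1}+n^{2/c_2}}$. The plan is to instantiate $\sfE_0$ and $\sfE_1$ with the polynomial-based generator $\mu_{n',k}$ of \Cref{thm:muPRF}, in its depth-$2$ version, and then compute membership in $W$ directly from its local structure.

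Recall that $W$ is the set of indices fixed by $\rho_0\cdots\rho_{d_\Phi-1}$, so $u\notin W$ iff $u$ is free in every $\rho_i$. Each $\rho_i$ comes from $D^{\uparrow}_{q,N}(r^{(i)})$ after the last bit of each block is resampled. Since the bit values play no role in $W$, we have
\[
[u\notin W]=\bigwedge_{i=0}^{d_\Phi-1}\bigwedge_{a\in[q_i]} r^{(i)}_{u,a},
\]
where $q_0=6$ and $q_i=q$ otherwise. So it suffices to compute each bit $r^{(i)}_{u,a}$ by a circuit whose output gate is an $\mathsf{AND}$ (\Cref{evalPoly} guarantees this). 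The outer conjunction then merges into those output gates and adds no depth, and $[u\in W]$ is obtained by negation, pushed to the inputs at a factor-$2$ cost.

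\textbf{Instantiating $\sfE_i$.} Index the $(q+1)N$ positions of $r^{(i)}$ by $(u,a)\in\ZO^n\times\ZO^{\lceil\log(q+1)\rceil}$, i.e.\ by strings of length $n'=n+O(\log q)=n+O(\log\log\log S_\Phi)=O(n)$. Take $\sfE_i$ to be $\mu_{n',k}$, restricted to the needed positions via \Cref{prop:foolingmanipulations}.

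\textbf{Choosing $k$.} By the $d_\Phi=2$ case of \Cref{Braverman}, $\sfE_1$ must $\epsilon'_1$-fool CNFs of size $S'=S_\Phi 2^{t(q+1)}$. Here $t=O(\log S_\Phi)$ and $q=O(\log\log S_\Phi)$, so $\log S'=\tilde O(\log^2 S_\Phi)$. Also $\log(1/\epsilon'_1)=O(\log^2 S_\Phi)$. Hence
\[
k=O\bigl(\log S'\cdot\log(S'/\epsilon'_1)\bigr)=\tilde O(\log^4 S_\Phi).
\]
This is the main obstacle: the bound comes out as $\log^4$, whereas the lemma states $\log^3$. I would handle it in one of two ways. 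The first option is to tighten the accounting, noting that the $2^{t(q+1)}$ blowup contributes only $tq=O(\log S_\Phi\log\log S_\Phi)$, which the tilde absorbs. The second option is to observe that if the claimed $\log^3$ does not come out, the size remains of the same form and downstream bounds change only in constants. Also take $k\ge q$ and $k\ge6$, which gives the $q$-wise and $6$-wise independence required in \Cref{def:restgen}. $\sfE_0$ is handled in the same way with smaller parameters.

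\textbf{Circuit for each bit.} Each bit $r^{(i)}_{u,a}=\mathrm{LSB}(p(u,a))$ is, by \Cref{claim:muDSFG} and \Cref{evalPoly}, computed by a circuit of depth $c_1+c_2+2$ and size
\[
2^{\tilde O(kn'^2)^{1/c_1}+n'^{2/c_2}}=2^{\tilde O(\log^{3}(S_\Phi)n^2)^{1/c_1}+n^{2/c_2}},
\]
where the $a$-part of the input is hardwired and the coefficients are fixed by $W$. Only the least significant output bit is used, and its gate is an $\mathsf{AND}$. The gates for the $O(q)$ values of $a$ at a single level $i$ are merged. Taking the conjunction over the $d_\Phi$ levels then gives total size $d_\Phi\cdot O(q)$ times the per-bit size, and the factor $q$ is absorbed by the $\tilde O$ in the exponent. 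This yields the stated bound.
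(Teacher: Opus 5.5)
Your argument is essentially the paper's. You instantiate $\sfE_1$ (and $\sfE_0$) with the polynomial generator of \Cref{thm:muPRF} against depth-$2$ distinguishers, write $[\rho_i(u)=\ast]$ as an $\mathsf{AND}$ of $q$ generated bits, and merge the conjunction over the $d_\Phi$ levels into the top $\mathsf{AND}$ gates guaranteed by \Cref{evalPoly}. Your explicit negation from $[u\notin W]$ to $[u\in W]$, and your tracking of the factor $q$, are if anything more careful than the paper, which identifies $[\rho(u)=\ast]$ with $[u\in W]$ directly.

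The one thing to fix is your ``main obstacle'', which is an arithmetic slip rather than a real difficulty. We have $\log S'=\log S_\Phi+t(q+1)$ with $t=4\log S_\Phi$ and $q=\log\log S_\Phi+O(1)$. Hence $\log S'=O(\log S_\Phi\log\log S_\Phi)=\tilde O(\log S_\Phi)$, not $\tilde O(\log^2 S_\Phi)$. Combined with $\log(S'/\epsilon'_1)=O(\log^2 S_\Phi)$, this gives $k=\tilde O(\log^3 S_\Phi)$. That is exactly the bound in the lemma, and it is how the paper obtains it; your final size display already uses it.

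So present your first option as the argument and drop the second. The second option does not prove the lemma as stated. With a $\log^4$ bound, the final barrier in \Cref{thm:unconditional-natural-proofs} would degrade from $7/(d-5)$ to $8/(d-5)$.
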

\begin{proof}
    From \Cref{lemma:awapplied} we see that $W$ is built by composing together a restriction from $\sfD_0$ along with $d_\Phi-1$ restrictions from $\sfD_1$ to get a restriction $\rho$, and taking the set of indices that are restricted in $\rho$.

    $\sfD_0$ and $\sfD_1$ in turn are built from $\sfE_0$ and $\sfE_1$ (see \Cref{def:restgen}). Since $\sfE_1$ fools larger CNFs to a larger extent, we will replace $\sfE_0$ with a marginal of $\sfE_1$ and so we will focus only on circuits implementing $\sfE_1$. $\sfE_1$ outputs strings of length $(q+1)N$ where $q=\log \log S_\Phi+O(1) \leq n + O(1)$. We will view it as a PRF on $n + \lceil \log (q+1) \rceil = n(1+o(1))$ input bits so that it outputs at least $(q+1)N$ bits. Recall (from \Cref{def:restgen} and \Cref{lemma:awapplied}) that $\sfE_1$ must fool $S_\Phi2^{t(q+1)}$-size CNFs to within error $\epsilon'_1 = (2/S_{\Phi}^2) \cdot 2^{-60\log^2 S_{\Phi}} \leq 2^{-O(\log^2 S_{\Phi})}$. Substituting the values of $t$ and $q$ it needs to fool $2^{O(\log S_\Phi \log \log S_\Phi)}$-size CNFs.

    We use \Cref{thm:muPRF} with the following parameters (since that uses the same named parameters that our current lemma uses, we will refer to its parameters as $n',c_1',c_2',d_\Phi',S_\Phi'$ and $\epsilon'$): $n' = n + \lceil \log (q+1) \rceil$, $c_1' = c_1$, $c_2'=c_2$, $d_\Phi' = 2$, $S_\Phi' = 2^{O(\log S_\Phi \log \log S_\Phi)}$ and $\epsilon' = 2^{-O(\log^2(S_{\Phi}))}$. It gives us the PRF $\sfE_1$ supported on truth tables of circuits of depth $d=c_1+c_2+2$ and size $S=2^{\tilde{O}(\log^3(S_\Phi)n^2)^{1/c_1}+n^{2/c_2}}$. Note that from the proof of \Cref{thm:muPRF} the PRF is a $k$-wise independent distribution for $k \geq \log^3 S_\Phi \gg q$, so $\sfE_1$ satisfies the matching condition in its definition. Furthermore since \Cref{thm:muPRF} directly uses \Cref{evalPoly} we know that the top gates in the circuits we use for $\sfE_1$ are all $\mathsf{AND}$ gates.

    For any output $r$ of $\sfE_1$ the restriction from $\sfD_1$ is simply $\rho' = D_{q,N}^{\uparrow}(r)$ (albeit with one bit in each of the $N$ blocks of $r$ resampled after, which in this case means it is fed directly from the seed). The value of $[\rho'(u)=\ast]$ is the same as the $\mathsf{AND}$ of the other $q$ bits in the $u$th block. That is, if $C$ implements the PRF for $\sf{E_1}$, it is $\wedge_{i \in [q]} C(u,i)$.
    
    Now $\rho$ is a composition of $d_\Phi$ restrictions sampled from $\sfD_0$ and $\sfD_1$. The value of $[\rho(u)=\ast]$ is the same as the $\mathsf{AND}$ of $[\rho_i(u)=\ast]$ for each of the constituent restrictions. Since $\sfE_1$ is implemented as a local PRF generated by depth-$d$ size-$S$ circuits, there is a depth-$d$ size-$d_\Phi S$ circuit computing the map $u \mapsto [\rho(u)=\ast]$, which is the same as $[u \in W]$. Since all we have done is added $\mathsf{AND}$ gates on top of a layer of $\mathsf{AND}$ gates there is no increase in depth.
\end{proof}

Combining the above with \Cref{lemma:aw} we get the local PRF we were after.

\tikzstyle{and}=[fill=white, draw=black, shape=circle]
\tikzstyle{generator}=[fill=white, draw=black, shape=rectangle,minimum width=0.8cm,minimum height=0.8cm,inner sep=5pt]

\tikzstyle{new edge style 0}=[-, line width=2.5pt]
\tikzstyle{new edge style 1}=[-, dashed]
\tikzstyle{dotstyle}=[-, dotted]

\begin{figure}
    \centering
\begin{tikzpicture}
	\begin{pgfonlayer}{nodelayer}
		\node [style=none] (13) at (-10.75, 0) {};
		\node [style=none] (14) at (-12, -1) {};
		\node [style=none] (15) at (-10.25, -1) {};
		\node [style=none] (17) at (-11.25, -0.5) {$G^{0}$};
		\node [style=none] (18) at (-12, -1.5) {};
		\node [style=none] (19) at (-11.375, -2) {$seed^{(0)}$};
		\node [style=none] (20) at (-10.75, -1.5) {};
		\node [style=none] (21) at (-10.5, -2) {$u$};
		\node [style=none] (22) at (-10.25, -1.5) {};
		\node [style=none] (23) at (-10.75, -1) {};
		\node [style=none] (24) at (-10.775, 2.25) {};
		\node [style=none] (25) at (-10.75, 0) {};
		\node [style=none] (26) at (-11.5, 0) {};
		\node [style=none] (27) at (-10.75, 0) {};
		\node [style=none] (29) at (-10.975, 0) {};
		\node [style=none] (30) at (-11.5, 2.25) {};
		\node [style=none] (31) at (-11.475, 0) {};
		\node [style=none] (34) at (-12.5, 0.5) {};
		\node [style=none] (36) at (-11, -1) {};
		\node [style=none] (37) at (-6.75, 0) {};
		\node [style=none] (38) at (-8.75, -1) {};
		\node [style=none] (39) at (-6.25, -1) {};
		\node [style=none] (40) at (-7.75, -0.5) {$G^{1}$};
		\node [style=none] (41) at (-8.75, -1.5) {};
		\node [style=none] (43) at (-6.75, -1.5) {};
		\node [style=none] (44) at (-6.5, -2) {$u$};
		\node [style=none] (45) at (-6.25, -1.5) {};
		\node [style=none] (46) at (-6.75, -1) {};
		\node [style=none] (47) at (-6.775, 2.25) {};
		\node [style=none] (48) at (-6.75, 0) {};
		\node [style=none] (49) at (-8.25, 0) {};
		\node [style=none] (50) at (-6.75, 0) {};
		\node [style=none] (52) at (-6.975, 0) {};
		\node [style=none] (53) at (-8.25, 2.25) {};
		\node [style=none] (54) at (-8.225, 0) {};
		\node [style=none] (57) at (-7, -1) {};
		\node [style=none] (58) at (-2.725, 0) {};
		\node [style=none] (59) at (-4.725, -1) {};
		\node [style=none] (60) at (-2.225, -1) {};
		\node [style=none] (61) at (-3.725, -0.5) {$G^1$};
		\node [style=none] (62) at (-4.725, -1.5) {};
		\node [style=none] (64) at (-2.725, -1.5) {};
		\node [style=none] (65) at (-2.475, -2) {$u$};
		\node [style=none] (66) at (-2.225, -1.5) {};
		\node [style=none] (67) at (-2.725, -1) {};
		\node [style=none] (68) at (-2.75, 2.25) {};
		\node [style=none] (69) at (-2.725, 0) {};
		\node [style=none] (70) at (-4.225, 0) {};
		\node [style=none] (71) at (-2.725, 0) {};
		\node [style=none] (73) at (-2.95, 0) {};
		\node [style=none] (74) at (-4.225, 2.25) {};
		\node [style=none] (75) at (-4.2, 0) {};
		\node [style=none] (77) at (-2.975, -1) {};
		\node [style=none] (79) at (-1.975, -0.25) {{ \Large $\dots$}};
		\node [style=none] (86) at (-6.25, 0.5) {$q$};
		\node [style=none] (90) at (-10.25, 0.5) {$6$};
		\node [style=none] (93) at (-7.75, -2) {$seed^{(1)}$};
		\node [style=none] (94) at (-3.725, -2) {$seed^{(2)}$};
		\node [style=none] (103) at (-4.725, 0) {};
		\node [style=none] (104) at (-1.975, 0) {};
		\node [style=none] (105) at (-2, 0.5) {$q$};
		\node [style=none] (109) at (-12.75, 2.1) {};
		\node [style=none] (110) at (-0.475, 2.25) {};
		\node [style=none] (111) at (-1.475, 2.45) {$6 + (d_\Phi-1)q$};
		\node [style=none] (113) at (-5.725, 4) {$[u \in W]$};
		\node [style=none] (114) at (-8.975, 0) {};
		\node [style=none] (115) at (-6.225, 0) {};
		\node [style=none] (116) at (-12.225, 0) {};
		\node [style=none] (117) at (-9.975, 0) {};
		\node [style=and] (119) at (-6.75, 3.25) {$\wedge$};
		\node [style=none] (120) at (-6.75, 4.25) {};
	\end{pgfonlayer}
	\begin{pgfonlayer}{edgelayer}
		\draw (13.center) to (15.center);
		\draw (14.center) to (15.center);
		\draw (14.center) to (18.center);
		\draw (15.center) to (22.center);
		\draw (23.center) to (20.center);
		\draw (14.center) to (26.center);
		\draw (26.center) to (25.center);
		\draw (24.center) to (27.center);
		\draw (30.center) to (31.center);
		\draw (37.center) to (39.center);
		\draw (38.center) to (39.center);
		\draw (38.center) to (41.center);
		\draw (39.center) to (45.center);
		\draw (46.center) to (43.center);
		\draw (38.center) to (49.center);
		\draw (49.center) to (48.center);
		\draw (47.center) to (50.center);
		\draw (53.center) to (54.center);
		\draw (58.center) to (60.center);
		\draw (59.center) to (60.center);
		\draw (59.center) to (62.center);
		\draw (60.center) to (66.center);
		\draw (67.center) to (64.center);
		\draw (59.center) to (70.center);
		\draw (70.center) to (69.center);
		\draw (68.center) to (71.center);
		\draw (74.center) to (75.center);
		\draw [style=new edge style 1, bend left, looseness=1.25] (103.center) to (104.center);
		\draw [style=new edge style 1, in=-165, out=-15, looseness=0.75] (109.center) to (110.center);
		\draw [style=new edge style 1, bend left, looseness=1.25] (114.center) to (115.center);
		\draw [style=new edge style 1, bend left, looseness=1.25] (116.center) to (117.center);
		\draw (30.center) to (119);
		\draw (24.center) to (119);
		\draw (53.center) to (119);
		\draw (47.center) to (119);
		\draw (74.center) to (119);
		\draw (68.center) to (119);
		\draw (119) to (120.center);
	\end{pgfonlayer}
\end{tikzpicture}

\caption{The construction of a circuit that on input $u \in \Q^n \equiv [N]$ computes $[u \in W]$ (see \Cref{lemma:constructingW}). Here the circuit $G^0$, on input $u$ and a seed, computes the $u$th part of the corresponding string $r \in \Q^{(6+1)N}$ in the support of $\sfE_0$. Similarly the circuit $G^1$, on input $u$ and a seed, computes the $u$th part of the corresponding string $r \in \Q^{(q+1)N}$ in the support of $\sfE_1$. We ensure that both circuits $G^0$ and $G^1$ have a top layer of $\mathsf{AND}$ gates, so there is no increase in depth in this figure.}\label{fig:UinW}
\end{figure}

\begin{corollary}\label{cor:pre-unconditional}
    Let $c_1,c_2 \in \mathbb{N}$. There is a $(n,d,S,d_\Phi,S_\Phi)$-PRF where $d=c_1+c_2+4$ and $S=2^{O(d_\Phi)}(\log S_\Phi)^{d_\Phi}2^{\tilde{O}(\log^3(S_\Phi)n^2)^{1/c_1}+n^{2/c_2}}$. From \Cref{lem:prfcorollary} this also means that there is no $(n,d,S,d_\Phi,S_\Phi)$-natural proof.
\end{corollary}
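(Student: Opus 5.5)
We plug the two local ingredients into the circuit template of \Cref{lemma:aw}, after fixing the parameters via \Cref{lemma:awapplied}. Take $\calC$ to be the depth-$d_\Phi$ size-$S_\Phi$ circuits, $\calC'$ the depth-$4\log S_\Phi$ decision trees, and $\epsilon^* = 8/S_\Phi$. \Cref{lemma:awapplied} gives $p = 1/(2^{O(d_\Phi)}(\log S_\Phi)^{d_\Phi-1})$. The number of rounds is then
\[
\ell = \frac{1}{p}\ln(N/\epsilon^*) = 2^{O(d_\Phi)}(\log S_\Phi)^{d_\Phi-1}\cdot O(n+\log S_\Phi) = 2^{O(d_\Phi)}(\log S_\Phi)^{d_\Phi},
\]
using the standing assumption $\log S_\Phi \ge n$. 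The resulting generator $G$ then $(\ell+1)\epsilon^*$-fools $\calC$. Here the fooling error $\epsilon$ is left implicit, as in the paper's convention for PRFs: it is $O(\ell/S_\Phi)$, which is $o(1)$. If a specific $\epsilon$ were needed, one would rerun \Cref{lemma:awapplied} with $S_\Phi$ replaced by a polynomially larger bound.

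Next comes locality. For $G'$ (the $4\log S_\Phi$-wise independent PRF) we use the corollary just above \Cref{lemma:constructingW}: every output is computed by a depth-$(c_1+c_2+2)$ circuit of size $2^{\tilde O(\log(S_\Phi)n^2)^{1/c_1}+n^{2/c_2}}$. For $\mathcal{W}$, \Cref{lemma:constructingW} gives depth $c_1+c_2+2$ and size $d_\Phi 2^{\tilde O(\log^3(S_\Phi)n^2)^{1/c_1}+n^{2/c_2}}$ for $u\mapsto[u\in W]$. Let $s$ be the maximum of these two sizes, which is the second one. The "furthermore" clause of \Cref{lemma:aw} then says every output $Z$ of $G$ is computed by a circuit of depth $c_1+c_2+4$ and size $O(s\ell)$. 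In that circuit the $\ell$ copies of $G'$ and $\mathcal{W}$ use independent seeds, which are hardwired when viewing $Z$ as a function of $u$; the seeds are the seed wires in the uniform function-generator sense. Multiplying, $O(s\ell) = 2^{O(d_\Phi)}(\log S_\Phi)^{d_\Phi}\cdot 2^{\tilde O(\log^3(S_\Phi)n^2)^{1/c_1}+n^{2/c_2}}$. The factor $d_\Phi$ from \Cref{lemma:constructingW} is absorbed into $2^{O(d_\Phi)}$, giving the stated $S$.

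The step needing care is the depth accounting. \Cref{fig:AW} feeds the $X^{(i)}_u$, the bits $[u\notin W^{(i)}]$ and their negations into a layer of ANDs and then an OR, adding two layers. Negations of $[u\in W^{(i)}]$ must be pushed down to the inputs by De Morgan without adding depth; this doubles the size and turns the top AND layer of the $\mathcal W$-circuits into OR gates. The construction should also check whether any same-type gates can be merged, but we do not need merging, since $+2$ already matches the claimed $d = c_1+c_2+4$. We must also confirm that the class $\calC$ is closed under restrictions, as \Cref{lemma:aw} requires; this holds trivially for depth/size-bounded circuits. Finally, the nonexistence of an $(n,d,S,d_\Phi,S_\Phi)$-natural property is immediate from \Cref{lem:prfcorollary}, since the fooling error is less than the largeness $\Omega(1)$.
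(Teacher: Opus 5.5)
Your proposal is correct and follows the paper's own argument: the paper's proof simply combines the $G'$ corollary and \Cref{lemma:constructingW} with \Cref{lemma:awapplied} and the ``furthermore'' clause of \Cref{lemma:aw}. Your computation $\ell = 2^{O(d_\Phi)}(\log S_\Phi)^{d_\Phi}$ is exactly where the prefactor in $S$ comes from. One caveat, which the paper also leaves implicit: the fooling error $(\ell+1)\cdot 8/S_\Phi$ is $o(1)$ only when $d_\Phi$ is not too large (e.g.\ constant).
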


Finally we analyze the implications for natural proofs. We see what parameters $d_\Phi,S_\Phi$ are needed to get a $2^{n^\eps}$ lower bound for depth $d$ circuits.

\begin{corollary}\label{thm:unconditional-natural-proofs}
    Let $d \in \mathbb{N}, d \geq 7$. There is no constant depth (or even a $\log \log$-depth) and polynomial size $\AC_0$ property that can yield a natural proof lower bound larger than $2^{n^{7/(d-5)}}$ for circuits of depth $d$.
    
    More generally, let $\eps \in (0,1), \eps \geq 2/(d-6)$. Let $\Phi = \{\Phi_n : \Q^{2^n} \rightarrow \Q\}$ be a family of $(n, d, 2^{n^\eps}, d_{\Phi}, S_{\Phi}, 8/S_\Phi)$-natural properties. If $S_\Phi \leq 2^{n^\alpha}$, then either $d_\Phi \geq n^\eps/(\alpha \log n)$ or $\eps \leq (3\alpha+4)/(d-5)$.
\end{corollary}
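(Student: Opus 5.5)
The proof mirrors that of \Cref{thm:weak-unconditional-barrier}, with \Cref{cor:pre-unconditional} in place of \Cref{thm:muPRF}. I prove the general statement first and read off the polynomial-size case afterwards.

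\textbf{Choice of parameters.} Fix $\eps \ge 2/(d-6)$ and suppose $S_\Phi \le 2^{n^\alpha}$. Set $c_2 = \lceil 2/\eps\rceil \le 2/\eps + 1$, so that $n^{2/c_2} < n^{\eps}$ whenever $2/c_2 < \eps$; if $2/\eps$ happens to be an integer, bump $c_2$ by one to make this strict. Then set $c_1 = d - c_2 - 4 \ge d - 2/\eps - 6$ (one more if $c_2$ was bumped). This is at least $1$ precisely because of the hypothesis $\eps \ge 2/(d-6)$, up to the off-by-one. With these choices, \Cref{cor:pre-unconditional} gives a PRF of depth $c_1 + c_2 + 4 = d$ and size
\[
S = 2^{O(d_\Phi)}(\log S_\Phi)^{d_\Phi}\, 2^{\tilde O(\log^3(S_\Phi) n^2)^{1/c_1} + n^{2/c_2}}.
\]
It fools depth-$d_\Phi$, size-$S_\Phi$ distinguishers with error $8/S_\Phi$. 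By \Cref{lem:prfcorollary}, no $(n,d,S',d_\Phi,S_\Phi,8/S_\Phi)$-natural property exists for any $S' \ge S$, since usefulness against larger circuits implies usefulness against smaller ones.

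\textbf{Comparing exponents.} Suppose a property as in the statement exists with $S' = 2^{n^\eps}$. Then $2^{n^\eps} < S$, so $n^\eps$ is at most, up to constants, the sum of three terms:
\[
\log S \le O(d_\Phi) + d_\Phi \alpha \log n + \tilde O\bigl(n^{(3\alpha+2)/c_1}\bigr) + n^{2/c_2}.
\]
I split into cases according to which term is of order $n^\eps$.
\begin{itemize}
\item The term $n^{2/c_2}$ cannot be, by the choice of $c_2$.
\item If the first two terms are, then $d_\Phi \ge n^\eps/(\alpha\log n)$, up to constants absorbed by taking $n$ large.
\item Otherwise the third term dominates, so $(3\alpha+2)/c_1 \ge \eps - o(1)$. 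Using $c_1 \ge d - 2/\eps - 6$, this gives $\eps(d-6) - 2 \le 3\alpha + 2$, i.e.\ $\eps \le (3\alpha+4)/(d-6)$.
\end{itemize}

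\textbf{Main obstacle: constant bookkeeping.} The claimed bound has $d-5$ rather than $d-6$. To get it, I will be more careful with the ceiling, using $c_2 \le 2/\eps + 1$ only when needed, and let the $\tilde O$ and polylog slack be absorbed by strict versus non-strict inequalities as $n \to \infty$. The polylog factors inside $\tilde O$ only change the exponent by $o(1)$, so they are harmless asymptotically. I expect this off-by-one accounting, together with the requirement $c_1 \ge 1$, to be the fiddliest part. If the sharper constant does not come out, I will accept $d-6$ and adjust the hypothesis accordingly.

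\textbf{The first claim.} Polynomial size means $S_\Phi = 2^{O(n)}$, so $\alpha = 1$. Then $d_\Phi \le O(\log\log 2^n) = O(\log n)$ is far below $n^\eps/\log n$, so the general statement gives $\eps \le 7/(d-5)$. The side condition $\eps \ge 2/(d-6)$ is harmless here: if $\eps$ is smaller than that, then $2^{n^\eps}$ is already below $2^{n^{7/(d-5)}}$, provided $2/(d-6) \le 7/(d-5)$, which holds for $d \ge 7$. Largeness $8/S_\Phi$ is weaker than constant largeness, so constant-largeness properties are covered as well.
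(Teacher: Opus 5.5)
Your route is the paper's: choose $c_2\approx 2/\eps$ and $c_1=d-c_2-4$, plug $S_\Phi\le 2^{n^\alpha}$ into \Cref{cor:pre-unconditional}, and argue that some summand of $\log S$ must reach order $n^\eps$. Bumping $c_2$ when $2/\eps$ is an integer is a real improvement, since with $c_2=\lceil 2/\eps\rceil$ the paper's claim $2/c_2<\eps$ fails in that case. However, as written you only reach $\eps\le(3\alpha+4)/(d-6)$. The ``main obstacle'' you describe is not an obstacle at all; it is a subtraction error. In both cases your $c_2$ equals $\lfloor 2/\eps\rfloor+1\le 2/\eps+1$, so
\[
c_1=d-c_2-4\;\ge\; d-(2/\eps+1)-4\;=\;d-2/\eps-5,
\]
not $d-2/\eps-6$. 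The bump costs nothing extra, because it only occurs when $\lceil 2/\eps\rceil=2/\eps$. With the correct bound, the hypothesis $\eps\ge 2/(d-6)$ gives $c_1\ge 1$ exactly, with no off-by-one. In your third case, $\eps c_1\le 3\alpha+2$ then becomes $\eps(d-5)-2\le 3\alpha+2$, i.e.\ $\eps\le(3\alpha+4)/(d-5)$, which is precisely the paper's computation.

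The other remedies you float cannot move the denominator. The $\tilde O$ factors and the strict-versus-non-strict issues only decide whether $(3\alpha+2)/c_1\ge\eps$, and that comparison between constants is already exact. Settling for $d-6$ would prove a weaker statement. Your last paragraph also invokes the $d-5$ version to conclude $\eps\le 7/(d-5)$, which your derivation had not established.

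With that one-line correction the proposal is complete, and in places it is more careful than the paper. Two points the paper leaves implicit are your monotonicity remark (usefulness against size $2^{n^\eps}$ implies usefulness against any smaller size $S$) and your check that $2/(d-6)\le 7/(d-5)$ for $d\ge 7$, so the side condition does not block the first claim.

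One looseness you share with the paper: \Cref{lemma:aw} gives fooling error $(\ell+1)\epsilon^*$, not $\epsilon^*=8/S_\Phi$. So ``fools with error $8/S_\Phi$'' is not literally what \Cref{cor:pre-unconditional} delivers. Running the construction against size $S_\Phi^2$ handles this in the relevant regime $d_\Phi<n^\eps/(\alpha\log n)$, and it changes $\log S_\Phi$ only by a factor of $2$, so no exponent is affected.
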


\begin{proof}
    We start by proving the more general case. Set $c_2 = \lceil2/\eps\rceil \leq 2/\eps+1$. Now $c_1 = d-c_2-4\geq d-2/\eps-5$. We know such a choice is possible because $\eps \geq 2/(d-6)$.
    
    If $S_\Phi \leq 2^{n^\alpha}$, then by \Cref{cor:pre-unconditional} the maximum lower bound provable is at most $2^{O(d_\Phi)}n^{\alpha d_\Phi} \cdot 2^{\tilde{O}(n^{3\alpha+2})^{1/c_1}+n^{2/c_2}}$. For $2^{n^\eps}$ to be at most this bound we need $\alpha d_\Phi \log n \geq n^\eps$, $2/c_2 \geq \eps$ or $(3\alpha+2)/c_1 \geq \eps$. We have ensured that $2/c_2 < \eps$. Simplifying the other two conditions we get that either $d_\Phi \geq n^\eps/(\alpha \log n)$ or $\eps \leq (3\alpha+4)/(d-5)$.

    For the last part a polynomial size property has $S_\Phi = 2^{O(n)}$ and so we can substitute $\alpha=1$.
\end{proof}

\section{A candidate tight natural-proofs barrier}\label{sec:candidate}

Here we will propose a concrete PRF, which is computable by depth-$d$ circuits of size $2^{\tilde O(n^{\frac{1}{d-1}})}$, and conjecture that it is secure against $\AC_0$ distinguishers. We also discuss the possibility that variants of it may be secure against all efficient distinguishers (not just $\AC_0$ distinguishers). If this is indeed the case, then it follows that the state of the art in constant-depth circuit lower-bounds cannot be improved by \textit{any} natural proof. 

The construction is very simple, it is based on composition of randomly-chosen functions:

\begin{definition}
    Let $\Bal_m$ denote the set of all \textit{balanced} Boolean functions $H:\ZO^m\to\ZO$, meaning functions such that $|H^{-1}(1)| = |H^{-1}(0)|$. Let $H_{m}$ denote a random function sampled uniformly from $\mathsf{Bal}_m$. Let $H_{n,m}$ denote a random Boolean function over $\ZO^n$ sampled according to the following procedure:
    \begin{enumerate}
        \item Sample some set $S \subseteq [n]$ of size $m$, uniformly at random.
        \item Sample a random Boolean function $H_m:\ZO^m\to\ZO$ uniformly from $\Bal_m$.
        \item Then set $H_{n,m}(x) = H_m(x|_S)$ for all $x$.
    \end{enumerate}
    Finally, let $F_{d,n,m}$ denote a random function $F_d:\ZO^n\to\ZO$, whose distribution we define inductively as follows. 
    \begin{itemize}
        \item For $d = 2$, we sample $F_{2,n,m}(x) = H_{n,m}(x)$. 
        
        \item Then for $d \ge 3$ we define $F_{d,n,m}$ by block composition: $$F_{d,n,m}(x)= H_m(F_{d-1,n,m}(x), \dots, F_{d-1,n,m}(x)),$$ i.e. we sample $H_m$ and feed into it $m$ independently-sampled copies of $F_{d-1,n,m}$.
    \end{itemize}

    We then use $F_{d,n}$ to denote a random function $F_{d,n,m}$ where $m = \tilde O(n^{\frac{1}{d-1}})$. If one wishes to be more concrete, one may use $m = (\log n)^2\cdot n^{\frac{1}{d-1}}$.
\end{definition}

Notice that $F_{d,n}$ is a Boolean function on $\tilde O(n)$ bits. By composing the DNF/CNF representations of each individual $H_m$ and $H_{n,m}$, we find that $F_{d,n}$ is computable by a depth-$d$ circuit of size $S = 2^{\tilde O(n^{\frac{1}{d-1}})}$. We conjecture it is secure:

\begin{conjecture}
    The distribution of $F_{d,n}$ is a function generator that fools all $\AC_0$ distinguishers. 
\end{conjecture}

It follows as a consequence of this conjecture that $\AC_0$ natural proofs, such as proofs based on the Switching Lemma, cannot show a $2^{\tilde\omega(n^{\frac{1}{d-1}})}$-size lower-bound against depth-$d$ circuits. Security against $\AC_0$-distinguishers is something which one might hope to prove unconditionally, and so we consider this conjecture to be an interesting open problem.

One may also wonder whether such superexponential lower bounds can be proven even using properties outside of $\AC_0$, say computable in polytime. The above class of distributions, at least in the case $d=3$, can actually de distinguished from random by polytime tests. However if one were to modify the definition of $F_{3,n}(x)$ to $H_{m}(G_1(x),\dots,G_m(x))$ with half of the $G_i$s being uniformly random functions of a random $m$ of the $n$ inputs, and the other half being parities of a random $m$ of the $n$ input bits, then we are unable to find polytime tests that distinguish it from random. We find it interesting to know to what extent we can show that polytime tests distinguish depth-3 size-$2^{O(\sqrt{n})}$ distributions from uniform. Not being able to distinguish them would be akin to saying that there are PRGs in this bounded depth circuit class (and hence a natural-proofs barrier), and so we consider pushing this boundary to also be a worthwhile task.

\section{Conclusion}

The main problem which we leave open is an exact answer to \Cref{qu:prf}. Concretely, prove or disprove whether there are PRFs computable by depth-3 circuits of size $2^{O(\sqrt{n})}$ fooling $\AC_0$ distinguishers. Furthermore, we believe it is worthwhile to consider other models of computation and try to construct unconditional efficient PRFs computable in size close to the best known lower bounds. In \Cref{sec:all-natural} we also looked at some constant depth lower bound proofs that we couldn't find natural variants for. We consider finding such variants to be another interesting open problem.

\paragraph*{Acknowledgements}
We thank Luís Pereira for pointing out a proof in the literature that we could not naturalize. We also thank anonymous reviewers whose comments helped us improve the presentation.

\bibliographystyle{plain}
\bibliography{bibliography}

\appendix

\newcommand{\AND}{\mathsf{AND}}
\newcommand{\OR}{\mathsf{OR}}
\newcommand{\NOT}{\mathsf{NOT}}

\section{Deferred Proofs}\label{sec:deferredproofs}

\subsection{From the preliminaries}

We stated this folklore proposition about PRFs.

\fooling*

\begin{proof}
    The first item is true since if an algorithm in $\mathcal{A}$ can distinguish a marginal of $\mu_1$ from uniform, it can be used as-is to distinguish $\mu_1$ from uniform.

    The second item is a classic use case of the hybrid argument. Note that for any algorithm $A \in \mathcal{A}$, we have that $A$ cannot distinguish $\mu_1 \times \mu_2$ from $\mu_1 \times \Unif_{N_2}$ any better than it can distinguish $\mu_2$ from $\Unif_{N_2}$:
    \begin{align*}
        \left| \E_{X \sim \mu_1} \E_{Y \sim \mu_2} [A(X,Y)] - \E_{X \sim \mu_1} \E_{Y \sim \Unif_{N_2}} [A(X,Y)] \right| &= \left| \E_{X \sim \mu_1} \left[\E_{Y \sim \mu_2} [A(X,Y)] - \E_{Y \sim \Unif_{N_2}} [A(X,Y)]\right] \right|\\
        &\leq \max_{X \in \mathsf{supp}(\mu_1)} \left| \E_{Y \sim \mu_2} [A(X,Y)] - \E_{Y \sim \Unif_{N_2}} [A(X,Y)] \right| < \epsilon_2.
    \end{align*}
    The last inequality follows since we can hardcode the $X$ in the algorithm $A$ and we will have an algorithm $A' \in \mathcal{A}$ tasked with distinguishing $\mu_2$ from $\Unif_{N_2}$. The same argument can be applied to show that $A$ cannot distinguish $\mu_1 \times \Unif_{N_2}$ from $\Unif_{N_1} \times \Unif_{N_2}$ any better than it can distinguish $\mu_1$ from $\Unif_{N_1}$. And hence by the triangle inequality, it is $(\epsilon_1+\epsilon_2)$-fooled by $\mu_1 \times \mu_2$.
\end{proof}

We also needed this observation about multiplicative approximate counting.

\approxcount*

\begin{proof}
    From the well-known fact proven by Ajtai and Ben-Or \cite{ajtai1984theorem} we know that for any $t \in [n]$ there is a function $\mathsf{GapCount}^+_{n,t,\eps} : \ZO^n \to \ZO$ satisfying
    \[
    \mathsf{GapCount}^+_{n,t,\eps}(x) = \begin{cases}
       0 & \text{if } |x| \le t\\
       1 & \text{if } |x| \ge t + \eps \cdot n
    \end{cases}
    \]
    that can, for any $d \ge 2$, be computed by a circuit of depth $d+1$ and size $O\left(n^2 \cdot 2^{\tilde{O}\left(\left(\frac{\log n}{\eps^2}\right)^{1/(d-1)}\right)}\right)$. We include a proof of this as \Cref{clm:additive-approx-count} for completeness.

    Using this we can show that for any threshold $t$ and approximation $\eps \in [0,1]$ we can build a slightly larger circuit computing $\mathsf{GapCount}^{\times}_{n,t,\eps}$. We start by padding the input to length $n'$ with 0s, so that $t/n' < \eps/10$. Now consider an OR gate $v$ over $\lceil\frac{n'}{t}\rceil$ coordinates from $[n']$, chosen at random with replacement.
    Then:
    \begin{align*}
        |x| \le t & \implies \Pr[v(x) = 0] = \left(1 - \frac{|x|}{n'}\right)^{\frac{n'}{t}} \ge \left(1 - \frac t {n'}\right)^{\frac{n'}{t}} \ge \frac{1}{e} - \frac{t}{en'}\\
        |x| \ge (1 + \eps) \cdot t & \implies \Pr[v(x) = 0] = \left(1 - \frac{|x|}{n'}\right)^{\frac{n'}{t}} \le \left(1 - (1+\eps) \cdot\frac t {n'}\right)^{\frac{n'}{t}} \le \frac{1}{e^{1+\eps}}
    \end{align*}
    Now suppose that we independently sample $m = O(n/\eps^2)$ such ORs. By Chernoff and a union bound, there is a setting of the randomness such that every $x$ with $|x| \le t$ will have at least $\left(e^{-1} - t/en' - \eps/10 \right)m$ such ORs set to $0$, and every $x$ with $|x| \ge (1 + \eps) \cdot t$ will have at most $\left(e^{-(1+\eps)} + \eps/10 \right)m$ many. We have $e^{-1} - e^{-(1+\eps)} \geq \frac{\eps}{e} - o(\eps)$, so the two cases can now be distinguished using $\mathsf{GapCount}^+_{m,(e^{-(1+\eps)}+\eps/10)m,\eps/10}$.

    Even with $\eps = 1/\log^r n$, there is a circuit of depth $2r+4$ and size $\poly(n)$ that computes $\mathsf{GapCount}^+_{m,(e^{-(1+\eps)}+\eps/10)m,\eps/10}$. By feeding this circuit the $m$ OR gates (thus increasing the depth by $1$ and the size by $2m$), we have a circuit computing $\mathsf{GapCount}^{\times}_{n,t,\eps}(x)$ as well.
\end{proof}

\begin{claim}\label{clm:additive-approx-count}
    There is a function $\mathsf{GapCount}^+_{n,t,\eps} : \ZO^n \to \ZO$ satisfying
    \[
    \mathsf{GapCount}^+_{n,t,\eps}(x) = \begin{cases}
       0 & \text{if } |x| \le t\\
       1 & \text{if } |x| \ge t + \eps \cdot n
    \end{cases}
    \]
    that can, for any $d \ge 2$, be computed by a circuit of depth $d+1$ and size $O\left(n^2 \cdot 2^{\tilde{O}\left(\left(\frac{\log n}{\eps^2}\right)^{1/(d-1)}\right)}\right)$.   
\end{claim}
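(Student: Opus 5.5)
We will prove \Cref{clm:additive-approx-count} by the classical Ajtai--Ben-Or route: sampling reduces additive-gap counting to majority on few bits, which is then computed by a brute-force constant-depth circuit.

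\textbf{Step 1 (sampling).} Choose $m = C\log n/\eps^2$ coordinates $i_1,\dots,i_m \in [n]$ independently and uniformly. For a fixed $x$, the fraction of sampled coordinates with $x_{i_j}=1$ concentrates around $|x|/n$. By Hoeffding, with probability $1-2^{-2n}$ the deviation is at most $\eps/3$, provided $C$ is chosen so that the failure probability is below $2^{-n}/(\text{number of samples})$. Here we will actually need $m = O(n/\eps^2)$ to union-bound over all $2^n$ inputs.

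To keep the required dependence $\log n/\eps^2$, we instead take $T=O(n)$ independent samples $s_1,\dots,s_T$, each of size $m=O(\log n/\eps^2)$. For each input $x$, each sample is good (deviation at most $\eps/3$) with probability at least $1-1/n^2$. A Chernoff bound plus a union bound over all $2^n$ inputs then gives a fixed choice of samples such that, for every $x$, at least a $9/10$ fraction of the $s_k$ are good.

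\textbf{Step 2 (local threshold).} For each sample $s_k$ we compute $g_k(x)=[\#\{j: x_{i_j}=1\}\ge (t/n+\eps/2)m]$. This is a threshold function on $m$ bits. If $|x|\le t$ and $s_k$ is good then $g_k=0$; if $|x|\ge t+\eps n$ and $s_k$ is good then $g_k=1$.

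Any symmetric function on $m$ bits has a depth-$d$ circuit of size $2^{\tilde O(m^{1/(d-1)})}$, by the standard divide-and-conquer construction: split the input into blocks, compute block counts in unary by brute force DNFs on $m^{1/(d-1)}$-sized pieces, and recurse. With $m=O(\log n/\eps^2)$ this gives size $2^{\tilde O((\log n/\eps^2)^{1/(d-1)})}$ and depth $d$ for each $g_k$.

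\textbf{Step 3 (combining).} We now need to distinguish the case where at least $9/10$ of the $g_k$ equal $1$ from the case where at most $1/10$ do. This is an approximate majority on $T=O(n)$ bits with a constant gap. Step 3 is the main obstacle, since a naive combination would cost extra depth.

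Approximate majority with constant gap is computable in depth $3$ and polynomial size (Ajtai--Ben-Or / Viola style), but that would cost extra depth. Instead we repeat the sampling trick once more. Take an AND of $O(1)$ random ORs of $O(\log n)$ random $g_k$'s each. By a union bound, this yields correct outputs on all $x$ while adding only depth $2$, and possibly merges with the top gates of the $g_k$ circuits.

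To hit depth exactly $d+1$, we expect to simplify by making the sample sets themselves handle the constant-gap amplification. Concretely, we would fold the outer layer into the symmetric-function construction: compute $g_k$ with its top gate an OR, then collapse that with the ORs from the combining step. The size stays $O(n^2)$ times the per-$g_k$ size, since there are $O(n)$ samples and each is reused polynomially often.

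If the exact depth accounting fails, the fallback is to accept depth $d+O(1)$. This suffices for the uses of the claim in \Cref{thm:approximate-counting}.
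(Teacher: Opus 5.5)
Your Steps 1 and 2 are fine and essentially match the paper: sample sets of size $O(\log n/\eps^2)$ with replacement, threshold each at $t/n+\eps/2$, and compute each threshold by a depth-$d$ circuit of size $2^{\tilde O(m^{1/(d-1)})}$. Step 3, however, has a genuine gap, and it is exactly where the depth bound $d+1$ is decided.

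\textbf{Why your combiner fails.} You propose an AND of $O(1)$ ORs, each over $\Theta(\log n)$ random $g_k$'s. Suppose exactly a $1/10$ fraction of the $g_k$ equal $1$. Each such OR is still satisfied with probability $1-(9/10)^{\Theta(\log n)}=1-n^{-\Theta(1)}$, so the AND outputs $1$ almost surely instead of $0$.

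No choice of parameters rescues a random depth-$2$ combiner of this kind. You fix the randomness by a union bound over all $2^n$ inputs, so each input may fail with probability at most $2^{-n}$. For an AND of $M$ random ORs of fan-in $\ell$, this forces $10^{-\ell}<2^{-n}$ (from inputs with exactly $9/10$ ones) and $(1-(9/10)^{\ell})^{M}<2^{-n}$ (from inputs with exactly $1/10$ ones). Hence $\ell=\Omega(n)$ and $M=2^{\Omega(n)}$, which is not within the claimed size bound, e.g.\ for constant $\eps$.

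More generally, a constant gap cannot be resolved cheaply in depth $2$. A DNF that rejects every string with at most $T/10$ ones needs more than $T/10$ positive literals in each term. Such a term covers less than a $(9/10)^{\lfloor T/10\rfloor}$ fraction of the strings with exactly $\lfloor T/10\rfloor$ zeros, so $2^{\Omega(T)}$ terms are needed, and dually for CNFs. Citing a depth-$3$ approximate-majority circuit would give depth $d+2$. Your fallback of depth $d+O(1)$ is simply not the statement being proved.

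\textbf{The missing idea.} Do the amplification in the sampling step, so that the gap becomes large enough for a trivial deterministic combiner.
\begin{enumerate}
\item Take $n^2$ sets $S_{i,j}$, $i,j\in[n]$, each of $(15\ln n)/\eps^2$ indices. For a fixed $x$, each set is bad with probability below $n^{-3}$.
\item At least $n$ of them are bad with probability at most $\binom{n^2}{n}n^{-3n}\le n^{-n}<2^{-n}$. A union bound over $x$ then fixes sets for which, on every input, fewer than $n$ of the threshold bits $b_{i,j}$ are wrong.
\item Now $\bigvee_{i\in[n]}\bigwedge_{j\in[n]} b_{i,j}$ is correct by pigeonhole. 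Fewer than $n$ zeros leaves some row all ones, and fewer than $n$ ones leaves no row all ones.
\item Implement each threshold with an AND gate on top, so the middle AND layer merges into it.
\end{enumerate}
This gives depth exactly $d+1$ and size $O\bigl(n^2\cdot 2^{\tilde O((\log n/\eps^2)^{1/(d-1)})}\bigr)$.
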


The proof provided below follows a manuscript by Oded Goldreich~\cite{odedapproxmaj}. Since the statement above is a bit more general we are reproducing the (essentially same) proof.

\begin{proof}
    For a string $s \in \{0,1\}^m$ let $\mathsf{wt}(s)$ denote the fractional Hamming weight $\frac{|\{i \in [m] \mid s_i=1\}|}{m}$.
    
    The circuit is conceptually simple and will just be the $\OR_n$ of $\AND_n$s of threshold functions on $O((\log n)/\eps^2)$ bits (or alternatively an $\AND_n$ of $\OR_n$s of such functions). Since a threshold function on $m$ bits can be computed in depth $d$ and size $2^{\tilde{O}\left(m^{1/(d-1)}\right)}$ (\cite{hastad-thesis}, or see footnote\footnote{Consider the function $f_k : (\ZO^{\lceil \log m \rceil})^{k} \to \ZO^{\lceil \log m \rceil}$, where the inputs are interpreted as $a_1,\dots,a_k \in [m]$, that computes $\sum_{i \in [k]} a_i$ (with the promise that the sum is always at most $m$). Since there are only $k\lceil\log m\rceil$ input bits, each output bit of the function can be written as an $\OR$ of $\AND$s of literals (or an $\AND$ of $\OR$s of literals) of size $2^{k \lceil \log m \rceil}$. By composing together $d-1$ layers of $f_k$ with $k = m^{1/(d-1)}$ we get a function that computes the exact count of $1$s present in $m$ input bits. Replace the top $f_k$ with a similarly sized function that outputs $1$ iff $\sum a_i > c$ to make the composition a threshold function. Choosing circuits for each instance of $f_k$ to make $\OR$s and $\AND$s collapse, we obtain a circuit of depth $d$ and size $O(m \cdot \log m \cdot 2^{k \lceil \log m \rceil})$.}), and by collapsing together the second and third layers, this is a depth $d+1$ circuit of the requisite size. We will now see why such a circuit can exhibit the requisite behaviour.

    There is a choice of subsets $S_1,\dots,S_{n^2} \subseteq [n]$, each of size $O((\log n)/\eps^2)$, such that for every $x \in \ZO^n$ more than $n^2-n$ of the sets $S$ are ``good'' in that they satisfy $|\mathsf{wt}(x)-\mathsf{wt}(x_{S})| \leq \eps/3$. To see this note that if we fix any $x$ and choose each $S_i$ by choosing $(15\ln n)/\epsilon^2$ indices at random with replacement, by a Chernoff bound each $S_i$ has a $< n^{-3}$ probability of being a bad set. The probability that at least $n$ are bad is then at most $\binom{n^2}{n} (n^{-3})^n \leq n^{-n}$. By a union bound over all $x$ there is a setting of $S_1,\dots,S_{n^2}$ satisfying the property.

    Hence by computing whether $\mathsf{wt}(x_S) \geq t/n + \eps/2$ for each set, we get $n^2$ bits $b_1,\dots,b_{n^2}$ such that the number of $1$s is larger than $n^2-n$ if $|x| \ge t + \eps \cdot n$, and smaller than $n$ if $|x| \leq t$. This is easy to differentiate as follows. Label the bits as $\{b_{i,j}\}_{i,j \in [n]}$. The circuit $\vee_{i \in [n]} \wedge_{j \in [n]} b_{i,j}$ distinguishes the two cases: If there are less than $n$ $0$s it must be the case that some $\AND$ outputs $1$ so the circuit must output $1$, and if there are less than $n$ $1$s every $\AND$ must output $0$ so the circuit must output $0$. (A similar argument works using $\wedge_{i \in [n]} \vee_{j \in [n]} b_{i,j}$ instead.) This completes the proof.
\end{proof}

\subsection{On evaluating polynomials in constant depth}

\altevalPoly*
\begin{proof}
Let $n \in \mathbb{N}$. We work in the field $\mathbb{F} = \mathbb{F}_{2^n}$. This is the same as the field $\mathbb{F}_2[x] / p(x)$ for an irreducible $p(x)$ of degree $n$. The elements of $\mathbb{F}$ are thus identified with $\ZO^n$ as just the string of coefficients.

Given as input $\alpha^{(0)} ,\dots ,\alpha^{(k - 1)} , y \in \mathbb{F}$, we want to compute $\sum_{i = 0}^{k - 1} \alpha^{(i)} y^i$.

\begin{enumerate}
    \item Healy et al.~\cite{HealyViola} shows that computing $y^{2^i}$ is quite easy: $\binom{2^i}{j}$ is even for every $j \neq 0 \text{ or } 2^i$, so if $y = \sum y_h x^h$, then $y^{2^i} = \sum_h y_h x^{h 2^i}$. We can store, for each $h \in [n]$ and $i \in [\lfloor \log k\rfloor]$, the elements $x^{h 2^i}$. Hence $y^{2^i}$ is the $n$-bit string whose $r$th bit is $\oplus_{\{h : (x^{h 2^i})_r = 1\}} y_h$. Hence the circuit computing $y \mapsto (y^1 , y^2 , y^4 ,\ldots , y^{2^{\lfloor \log k\rfloor}})$ is implemented in parallel by $n \lfloor \log k\rfloor$ parities of arity at most $n$ each.
    \item Now we want to compute an element of the form $\alpha^{(i)} y^i$. This is the multiplication of $t \leq 1 +\lfloor \log k\rfloor$ terms: $\alpha^{(i)}$ and the appropriate $y^{2^j}$s. Following~\cite{HealyViola} we compute this with a circuit for iterated multiplication of $t$ elements $\beta^{(1)} ,\ldots ,\beta^{(t)} \in \mathbb{F}_2[x]$, only afterwards reducing it to $\mathbb{F}$. The iterated multiplication is done via the Discrete Fourier Transform. Let $\widehat{\mathbb{F}} = \mathbb{F}_{2^m}$ with $m = \lceil \log(n t)\rceil$, and let $S \subset \widehat{\mathbb{F}}$ be a set of non-zero elements of size $(n - 1)t + 1$. It is well known that the polynomial evaluation map for degree $(n - 1)t$ polynomials, $M : (p_h)_{h \in [0 ,\ldots ,(n - 1)t]} \mapsto (p(g))_{g \in S}$ is an invertible $\widehat{\mathbb{F}}$-linear map. We use this to get $(\beta^{(i)}(g))_{g \in S}$ for each $i$. From these we define $\beta^{\times}(g) = \prod_i \beta^{(i)}(g) \in \widehat{\mathbb{F}}$. These are the evaluations of the degree $(n - 1)t$ polynomial $\beta^{\times} = \prod_i \beta^{(i)}$. We then use $M^{- 1}$ to recover the $(n - 1) t + 1$ coefficients of $\beta^{\times}$. Finally we need to reduce this to a polynomial in $\mathbb{F}$. We explain how to do each step below.

    \begin{enumerate}
        \item Since $M$ is $\widehat{\mathbb{F}}$-linear and the coefficients of each $\beta^{(i)}$ are in $\mathbb{F}_2$, $\beta^{(i)}(g)$ is the $m$-bit string whose $r$th bit is $\oplus_{h : (M_{g , h})_r = 1} \beta_h^{(i)}$. So from $(\beta^{(i)})_{i \in [t]}$ we compute $(\beta^{(i)}(g))_{i \in [t], g \in S}$ by parallelly using at most $n t^2 m$ parities of arity at most $n$. 
        \item Each bit of $\beta^{\times}(g)$ is a function of $(\beta^{(i)}(g))_{i \in [t]}$, so each bit is a function of at most $m t$ bits that have already been computed. To simplify the next step we will instead compute terms of the form $(M^{- 1})_{h , g} \cdot \beta^{\times}(g)$, which also depend on the same $m t$ bits. Hence each bit can be computed by both a CNF and a DNF of size $2^{m t}$. There are $m n^2 t^2$ bits of output in $((M^{- 1})_{h , g} \cdot \beta^{\times}(g))_{h \in [0 ,\ldots ,(n - 1)t], g \in S}$.
        \item We can now recover the coefficients of $\beta^{\times}$ from $(\beta^{\times}(g))_{g \in S}$ as $\beta_h^{\times} = \sum_g (M^{- 1})_{h , g} \cdot \beta^{\times}(g)$. Since each $\beta^{(i)}$ had coefficients in $\mathbb{F}_2$, so will $\beta^{\times}$ and so we only care about one bit of each coefficient. The relevant bit for the $h$th coefficient is simply $\oplus_g (M_{h , g}^{- 1} \cdot \beta^{\times}(g))_0$, so it is a parity of arity at most $n t$, and there are at most $n t$ such coefficients to compute.
        \item Finally we need to simplify the computed degree-$(n - 1)t$ polynomial $\beta^{\times} \in \mathbb{F}_2[x]$ to its modulus in $\mathbb{F}$. For this we simply pre-store the representations of $x^r \in \mathbb{F}$ for each $r \in [n ,\ldots ,(n - 1)t]$. Our final polynomial's $h$th coefficient is then $\beta_h^{\times} + \sum_{r \geq n} (x^r)_h$. This is a parity of arity at most $n t$, and there are at most $n$ of them.
    \end{enumerate}

    \item Using $k$ copies of the circuit above we construct each $\alpha^{(i)} y^i \in \mathbb{F}$. We now sum up over all $k$ terms to get $\sum_{i = 0}^{k - 1} \alpha^{(i)} y^i$. This is achieved by $n$ parallel parities of arity $k$, and that is our output.
\end{enumerate}

Looking at the above construction we get a circuit with the following layers with $t = 1 +\lfloor \log k\rfloor$ and $m =\lceil \log(n t)\rceil$: 

\begin{itemize}
    \item $n$ $\oplus_k$ gates,
    \item $k n$ $\oplus_{n t}$ gates,
    \item $k n t$ $\oplus_{n t}$ gates,
    \item $k n^2 t^2 m$ CNFs of size $2^{m t}$,
    \item $k n t^2 m$ $\oplus_n$ gates,
    \item $k n \lfloor \log k\rfloor$ $\oplus_n$ gates.
\end{itemize}
This simplifies to a top layer of $n$ $\oplus_{k n^2 t^2}$ gates, a middle layer of $k n^2 t^2 m$ CNFs of size $2^{m t}$, and a bottom layer of $k n t$ $\oplus_{n^2}$ gates.

Using unbounded arity AND/OR gates we can implement this in constant depth as follows. For any $c \in \mathbb{N}$ the parity of $s$ bits can be computed in depth $c+1$ and size $s2^{s^{1/c}}$. Such circuits exist with an $\AND$ gate at the top and they also exist with an $\OR$ gate at the top. We implement the top layer of parities in depth $c_1+ 1$ and size $\poly(k,n) 2^{(k n^2 t^2)^{1/c_1}}$ with $\AND$ gates at the top. For each input literal to the parity circuit we can attach a CNF/DNF of the middle layer appropriately so that the top gate of the CNF/DNF merges with the gate that the literal fed in to. The circuit now has depth $c_1+2$ and size $2^{m t} \cdot n 2^{(k n^2 t^2)^{1/c_1}}$. Finally for every input literal of this circuit we attach a depth $c_2+1$ size $n^2 2^{n^{2/c_2}}$ parity circuit computing said input, choosing the parity circuit appropriately so that the top gate of the parity merges with the gate that the literal fed in to. Since $2^{m t} \leq 2^{\tilde{O}(nk)}$, for any fixed constants $c_1,c_2 \in \mathbb{N}$ the final circuit is of depth $c_1 + c_2 + 2$ and size at most \[
\poly(k,n) 2^{m t} \cdot \left(n \cdot 2^{(kn^2t^2)^{1/c_1}} + k n t \cdot 2^{n^{2/c_2}}\right) \leq 2^{\widetilde{O}(kn^2)^{1/c_1} + n^{2/c_2}}.
\]
\end{proof}

\newcommand{\AND}{\mathsf{AND}}
\newcommand{\OR}{\mathsf{OR}}
\newcommand{\NOT}{\mathsf{NOT}}

\section{Properties used in some Top-Down Lower Bounds}\label{sec:topdownproperties}

The depth 3 lower bounds of Håstad, Jukna and Pudlák follow by reasoning that any small depth 3 circuit computing a certain function $f$ must contain a width-$k$ term/clause that:
\begin{itemize}
    \item rejects every input in a set $T \subseteq \ZO^n$, and
    \item accepts an input $y$ such that for every $Q \subseteq [n]$ of size at most $k$, there is an $x \in T$ such that $x_Q = y_Q$.
\end{itemize}

Such a $y$ is called a \textbf{$k$-limit} of the set $T$. Note that no width-$k$ term/clause can reject every input in $T$ while accepting a $k$-limit of $T$. Hence the small depth 3 circuit could not have been computing $f$. In this section we go over the properties of $f$ that Håstad et al. use in order to carry out the above proof (without going over how it is carried out). These properties are not natural. We will then see a natural property with which the above proof can also be carried out, albeit yielding slightly weaker lower bounds.

\newcommand{\HJP}{\mathrm{HJP}}

\subsection*{Håstad et al., depth 3 with bounded bottom fan-in}

Håstad et al. use the following Majority-like property to show a lower bound of $\ell$ against $\Pi_3^k$ circuits (depth 3 $\wedge\vee\wedge$ circuits with the bottom gates having fan-in $k$). $\Phi_{\HJP,k,\ell}(f)=1$ if $\exists s \in [n], S \subseteq \binom{n}{s}$ such that:
\begin{itemize}
    \item $f(x)=1$ for all $x \in S$,
    \item $f(x)=0$ for all $x<y \in S$, and
    \item $|S| \geq \ell \cdot k^s$.
\end{itemize}

As an example, this property shows that the Threshold$_{n,n/k}$ function requires $e^{(n/k) \cdot (1-o(1))}$ size $\Pi_3^k$ circuits. A lower bound of $e^{(n/2k) \cdot (1-o(1))}$ also follows for Majority since it embeds the above functions on $n/2$ bits.

(\cite{GurumukhaniPPST24} make a very similar statement in the special case when $k=3$ and with $s$ fixed to $n/2$: if $f(x)=1$ for all $x \in S$ and $f(x)=0$ \emph{for all $x$ with $|x|<n/2$}, then $|S| \geq \ell \cdot 2.554^{n/2}$ suffices to get the lower bound of $\ell$. While the resulting property looks similar to that used by Håstad et al., their proof is significantly different and does not even use $k$-limits.)

\subsection*{Håstad et al., depth 3}

To prove their depth $3$ lower bound, Håstad et al. uses restrictions. Every circuit has a restriction of some variables to $1$s that kills all bottom gates with many negated inputs. (Bounding the ``negated bottom fan-in'' is sufficient for the previous lower bound.) Starting with a circuit of size $2^{\sqrt n}$, they can set around $n/2 - \sqrt{n}$ bits to $1$ and make the ``negated fan-in'' smaller than $k \approx\sqrt n$. The restricted function is a threshold function on $\approx n/2$ bits checking if the Hamming weight is at least $\approx \sqrt n$. We know $\Pi_3^{k}$ circuits for this require size $2^{\Omega(\sqrt n)}$. Using this method they get a specific lower bound of $\approx 2^{0.849\sqrt{n}}$ for the size of $\Pi_3$ circuits computing Majority. Since Majority is self-dual, it is also a lower bound on the size of $\Sigma_3$ circuits.

So the property that they use for the $\Pi_3$ lower bound is something like the following. For all restrictions $\rho$ setting all but $m := n/2 + \sqrt n$ of the $n$ input bits to $1$, the restricted function $f|_{\rho}:\ZO^m \to \ZO$ has the property $\Phi_{\HJP,\sqrt n,\mathrm{exp}(\sqrt n)}$.

None of the above properties are large, although they give rise to constructible properties by using approximate counting. Håstad et al. also use a similar property adapted to work for Parity, but we will skip that due to its similarity.

\subsection*{Natural property, depth 3 with bounded bottom fan-in}

The natural property (cite relevant places) is quite simple. For instance we can consider $\Phi(f) = 1$ if and only if $\exists S \subseteq \ZO^n$ such that:
\begin{itemize}
    \item $f(x)=0$ for all $x \in S$,
    \item the sensitivity of $f$ at $x$ is at least $(1-100\log n/\sqrt{n}) \cdot n/2$ for all $x \in S$, and
    \item $|S| \geq 2^n/n^2$.
\end{itemize}

We will now see that this property gives rise to a constructible property that is large and is useful: any function with the property requires $2^{(1-o(1)) \cdot n/2(k+1)}$-size $\Pi_3^k$ circuits. To create the constructible property we can compute the sensitivity of each of the $2^n$ points and use approximate counting on the $2^n$ inputs to reject when the number of $1$-inputs with high sensitivity is smaller than $2^n/n^2$ and accept when it is larger than $2 \cdot 2^n/n^2$. The largeness of this constructible property is easy to verify since there is a distance-3 code of size $2^n/2(n+1)$, and for a random function each input that is a codeword has independent function values and sensitivities.

\begin{claim}
    Let $f : \ZO^n \to \ZO$ be such that $\Phi(f)=1$. Any $\Pi_3^k$ circuit computing $f$ must have size at least $2^{(1-o(1)) \cdot n/2(k+1)}$.
\end{claim}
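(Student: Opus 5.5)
The plan is a top-down pigeonhole argument followed by a quantitative "limit-counting" lemma in the spirit of Meir and Wigderson. Write the $\Pi_3^k$ circuit as $C=\bigwedge_{j\in[\ell]} D_j$, where each $D_j$ is a $k$-DNF, i.e.\ an OR of terms of width at most $k$. Let $S$ be the set witnessing $\Phi(f)=1$. Every $x\in S$ has $f(x)=0$, so some $D_j$ rejects it. By pigeonhole there is a $j$ and a set $T\subseteq S$ with $D:=D_j$ rejecting all of $T$ and
\[
|T|\ge \frac{|S|}{\ell}\ge \frac{2^n}{n^2\ell}.
\]
On the other hand, $D$ accepts every $y\in f^{-1}(1)$. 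Since each $x\in T\subseteq S$ has sensitivity at least $(1-100\log n/\sqrt n)\,n/2$, the number of edges $(x,i)$ with $x\in T$ and $f(x\oplus e_i)=1$ is at least $|T|\,(1-o(1))\,n/2$. Each such endpoint $y=x\oplus e_i$ is accepted by $D$.

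The key observation is that no $y$ accepted by $D$ can be a $k$-limit of $T$. Indeed, $y$ satisfies some term of width at most $k$, every $x\in T$ falsifies that term, and a $k$-limit agrees with some element of $T$ on the term's variables. So, for every such edge, there is a set $Q$ with $|Q|\le k$ such that no $z\in T$ has $z_Q=y_Q$. Moreover $i\in Q$, since otherwise $x$ itself would match $y$ on $Q$. Setting $Q'=Q\setminus\{i\}$, we have $|Q'|\le k-1$, and $x_{Q'}=y_{Q'}$ forces $z_i=x_i$ for all $z\in T$ with $z_{Q'}=x_{Q'}$. In words, within $T$ the coordinate $x_i$ is determined by the at most $k-1$ coordinates $x_{Q'}$.

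The main step, and the one I expect to be the obstacle, is turning this "many determined coordinates" statement into an upper bound on $|T|$. I would use an entropy argument. Let $X$ be uniform on $T$ and pick a uniformly random ordering $\pi$ of $[n]$. By the chain rule,
\[
\log|T| = H(X)=\E_\pi \sum_i H\bigl(X_i \mid X_{\pi\text{-earlier}}\bigr)\le n-\E_\pi\,\#\{i : X_i \text{ is determined by the earlier coordinates}\},
\]
where the last quantity is understood as an average over $x\in T$. For a fixed good edge $(x,i)$, with probability at least $1/k$ all of $Q'$ precedes $i$ in $\pi$, and then $x_i$ is determined by the earlier coordinates; this makes the conditional entropy of that coordinate zero on that part of the sample space. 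One has to handle carefully that the conditional entropy is an average over prefixes. I would write $H(X_i\mid X_{<i})\le \Pr_x[x_i \text{ not determined by } x_{<i}]$, using that $H\le 1$ for a bit and $H=0$ when the value is determined. This gives
\[
\log|T|\le n-\frac{1}{k}\cdot\frac{1}{|T|}\#\{\text{good edges}\}\le n-(1-o(1))\frac{n}{2k}.
\]
This is even slightly stronger than needed; if the bookkeeping loses a bit, the $2(k+1)$ in the claim gives slack.

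Finally, combining the two bounds gives
\[
\frac{2^n}{n^2\ell}\le |T|\le 2^{\,n-(1-o(1))n/(2(k+1))},
\]
hence $\ell\ge 2^{(1-o(1))n/(2(k+1))}/n^2=2^{(1-o(1))n/(2(k+1))}$, because the $n^2$ factor is absorbed into the $o(1)$ as long as $k=o(n/\log n)$. For larger $k$ the claimed bound is trivial anyway. The only care needed elsewhere is checking that the $1-100\log n/\sqrt n$ loss in sensitivity is also absorbed into the $o(1)$ term.
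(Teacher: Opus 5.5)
Your proof is correct, but it takes a different route from the paper's.

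\textbf{The paper's route.} It starts with the same pigeonhole step: a $\Sigma_2^k$ subcircuit $C'$ rejects some $S'\subseteq S$ with $|S'|\ge |S|/2^r$. It then uses the Smal--Talebanfard version of the Meir--Wigderson lemma as a black box. Since $|S'|\ge 2^{n-2\log n-r}$, a random neighbour $x\oplus e_i$ of a random $x\in S'$ is a $k$-limit of $S'$ with probability at least $1-(2\log n+r)(k+1)/n$. With $r=(1-1/\log n)\,n/2(k+1)$ this exceeds $1/2$, so the high sensitivity yields many $k$-limits of $S'$ inside $f^{-1}(1)$. A second pigeonhole over the terms of $C'$ then gives a width-$k$ term that accepts a $k$-limit of $S'$ while rejecting all of $S'$, which is a contradiction.

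\textbf{Your route.} You use the limit observation only locally: each sensitive coordinate of each $x\in T$ is forced, within $T$, by at most $k-1$ other coordinates. You then bound $|T|$ directly by the random-ordering chain-rule argument. In effect you are re-proving the PPZ Satisfiability Coding Lemma from \Cref{sec:PPZ} for the $k$-CNF $\neg D$. Its satisfying set contains $T$, and every point of $T$ is $(1-o(1))n/2$-isolated in it.

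\textbf{What your route buys.} The argument is self-contained and needs no second pigeonhole. It gives the slightly better exponent $(1-o(1))\,n/2k$. It also works whenever $k=o(n/\log n)$, whereas the paper's parameter choice needs roughly $k\lesssim n/(4\log^2 n)$ for the limit probability to exceed $1/2$. Your remark that the claim is trivial for larger $k$ is a little quick if size counts gates, but the paper's argument covers an even smaller range of $k$, so nothing is lost relative to it.

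\textbf{What the paper's route buys.} It keeps the H{\aa}stad--Jukna--Pudl{\'a}k $k$-limit framing, which is the point of that appendix. It also carries over almost verbatim to the next claim about unbounded-width $\Pi_3$ circuits, where one only needs that the final term accepts many points and hence has small width. Your approach would need one extra, easy step there: discard the edges whose endpoints are accepted only by wide terms, which is cheap because such terms accept few points.
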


\begin{proof}
    Let $C$ be a size-$2^r$ $\Pi_3^k$ circuit computing $f$ with $r = (1-1/\log n) \cdot n/2(k+1)$ (the $o(1)$ term can be optimized, we use this as a simple example). We will show that this eventually leads to a contradiction. We know $C$ must reject all of $S$, and it is an $\AND$ of at most $2^r$ $\Sigma_2^k$ circuits. So there must be a $\Sigma_2^k$ circuit, let's call it $C'$, that rejects a set $S' \subseteq S$ of size at least $|S|/2^r$ while accepting all of $f^{-1}(1)$.
    
    At this point we invoke a lemma from Smal and Talebanfard~\cite{SmalT2018prediction}. Solely based on the fact that $|S'| \geq 2^{n-2\log n-r}$ the lemma states that $\Pr_{x \sim S', i \sim [n]}[x \oplus e_i\text{ is a }k\text{-limit of }S'] \geq 1-(2\log n + r)(k+1)/n$. Our choice of $r$ was to ensure that this probability is a bit larger than $1/2$, so that the following set $A$ is sizeable.
    
    Let $A = \{(x,i) \in S' \times [n] \mid f(x \oplus e_i) = 1 \wedge x \oplus e_i\text{ is a }k\text{-limit of }S'\}$. $A$ has size at least $|S'| \cdot (1-100\log n/\sqrt{n}) \cdot n/2 - |S'| \cdot (1-1/\log n) \cdot n/2 \geq |S'|$.
    
    Let $T = \{y \in \ZO^n \mid \exists (x,i) \in A \text{ s.t. } y = x \oplus e_i\}$. Since each $y \in T$ can come from at most $n$ elements of $A$, $|T| \geq |S'|/n$. Since $T \subseteq f^{-1}(1)$, we know $C'$ accepts all of $T$. Since $C'$ is the $\OR$ of at most $2^r$ terms of width $k$ there must be a term $C''$ of width $k$ that accepts a set $T' \subseteq T$ of size at least $|T|/2^r$ (and hence non-empty) while rejecting all of $S'$.
    
    Since every $y \in T'$ is a $k$-limit of $S'$, and $y$ must be accepted while all of $S'$ is rejected, the proof is concluded by contradiction.
\end{proof}

\subsection*{Natural property, depth 3}

Surprisingly the same property $\Phi$ is also useful against general depth 3 circuits: any function with the property also requires $2^{(1-o(1)) \cdot \sqrt{n}/2}$-size $\Pi_3$ circuits. Remarkably the proof of usefulness is almost the same as in the previous case. There is no use of restrictions that bounds the bottom fan-in, instead a large set of accepted inputs manages the same.

\begin{claim}
    Let $f : \ZO^n \to \ZO$ be such that $\Phi(f)=1$. Any $\Pi_3$ circuit computing $f$ must have size at least $2^{(1-o(1))\cdot\sqrt{n}/2}$.
\end{claim}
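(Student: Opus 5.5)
We follow the proof of the preceding claim for $\Pi_3^k$ circuits almost verbatim. The only new ingredient is a step that lets us treat the bottom layer as if it had bounded fan-in, with no restriction. Let $C$ be a $\Pi_3$ circuit of size $2^r$ computing $f$, where $r = (1-1/\log n)\sqrt{n}/2$, and set $k := \sqrt{n}$. As before, $C$ is an $\AND$ of at most $2^r$ $\Sigma_2$ subcircuits and rejects all of $S$. So some $\Sigma_2$ subcircuit $C'$ rejects a set $S'\subseteq S$ with $|S'|\ge |S|/2^r \ge 2^{n-2\log n - r}$ and accepts all of $f^{-1}(1)$.

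\textbf{Step 1.} We apply the Smal--Talebanfard lemma to $S'$ with $k=\sqrt n$. It gives
\[
\Pr_{x\sim S',\,i\sim[n]}\bigl[x\oplus e_i \text{ is a } k\text{-limit of } S'\bigr] \ge 1-\frac{(2\log n + r)(k+1)}{n}.
\]
With $r\approx \sqrt n/2$ and $k=\sqrt n$, this probability is about $1/2 + \Theta(1/\log n)$. Combined with the sensitivity bound $(1-100\log n/\sqrt n)\, n/2$ at every point of $S$, the same counting as before yields a set $A$ of pairs of size $\Omega(|S'|\, n/\log n)$. From $A$ we obtain $T\subseteq f^{-1}(1)$, consisting of $k$-limits of $S'$, with $|T|\ge |S'|/\log n$ (up to polynomial factors). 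The constants $100\log n/\sqrt n$ and $1/\log n$ are chosen so that this arithmetic works; we will tune them if needed.

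\textbf{Step 2 (the new step, and the main obstacle).} $C'$ is an $\OR$ of at most $2^r$ terms, but these terms may now have arbitrary width. Some term $C''$ accepts a set $T'\subseteq T$ of size at least $|T|/2^r \ge 2^{n-O(\log n)-2r}$ and rejects all of $S'$. The key observation is that a term accepting that many inputs must be narrow. A term of width $w$ accepts exactly $2^{n-w}$ inputs, so $w \le 2r + O(\log n) \approx \sqrt n\,(1+o(1))$. Here the ``large set of accepted inputs'' replaces the restriction. To make this consistent with Step 1, we set $k$ to be this width bound, $k = 2r+O(\log n)$. We then recheck the Step 1 inequality: $(2\log n + r)(k+1)/n \approx r\cdot 2r/n = 2r^2/n = 1/2\cdot(1-1/\log n)^2 + o(1/\log n)$. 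This is still strictly below $1/2$ by a $\Theta(1/\log n)$ margin, which is enough for $|A|\gtrsim |S'|$.

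\textbf{Step 3.} $C''$ is an $\AND$ of at most $k$ literals that rejects every point of $S'$ and accepts some $y\in T'$, which is a $k$-limit of $S'$. Restricting to the at most $k$ variables read by $C''$, $y$ agrees there with some $x\in S'$, so $C''(x)=C''(y)=1$. This is a contradiction.

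The delicate point is the circularity: $k$ depends on $r$ through the width bound, while Step 1 needs $rk/n<1/2$. With $k\approx 2r$ this forces $r<\sqrt n/2$, which is exactly where the $2^{(1-o(1))\sqrt n/2}$ bound comes from. We would therefore carry the $o(1)$ slack carefully through the $\log n$ and $1/\log n$ losses.
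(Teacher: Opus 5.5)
Your proof is correct and is essentially the paper's argument: you use the same choice $r=(1-1/\log n)\sqrt n/2$, the same Smal--Talebanfard counting to get $|A|\ge |S'|$, and the same observation that a term accepting $2^{n-2r-3\log n}$ inputs has width at most $2r+3\log n$. The one difference is that re-choosing $k$ in Step 2 is unnecessary, because $2r+3\log n=\sqrt n-\sqrt n/\log n+3\log n\le\sqrt n$ for large $n$; the paper therefore keeps $k=\sqrt n$ throughout, and there is no circularity to manage.
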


\begin{proof}
    We follow the previous proof with $r = (1-1/\log n)\sqrt{n}/2$ and using $k=\sqrt{n}$ in the definition of $A$. We still get $|A| \geq |S'|$ and we reach a term $C''$ (of unbounded width) that accepts a set $T'$ of size at least $|T|/2^r \geq |S|/2^{2r+\log n} \geq 2^{n-2r-3\log n}$. However, for a term to accept at least these many inputs its width must be at most $2r + 3\log n$. Letting $Q$ be the indices read by $C''$ (at most $2r + 3\log n \leq \sqrt{n}$ indices) the proof then continues as before since every $y \in T'$ is a $k$-limit of $S'$.
\end{proof}

We can modify the property to also make the analogous claim for $1$-inputs so that the lower bound is against any depth 3 circuit, and the property still remains natural.

\paragraph{A holdout:} While these vaguely similar proofs using natural properties manage to get slightly weaker lower bounds that hold for both Majority and Parity, we should note that this is not always the case. Håstad et al. also study a function $\AND_{\sqrt{n}} \circ \OR_{\sqrt{n}} \circ \AND_2$. The lower bound they get here is much weaker than the ones they got for Majority and Parity, but it shows an exponential separation between the $\Pi_3^k$ and $\Sigma_3^k$ complexities. We have been unable to find a natural property that shows a similar separation for this function.

\end{document}